\documentclass[11pt]{article}

\usepackage[margin=1in]{geometry}
\usepackage{amsmath,amssymb,amsthm,mathtools,mathrsfs}
\usepackage{booktabs,tabularx,array}
\usepackage{enumitem}
\usepackage{microtype}
\usepackage{xcolor}
\usepackage{xurl}
\usepackage{graphicx}
\usepackage{tikz}
\usetikzlibrary{arrows.meta,calc,positioning,angles,quotes,decorations.pathreplacing,patterns}
\usepackage{pgfplots}
\pgfplotsset{compat=1.18}
\usepackage{aliascnt}
\usepackage{hyperref}
\usepackage[nameinlink,noabbrev]{cleveref}

\hypersetup{
  colorlinks=true,
  linkcolor=blue!45!black,
  citecolor=blue!45!black,
  urlcolor=blue!45!black,
  pdftitle={The Stretch Factor of Planar Delaunay Triangulations Is Less Than 1.65},
  pdfauthor={Guanlin Mo, Kangke Cheng, Hu Ding}
}
\setlist{leftmargin=2em,itemsep=0.25em,topsep=0.4em}
\allowdisplaybreaks
\newtheorem{theorem}{Theorem}[section]
\newcommand{\newaliastheorem}[3]{%
  \newaliascnt{#1}{theorem}%
  \newtheorem{#1}[#1]{#2}%
  \aliascntresetthe{#1}%
  \crefname{#1}{#2}{#3}%
  \Crefname{#1}{#2}{#3}%
}
\newaliastheorem{lemma}{Lemma}{Lemmas}
\newaliastheorem{proposition}{Proposition}{Propositions}
\newaliastheorem{corollary}{Corollary}{Corollaries}
\theoremstyle{definition}
\newaliastheorem{definition}{Definition}{Definitions}
\newaliastheorem{remark}{Remark}{Remarks}

\newcommand{\cO}{\mathcal O}
\newcommand{\R}{\mathbb R}
\newcommand{\dd}{\,\mathrm d}
\newcommand{\norm}[1]{\left\lVert #1\right\rVert}
\newcommand{\st}{\operatorname{st}}
\newcommand{\doi}[1]{\url{https://doi.org/#1}}
\newcommand{\clip}{\operatorname{clip}}
\newcommand{\pow}{\operatorname{pow}}
\newcolumntype{Y}{>{\raggedright\arraybackslash}X}

\title{The Stretch Factor of Planar Delaunay Triangulations\\ Is Less Than \(1.65\)}
\author{%
\begin{tabular}{@{}l@{\hspace{2em}}r@{}}
\normalsize\bfseries Guanlin Mo\textsuperscript{1,\(\dagger\)}
  & \small\scshape moguanlin@mail.ustc.edu.cn\\
\normalsize\bfseries Kangke Cheng\textsuperscript{1,\(\dagger\)}
  & \small\scshape ke314159@mail.ustc.edu.cn\\
\normalsize\bfseries Hu Ding\textsuperscript{1,*}
  & \small\scshape huding@ustc.edu.cn\\[0.4em]
\multicolumn{2}{@{}l@{}}{\small\itshape
  \textsuperscript{1}University of Science and Technology of China, Hefei, China}
\end{tabular}}
\date{}

\begin{document}
\maketitle

\begin{center}
\small \textsuperscript{\(\dagger\)}Guanlin Mo and Kangke Cheng contribute equally.\qquad
\textsuperscript{*}Corresponding author.
\end{center}

\begin{abstract}
Delaunay triangulations are a fundamental class of plane spanners, and determining their worst-case stretch factor has been a longstanding problem in computational geometry.  We prove an upper bound of \(1.65\), improving the bound of \(1.998\) due to Xia (2011) and reducing the gap to the known lower bound of \(1.5932\) by a factor of more than seven.  Our proof works with the chains of circumdisks introduced by Xia, along which a path between two sites is assembled disk by disk.  Xia measures such a path against a quantity attached to the whole chain, and because that quantity is not additive, his induction has to be carried alongside a separate global estimate.  Our main idea is to measure the path against the progress it makes along the segment joining the two sites.  This quantity is additive, so the bound becomes a Bellman recursion that forgets all but one number about the disks already passed, and we show that the bound holds if and only if a potential on the current state satisfies three local inequalities.  The smallest feasible potential is the value function of that recursion, so searching for a potential becomes the problem of fitting this value function from above.  The geometry of the disks reduces the fit to a linear program over functions of one variable, in which a GPT-based multi-agent system that we developed found a feasible point, certified in exact arithmetic.
\end{abstract}

\section{Introduction}\label{sec:intro}

A plane spanner is a planar graph on a finite point set whose shortest-path distances are at most a constant times the Euclidean distances.  The oldest and most widely used one is the Delaunay triangulation, and the constant that measures how much worse is the quantity this paper is about: for a triangulation \(\mathcal D\) of a finite planar set \(S\), whose edges carry their Euclidean lengths, the \emph{stretch factor} is
\[
\st(\mathcal D)=\max_{p\ne q\in S}\frac{d_{\mathcal D}(p,q)}{\norm{p-q}},
\]
where \(d_{\mathcal D}\) is shortest-path distance in the graph.  Delaunay triangulations are known to have stretch factor bounded by an absolute constant, and the value of the smallest such constant is a question from the 1980s that is still open.  It also propagates: several plane spanner constructions bound their spanning ratio through this constant \cite{bose2005,kanj2010,narasimhan2007}, so an improvement for Delaunay triangulations improves those bounds as well.

The known interval has been narrowed from both ends.  Dobkin, Friedman and Supowit \cite{dfs1987,dfs1990} proved the first constant upper bound \((1+\sqrt5)\pi/2\approx5.08\), Keil and Gutwin \cite{keilgutwin1989,keilgutwin1992} improved it to \(2\pi/(3\cos(\pi/6))\approx2.42\), and Xia \cite{xia2013} lowered it below \(1.998\).  From below, Chew \cite{chew1989} gave \(\pi/2\), Bose et al.\ \cite{bose2011} raised it to \(1.5846\), and Xia and Zhang \cite{xiazhang2011} to \(1.5932\).  We move the upper end.

\subsection{Our Main Result}\label{subsec:intro-result}

\begin{theorem}[main theorem, restated as \Cref{thm:main}]\label{thm:intro-main}
Let \(S\subset\R^2\) be a finite set of distinct points not contained in a line, and let \(\mathcal D\) be any specified straight-line Delaunay triangulation of \(S\): every triangular face of \(\mathcal D\) has an open circumdisk containing no point of \(S\).  Then
\[
\st(\mathcal D)\le\frac{33}{20}=1.65 .
\]
\end{theorem}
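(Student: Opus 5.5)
The plan follows the structure sketched in the abstract: reduce to a single pair of sites and a chain of circumdisks, replace Xia's global chain quantity by the additive quantity ``progress along \(pq\)'', and turn the bound into a Bellman-type recursion certified by a potential.

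\textbf{Reduction to a chain.} Fix sites \(p,q\in S\). By the triangle inequality on graph distances it suffices to treat pairs with no site in the interior of segment \(pq\), and we may rotate and scale so that \(p=(0,0)\) and \(q=(1,0)\). The segment \(pq\) then crosses a sequence of Delaunay triangles \(T_1,\dots,T_k\). Their circumdisks \(C_1,\dots,C_k\) form Xia's chain: consecutive disks share the chord cut by the common edge, and no site lies inside any disk. A \(p\)--\(q\) path is assembled disk by disk. In each \(C_i\) we choose either the upper or the lower vertex of the crossed edge, and connect consecutive choices by the triangle edges, which are chords of the disks. The goal is to show that some such path has length at most \(\tfrac{33}{20}\).

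\textbf{Additive accounting and the potential.} For a path reaching a vertex \(v\) on the boundary of \(C_i\), define the \emph{progress} to be the \(x\)-coordinate of the entry point of \(pq\) into the next disk. The state is the current disk geometry in normalized coordinates, together with one scalar: the vertical offset, or the angle of \(v\) relative to the chord. We seek a potential \(\Phi(\text{state})\ge0\) such that the following invariant holds:
\[
\text{length so far}+\Phi\le \tfrac{33}{20}\cdot\text{progress}.
\]
This invariant must be verified under three local moves: the initial step at \(p\), advancing through one disk while staying on the same side or switching sides (choosing the better of the two continuations), and the terminal step to \(q\), where \(\Phi\) must vanish appropriately. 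Because progress telescopes, these three local inequalities imply the global bound, by induction on \(k\). Conversely, the best achievable bound is the value function of this recursion, so a potential exists if and only if the local inequalities are satisfiable.

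\textbf{Making it finite-dimensional.} Arc-length and chord relations in a disk let us parametrize each step by a few angles: where the disk meets the \(x\)-axis and the angular position of the current vertex. Among the path options the worst case is an extremal configuration, in which the adversary's disk passes through the previous chord endpoints. We take \(\Phi\) to depend on a single normalized variable, namely the angular height of the current vertex. Each local inequality then becomes linear in the values of \(\Phi\) at finitely many points, after piecewise-linear interpolation with a Lipschitz or convexity control between grid points. Finding \(\Phi\) is therefore a linear program over functions of one variable.

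\textbf{Main obstacle.} The hard step is producing a feasible \(\Phi\) at ratio \(1.65\), which lies close to the lower bound \(1.5932\), and certifying it rigorously over the continuum of disk parameters, not only at grid points. The approach is to solve the LP numerically and round the solution to rationals. The continuous inequalities are then verified by interval arithmetic combined with monotonicity bounds on each parameter cell, subdividing adaptively near the tight configurations; these are the near-degenerate sequences of almost-cocircular disks that are responsible for the lower-bound examples. All final checks are done in exact rational arithmetic.
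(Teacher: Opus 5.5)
Your outer architecture matches the paper's: a chain of circumdisks, progress along \(pq\) as the additive reference, three telescoping local inequalities, a univariate linear program, and exact certification. The state you carry is the wrong one, though, and this is a genuine gap.

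\textbf{The state and recursion.} You follow a single path sitting at one endpoint \(v\) of the current gate, record a geometric scalar (the height or angle of \(v\)), and at each disk ``choose the better of the two continuations''. That certifies an \emph{online} path, whose side at gate \(i\) is fixed before later disks are known. The shortest ladder path, by contrast, may need the other endpoint of gate \(i\) because of what comes later. What makes the recursion exact is a path length, not a position. The paper keeps the shortest distances \(d_i^+,d_i^-\) from \(p\) to \emph{both} gate endpoints, as the ledger \(\mu_i=(d_i^++d_i^-)/2\) and the balance \(\delta_i=d_i^+-d_i^-\in[-g_i,g_i]\). These are updated by the min-plus rule \(\delta_{i+1}=\clip_{[-g_{i+1},g_{i+1}]}(\delta_i+\alpha_{i+1}-\beta_{i+1})\) and \(\mu_{i+1}=\mu_i+(\alpha_{i+1}+\beta_{i+1}-\omega_{i+1})/2\). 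The derivation uses the two-edge cut of the ladder and the fact that the predecessor disk does not change what a state can do. With the balance in the state, each step gives one inequality linear in the potential, and the smallest feasible potential is the excess of the true shortest path; that is what makes the converse hold. In your formulation, ``some choice satisfies \dots'' is a disjunction of two linear constraints, so it is not a linear program. Its value function is that of an online game, which is at least the stretch factor and may be strictly larger. Your inequalities would still suffice if met, but your ``conversely'' does not tie their feasibility to the chain bound, and nothing guarantees a feasible \(\Phi\) at \(1.65\). Your progress marker (entry into the next disk) is harmless: it differs from the gate crossing \(c_i\) by a term \(UL_-\) that the potential can absorb.

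\textbf{The finite-dimensional reduction lacks its mechanism.} A \(\Phi\) depending only on the angular height of the current vertex is far too coarse. The paper uses \(W=r[\pi/2-U\cos\gamma-H]-r\theta+UL_-\), where \(H\) depends on all four angles \((\gamma,\varphi,\theta,\psi)\), including the normalized balance \(\psi=\varphi+\delta/(2r)\). The ansatz for \(H\) is fixed except for an even amplitude \(w(\gamma)\).

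``The adversary's disk passes through the previous chord endpoints'' is not an extremal reduction, since every successor passes through the outgoing gate's endpoints. The paper controls the continuum of successors by splitting each step into two moves:
\begin{enumerate}
\item a move along the old circle, where \(H\) is \(1\)-Lipschitz in \(\psi\) (paying for the clip) and nondecreasing along complete rails because \(H_\theta\ge|H_\varphi|\);
\item a move along the pencil of circles through the fixed outgoing gate, where one requires \(W_X\le0\).
\end{enumerate}
That derivative condition is where \(w'\) enters. It is why the conditions are affine in \((U,w,w')\), rather than in grid values of \(\Phi\).

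Two smaller points:
\begin{itemize}
\item Crossed triangles must be grouped into maximal cocircular cells. Consecutive triangles can share a circumcircle, and then there is no gate chord.
\item The certificate must handle faces where the residual vanishes identically: the gate faces and \(\gamma=\pm\pi/2\). Plain interval enclosures with monotonicity cannot close boxes touching them; the paper uses mean-value face rules and cancellation-free rewrites.
\end{itemize}
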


The theorem allows collinear subsets of \(S\) and cocircular groups of four or more sites.  When a cocircular polygon admits several Delaunay triangulations, the bound applies to each specified choice \(\mathcal D\).  Together with \(1.5932\) the known interval for the worst-case constant is thus reduced from \([1.5932,1.998]\), of width \(0.4048\), to \([1.5932,1.65]\), of width \(0.0568\).

\subsection{Chains of Circumdisks}\label{subsec:intro-chain}

The proof starts with a geometric reduction to disk chains.  Following Xia \cite{xia2013}, one looks at the segment \([p,q]\) and at the circumdisks of the faces it crosses (\Cref{fig:reduction}).  Consecutive disks overlap, and each consecutive pair meets along a chord, a \emph{gate}.  On each disk, the two arcs between the incoming and the outgoing gate are its \emph{rails}.  The rails and gates form a graph, the \emph{ladder}, on which one can walk from \(p\) to \(q\).  Roughly speaking, such a walk follows the boundary arcs of the disks and cuts across each overlap.  For this chain \(\cO\), write \(P_{\cO}\) for the shortest of these walks.  Every rail can be traded for a chain of Delaunay edges no longer than the arc, and every gate is itself an edge, so \(P_{\cO}\) bounds \(d_{\mathcal D}(p,q)\) from above.

\begin{figure}[t]
\centering
\begingroup
\input{figure_assets/figure_style}
% Three-disk schematic illustrating the geometric and graph terminology.
\begin{tikzpicture}[paper figure]
\path[use as bounding box] (0,0) rectangle (16,5.8);
\node[fheading] at (.15,5.45) {(a)\quad A three-disk chain};
\node[fheading] at (8.3,5.45) {(b)\quad The ladder: rails and gates};

\pgfmathsetmacro{\gateHeight}{sqrt(7)/4}
\pgfmathsetmacro{\gateAngle}{acos(3/4)}
\begin{scope}[shift={(1.75,2.9)},scale=1.25]
\coordinate (p) at (-1,0);
\coordinate (q) at (4,0);
\coordinate (a1) at (.75,\gateHeight);
\coordinate (b1) at (.75,-\gateHeight);
\coordinate (a2) at (2.25,\gateHeight);
\coordinate (b2) at (2.25,-\gateHeight);
\foreach \x in {0,1.5,3} {
  \draw[fguide] (\x,0) circle[radius=1];
}
\draw[fquery,dash pattern=on 3pt off 2pt] (p)--(q);

\draw[frail] (p) arc[start angle=180,end angle=\gateAngle,radius=1];
\draw[frail] (p) arc[start angle=180,end angle={360-\gateAngle},radius=1];
\draw[frail] (a1) arc[start angle={180-\gateAngle},end angle=\gateAngle,radius=1];
\draw[frail] (b1) arc[start angle={180+\gateAngle},end angle={360-\gateAngle},radius=1];
\draw[frail] (a2) arc[start angle={180-\gateAngle},end angle=0,radius=1];
\draw[frail] (b2) arc[start angle={180+\gateAngle},end angle=360,radius=1];
\draw[fgate] (a1)--(b1);
\draw[fgate] (a2)--(b2);

\node[text=figOrange,anchor=west,fill=white,inner sep=1pt]
  at (.84,.16) {gate $G_1$};
\node[text=figOrange,anchor=west,fill=white,inner sep=1pt]
  at (2.34,.16) {gate $G_2$};
\node[fnote] at (0,-.35) {$O_1$};
\node[fnote] at (1.5,-.35) {$O_2$};
\node[fnote] at (3,-.35) {$O_3$};
\node[text=figBlue] (upperrail) at (1.5,1.48) {rail};
\draw[figBlue,-{Latex[length=1.5mm]},line width=.55pt]
  (upperrail.south)--(1.5,1.03);
\node[text=figBlue] (lowerrail) at (1.5,-1.48) {rail};
\draw[figBlue,-{Latex[length=1.5mm]},line width=.55pt]
  (lowerrail.north)--(1.5,-1.03);

\foreach \v in {p,q,a1,b1,a2,b2} {
  \node[fpoint] at (\v) {};
}
\node[anchor=east] at (-1.08,0) {$p$};
\node[anchor=west] at (4.08,0) {$q$};
\node[anchor=south] at (.75,\gateHeight+.06) {$a_1$};
\node[anchor=north] at (.75,-\gateHeight-.06) {$b_1$};
\node[anchor=south] at (2.25,\gateHeight+.06) {$a_2$};
\node[anchor=north] at (2.25,-\gateHeight-.06) {$b_2$};
\end{scope}

\coordinate (lp) at (8.6,2.9);
\coordinate (la1) at (10.65,3.95);
\coordinate (lb1) at (10.65,1.85);
\coordinate (la2) at (13.1,3.95);
\coordinate (lb2) at (13.1,1.85);
\coordinate (lq) at (15.3,2.9);
\draw[frail] (lp)--(la1)--(la2)--(lq);
\draw[frail] (lp)--(lb1)--(lb2)--(lq);
\draw[fgate] (la1)--(lb1);
\draw[fgate] (la2)--(lb2);
\node[text=figBlue,anchor=south] at (11.875,4.1) {rail};
\node[text=figBlue,anchor=north] at (11.875,1.7) {rail};
\node[text=figOrange,anchor=west,align=left] at (10.83,2.9) {gate $G_1$};
\node[text=figOrange,anchor=west,align=left] at (13.28,2.9) {gate $G_2$};
\foreach \v in {lp,la1,lb1,la2,lb2,lq} {
  \node[fpoint] at (\v) {};
}
\node[anchor=east] at (8.48,2.9) {$p$};
\node[anchor=west] at (15.42,2.9) {$q$};
\node[anchor=south] at (10.65,4.04) {$a_1$};
\node[anchor=north] at (10.65,1.76) {$b_1$};
\node[anchor=south] at (13.1,4.04) {$a_2$};
\node[anchor=north] at (13.1,1.76) {$b_2$};

\node[fnote,anchor=west] at (.15,.35) {A gate is the common chord of consecutive disks.};
\node[fnote,anchor=west] at (8.3,.35) {Each gate is an edge of the ladder.};
\end{tikzpicture}
\endgroup
\caption{A three-disk chain and its weighted ladder. (a) The query segment $[p,q]$ crosses the two gates $G_i=[a_i,b_i]$, shown in orange. The solid blue arcs are the rails; the dashed circle arcs show the rest of each circle. (b) The rails and gates together form the ladder. Rails are drawn as straight edges here, with weights equal to their original arc lengths; gate weights are chord lengths.}
\label{fig:reduction}
\end{figure}

To prove \Cref{thm:intro-main}, it therefore suffices to bound \(P_{\cO}\) by \(1.65\norm{p-q}\) for chains generated by query segments.

\subsection{The Bottleneck in the Previous Analysis}\label{subsec:intro-obstacle}

Xia \cite{xia2013} proves a bound for disk chains defined independently of a query segment.  He compares \(P_{\cO}\) with the \emph{rubber length} \(D_{\cO}\), the length of the shortest polyline meeting all the gates in order.  For a chain generated by \([p,q]\), the segment itself is such a shortest polyline, so \(D_{\cO}=\norm{p-q}\).  His analysis seeks the chain inequality
\begin{equation}\label{eq:intro-chain}
P_{\cO}\le\rho\,D_{\cO}.
\end{equation}

The points where the shortest polyline meets the gates are chosen jointly for the entire chain, so \(D_{\cO}\) is a global minimum rather than a sum of independent contributions from successive disks.  Xia's induction uses an accumulated potential to control the growth of the ladder distance relative to this quantity.  To turn the inductive estimate into a stretch bound, a separate global argument bounds that potential, when negative, by a fraction \(K\) of the ladder length \(P_{\cO}\) on a suitably chosen extremal chain.  If the induction uses coefficient \(\lambda\), the resulting bound is
\begin{equation}\label{eq:intro-rho}
\rho=\frac\lambda{1-K},\qquad 0\le K<1.
\end{equation}

The same potential must therefore support the induction and satisfy the global estimate.  Increasing its negative contribution can help the induction while making the global estimate more demanding, so the final bound depends on the combined effect on \(\lambda\) and \(K\).  Xia's estimates use \(\lambda=1.8\) and yield \(\rho=1.998\).

\subsection{Our Main Ideas}\label{subsec:intro-ideas}

Our main idea is to formulate the disk-chain bound as a Bellman problem, processing the disks one at a time.  We keep a query line that crosses the gates in order.  The gate crossings divide the segment from \(p\) to \(q\) into intervals whose lengths sum to \(\norm{p-q}\).  This gives an additive quantity against which to compare the path length.

We need to answer two questions:
\textbf{(a) what information must we keep when adding a disk?}
and
\textbf{(b) how can we use it to certify a stretch bound?}

For question (a), stop at a gate and consider two shortest paths from \(p\): one to its upper endpoint and one to its lower endpoint, both using only the part of the ladder already processed.  Their lengths contain all the information from earlier path choices needed to update the next pair of distances.  We also show that the current disk and gate determine the geometry needed for this update (\Cref{lem:updates,lem:fiber}).  This gives the local information needed by the Bellman recursion.

For question (b), fix a proposed stretch bound \(U\).  At each step, we compare the cost added by the distance recursion with \(U\) times the distance advanced along the query line.  We introduce a function of the current state, called a \emph{potential}, so that any excess in this comparison is covered by a decrease in the potential.  Adding the step inequalities cancels the intermediate potential values.  The start and end inequalities handle the first and last potential values, giving \(P_{\cO}\le U\norm{p-q}\) (\Cref{thm:sufficient}).

Conversely, if the bound holds for every chain in our class, then a potential satisfying these inequalities exists.  The \emph{Bellman value function} is the smallest such potential at each state (\Cref{thm:equivalence}).  It gives a precise target for constructing an explicit feasible potential.

To construct the potential, we split each step into two geometric comparisons.  First we move along the old circle, using its arc lengths and the exact distance update.  Then we keep the outgoing gate fixed and vary the circle through its two endpoints (\Cref{fig:pencil}).  A derivative inequality controls the change in potential during this second move.

After scaling the disk to unit radius, we choose potentials controlled by one function of one variable, called a \emph{profile}, which governs their dependence on the disk center's height relative to the query line.  The geometric comparisons give sufficient conditions that are affine in \(U\), the profile and its derivative (\Cref{thm:sufficient-profile}).  Minimizing \(U\) under these conditions gives a linear program over profiles.

We construct a potential using a cubic spline weight function and rigorously verify the Bellman inequalities at \(U=1.65\) over all admissible states.

\subsection{Contributions and Organization}\label{subsec:intro-contrib}

\begin{enumerate}
\item \emph{A local state reduction for disk-chain recursion.}  We introduce a local state consisting of the current disk, its gate, and the difference between the two shortest distances from the source to the gate endpoints within the processed ladder.  This state determines the admissible extensions and the increments in the exact distance recursion.  After normalization, it has four parameters, regardless of the chain length.

\item \emph{A Bellman characterization and a profile-based linear program.}  Comparing the ladder length against progress along the query line makes the reference quantity additive.  For \(U\ge\pi/2\), the chain bound at level \(U\) is equivalent to the existence of a potential on the local state satisfying an initialization, a step, and a termination inequality.  The three inequalities telescope to give the bound \(U\) directly, and the Bellman value function is the pointwise smallest feasible potential.  The current disk and gate determine the geometric operations available at a state.  Normalizing and differentiating along the circles through a fixed gate turn the step inequality into a pointwise condition on four angles that is affine in a one-variable profile and its first derivative, yielding a linear program for a feasible potential.

\item \emph{A certified feasible point of that program at \(33/20\).}  An agent-assisted search over the shape and the coefficients of the profile produced an even \(C^2\) cubic spline with exact rational coefficients, and its conditions are verified over the full closed domains in exact arithmetic.  Since the program is linear, the search is a search for a feasible point, and what it must supply is fixed in advance by the two reductions.
\end{enumerate}

\Cref{sec:related} places the result and separates what is inherited from what is new.  \Cref{sec:prelim} defines chains, the ladder, the restricted class and gate states, and reduces \Cref{thm:intro-main} to the chain bound.  \Cref{sec:bellman} proves that the three local inequalities imply the chain bound.  \Cref{sec:lp} carries out the reduction to the linear program.  \Cref{sec:profiles} exhibits the spline profile, proves \Cref{thm:main}, and describes the certification.  \Cref{sec:agent} describes the search that produced the coefficients.  The appendices follow the order in which the text calls on them: the transfer from a specified triangulation to chains (\Cref{app:transfer}), the Bellman value function and its value equation, including the converse implication (\Cref{subsec:excess}), and the exact parameters together with the auxiliary proofs (\Cref{app:params,app:aux}).

\section{Related Work}\label{sec:related}

\paragraph{Bounds on the Delaunay stretch factor.}
\Cref{tab:bounds} collects the known bounds together with the settings in which they hold.  Dobkin, Friedman and Supowit \cite{dfs1987,dfs1990} proved the first constant upper bound; Keil and Gutwin \cite{keilgutwin1989,keilgutwin1992} improved it to \(\approx2.42\), which stood until Xia \cite{xia2013} introduced the disk-chain analysis and obtained a bound below \(1.998\).  Cui, Kanj and Xia \cite{cui2011} proved \(2.33\) for point sets in convex position.  On the other side, Chew \cite{chew1989} gave the lower bound \(\pi/2\), Bose et al.\ \cite{bose2011} showed that almost all Delaunay triangulations exceed \(\pi/2\) and obtained \(1.5846\), and Xia and Zhang \cite{xiazhang2011} constructed point sets with stretch factor above \(1.5932\), which remains the best published lower bound.  Because the Delaunay stretch factor enters the spanning ratio of several plane spanner constructions \cite{bose2005,kanj2010,narasimhan2007}, an improved upper bound for it improves those ratios as well.

\begin{table}[t]
\centering
\small
\begin{tabular}{@{}llll@{}}
\toprule
& Value & Setting & Source \\
\midrule
Lower & \(\pi/2\approx1.5708\) & arbitrary finite point sets & Chew \cite{chew1989}\\
Lower & \(1.5846\) & arbitrary finite point sets & Bose et al.\ \cite{bose2011}\\
Lower & \(1.5932\) & arbitrary finite point sets & Xia and Zhang \cite{xiazhang2011}\\
\midrule
Upper & \((1+\sqrt5)\pi/2\approx5.08\) & arbitrary finite point sets & Dobkin et al.\ \cite{dfs1987,dfs1990}\\
Upper & \(2\pi/(3\cos(\pi/6))\approx2.42\) & arbitrary finite point sets & Keil and Gutwin \cite{keilgutwin1989,keilgutwin1992}\\
Upper & \(2.33\) & points in convex position & Cui et al.\ \cite{cui2011}\\
Upper & \(1.998\) & specified triangulation, no general position & Xia \cite{xia2013}\\
Upper & \(\mathbf{1.65}\) & specified triangulation, no general position & \Cref{thm:main}\\
\bottomrule
\end{tabular}
\caption{Known bounds on the worst-case Delaunay stretch factor.  The two upper bounds in the last two rows quantify over a triangulation named in advance, so a cocircular polygon may be triangulated in any of its admissible ways.}
\label{tab:bounds}
\end{table}

\paragraph{The disk-chain framework.}
Xia \cite{xia2013} introduced a disk-chain framework for bounding the stretch of Delaunay triangulations.  The basic construction follows the circumdisks of the triangular faces crossed by the segment between two vertices.  Arcs of these circles and their common chords form a ladder joining the two vertices.  A path in this ladder can be converted into a path in the Delaunay triangulation of no greater length.  The framework therefore reduces a graph-distance bound to a geometric bound on paths through a chain of disks.  We use this geometric reduction in our proof.

Xia's analysis compares ladder paths with the rubber length through induction and a global estimate of an accumulated potential.  Our analysis uses an exact shortest-path recursion on chains carrying an oriented query line.  For this chain class, we prove that the path bound is equivalent to the existence of a potential satisfying three local inequalities; when the bound holds, the excess --- the Bellman value function of the recursion --- is the smallest feasible potential (\Cref{thm:equivalence}).  We further use the disk geometry to construct a family of potentials whose sufficient conditions form a linear program over functions of one variable (\Cref{thm:sufficient-profile}).  Xia's bound is not recovered as a special case of ours: his induction and ours use different reference quantities, and the two analyses share only the geometric reduction.

\paragraph{Computer-assisted inequalities.}
Two kinds of certificate are used, both standard.  Exact rational Bernstein expansions settle the polynomial positivity statements; for the Bernstein basis see Farouki's survey \cite{farouki2012}.  The remaining conditions involve transcendental functions on compact domains and are settled by exhaustive covers in outward-directed interval arithmetic; for the general method see Moore et al.\ \cite{moore2009} and, for the enclosure of \(\pi\) by a Machin-type identity, Borwein and Borwein \cite{borwein1987}.

\paragraph{Search for a certificate.}
The profile was found by an agent-assisted search, described in \Cref{sec:agent}.  Systems pairing a language model with an automated evaluator, from FunSearch \cite{funsearch2024} and AlphaEvolve \cite{alphaevolve2025} to their successors \cite{georgiev2025}, search for constructions scored by a program.  Formal-verifier systems \cite{alphaproof2025} use proof checkers to establish correctness.  Closest to the present setting are the dual-agent convex relaxations of Kim and Pilanci \cite{kimpilanci2026}, which certify bounds by interval arithmetic; the Gilbert--Pollak search of Ke et al.\ \cite{ke2026}, which uses boxes to certify a geometric bound; and the potential-function benchmark of Brilliantov et al.\ \cite{kserver2026}, which searches parameterized potential families against a proof-shaped constraint system.  The design of the search space carries mathematical content in these pipelines \cite{gideoni2026}.  In the present case the two reductions are exactly where the mathematics lies, and they are theorems: what they leave is a linear program, on which a feasible point is a bound, certified in exact arithmetic.

\section{Preliminaries}\label{sec:prelim}

Throughout, \(S\subset\R^2\) is a finite set of distinct points not contained in a line, and \(\mathcal D\) is a specified straight-line Delaunay triangulation of \(S\): every triangular face has an open circumdisk containing no point of \(S\).  Four or more points of \(S\) may be cocircular, in which case the resulting cocircular polygon admits several triangulations and the quantifier fixes one of them.  Each edge of \(\mathcal D\) carries its Euclidean length and \(d_{\mathcal D}\) is the induced shortest-path distance.

\paragraph{Notations.}
We write \(\norm{\cdot}\) for the Euclidean norm and \(p,q\) for the two points of \(S\) whose distance is being bounded.  A disk is a pair \(O=(o,r)\) of center and positive radius, and \(\pow_O(z)=\norm{z-o}^2-r^2\) is the power of a point with respect to it, negative exactly inside.

\subsection{Chains of disks and the ladder}\label{subsec:chains}

This subsection makes \Cref{fig:reduction} precise.  The definitions below also cover tangencies, repeated disks and degenerate arcs.

A \emph{chain} is a finite ordered list \(\cO=(O_1,\dots,O_n)\) of closed disks of positive radius, together with two \emph{terminals} \(u\in\partial O_1\) and \(v\in\partial O_n\), subject to the conditions below.  The list records occurrences: the same disk may appear twice, with each occurrence represented separately.

Consecutive circles meet at two points carrying \emph{incidence tags} \(a_i,b_i\); the two may coincide geometrically when the circles are tangent, and the tags stay distinct.  Neither of two consecutive disks contains the other.  The chord
\[
G_i=[a_i,b_i]
\]
is the \(i\)-th \emph{gate}, with length \(g_i=\norm{a_i-b_i}\).  On disk \(O_i\), the closed arc lying inside \(O_{i-1}\) and the closed arc lying inside \(O_{i+1}\) are its two \emph{caps}.  We require the caps on each internal disk to satisfy \emph{strong cap-disjointness}: their relative interiors are disjoint, and no endpoint of one lies in the relative interior of the other.  This makes the two gates occur in the correct order along the circle.  The terminals satisfy \(u=a_0=b_0\notin\operatorname{int}O_2\) and \(v=a_n=b_n\notin\operatorname{int}O_{n-1}\), omitting a condition when the neighbor does not exist.

The two boundary arcs between the incoming and outgoing caps are the \emph{rails} of \(O_i\).  The upper rail connects \(a_{i-1}\) to \(a_i\), and the lower rail connects \(b_{i-1}\) to \(b_i\).  Both arcs include their endpoints.

Rail lengths are measured along these boundary arcs.  When the endpoints coincide geometrically, we distinguish a zero-length arc from a full circle of length \(2\pi r_i\).  A zero-length rail still joins two distinct tags.

The \emph{ladder} of \(\cO\) is the finite weighted graph whose vertices are the incidence tags, whose edges correspond to the \(2n\) rails, weighted by arc length, and the \(n-1\) gates \(a_ib_i\) for \(1\le i\le n-1\), weighted by chord length.  Write
\[
P_{\cO}(u,v)
\]
for the shortest \(u\)--\(v\) distance in this graph.  The graph uses exactly the rails, gates and incidence identifications specified above.  At each terminal, its upper and lower tags represent a single vertex.  All other tags remain distinct vertices even when their geometric locations coincide.

Two facts about \(P_{\cO}\) will be used.  Both concern the passage between the triangulation and the chain rather than the chain itself, so \Cref{app:transfer} proves them and the rest of the paper uses only their statements.  First, every rail can be traded for a walk in \(\mathcal D\) no longer than the rail, and every gate is an edge of \(\mathcal D\), so \(P_{\cO}\) bounds \(d_{\mathcal D}\) from above along a chain produced by a query segment (\Cref{lem:expand}).  Second, for a single disk,
\begin{equation}\label{eq:one-disk}
P_{\cO}(u,v)\le\frac\pi2\norm{u-v}
\end{equation}
whenever \(u\ne v\) lie on its circle (\Cref{lem:one-disk}).

\subsection{Chains crossed by a query line}\label{subsec:crossed}

An \emph{oriented query line} is a line with a unit forward tangent \(e\); write \(n_e\) for \(e\) rotated counterclockwise by \(\pi/2\), and put
\[
x(z)=e\cdot z,
\qquad
y(z)=n_e\cdot(z-z_0)
\]
for any fixed \(z_0\) on the line.  The coordinate \(x\) increases in the forward direction, and \(y\) is the signed perpendicular distance to the query line, positive on its left when looking in direction \(e\).

\begin{definition}[the chains we work with]\label{def:class}
A chain \(\cO\) equipped with an oriented query line belongs to the class \(\mathcal Q\) if \(u\) and \(v\) lie on the line with \(x(u)<x(v)\) and, in addition:
\begin{enumerate}[label=(\roman*)]
\item consecutive circles cross properly, \(|r_i-r_{i+1}|<\norm{o_i-o_{i+1}}<r_i+r_{i+1}\);
\item each gate has its upper tag strictly above and its lower tag strictly below the line, \(y(a_i)>0>y(b_i)\), so the line meets \(G_i\) in a single point \(c_i\) interior to that chord;
\item each pair is directed forward, \(e\cdot(o_{i+1}-o_i)>0\);
\item the crossings are strictly ordered, \(x(u)<x(c_1)<\dots<x(c_{n-1})<x(v)\).
\end{enumerate}
For \(n=1\) only the two distinct terminals on one positive-radius circle and the orientation from \(u\) to \(v\) are required.
\end{definition}

Conditions (i)--(iv) specify the domain \(\mathcal Q\).  \Cref{lem:coverage} shows that the chains produced by a query segment satisfy these conditions directly.  The domain includes zero-length rails, with their tags distinct.

For \(n\ge2\), condition (iv) orders the gate crossings along \([u,v]\).  The lengths of the resulting segments add up to
\begin{equation}\label{eq:progress}
\bigl(x(c_1)-x(u)\bigr)+\sum_{i=1}^{n-2}\bigl(x(c_{i+1})-x(c_i)\bigr)+\bigl(x(v)-x(c_{n-1})\bigr)
=x(v)-x(u)=\norm{u-v},
\end{equation}
the last equality because \(u\) and \(v\) both lie on the query line.  So the progress is additive along the chain and equals \(\norm{u-v}\).

\subsection{Gate states}\label{subsec:state}

Fix a chain in \(\mathcal Q\) and an index \(i\le n-1\), and consider the \emph{prefix ladder} at gate \(i\): the ladder of the first \(i\) disks together with gates \(1,\dots,i\).  Let
\[
d^+_i,\qquad d^-_i
\]
be the shortest distances in it from \(u\) to the upper and to the lower tag of gate \(i\) (\Cref{fig:state}(a)).  Record them in the symmetric form
\begin{equation}\label{eq:ledger-def}
\mu_i=\frac{d^+_i+d^-_i}2,
\qquad
\delta_i=d^+_i-d^-_i .
\end{equation}
A path may switch between the endpoints of gate \(i\) at cost \(g_i\), giving \(|\delta_i|\le g_i\).  We call \(\mu_i\) the \emph{ledger} and \(\delta_i\) the \emph{balance}.

At gate \(i\), the next disk to be processed is \(O_{i+1}\).  We record \(s_i=(O_{i+1},G_i,\delta_i)\) as the state and keep the ledger \(\mu_i\) separately.

\begin{definition}[gate state]\label{def:state}
A \emph{gate state} is a triple \(s=(O,G,\delta)\) consisting of a disk \(O=(o,r)\) with \(r>0\), a chord \(G=[a_G,b_G]\) of its circle with \(g=\norm{a_G-b_G}>0\) and \(y(a_G)>0>y(b_G)\) relative to a given oriented query line, and a real number \(\delta\in[-g,g]\).  States are considered modulo orientation-preserving isometries fixing the query line and modulo the corresponding renaming of tags.
\end{definition}

Where the circle meets the query line, call the backward and forward points \(w_-\) and \(w_+\); where the gate meets it, call the crossing \(c\) (\Cref{fig:state}(b)).  The two arclength coordinates \(l^+\) and \(l^-\) locate the gate tags \(a_G,b_G\) from \(w_-\) along the upper and the lower arc, with their actual, possibly major, lengths.  Along the query line, the lengths from \(w_-\) to \(c\) and from \(c\) to \(w_+\) are, respectively,
\[
L_-=x(c)-x(w_-),\qquad L_+=x(w_+)-x(c),
\]
as marked below the disk in \Cref{fig:state}(b).

\begin{figure}[htbp]
\centering
\begingroup
\input{figure_assets/figure_style}
\begin{tikzpicture}[paper figure]
\path[use as bounding box] (0,0) rectangle (16,8.05);
\node[fheading] at (.15,7.75) {(a)\quad Distances at a gate};
\node[fheading] at (8.45,7.75) {(b)\quad Coordinates of a gate state};

% Three equal circles, with their actual common chords and circular rails.
\pgfmathsetmacro{\gh}{sqrt(1.2^2-.6^2)}
\pgfmathsetmacro{\ga}{acos(.6/1.2)}
\pgfmathsetmacro{\sa}{180+asin(.25/1.2)}
\pgfmathsetmacro{\sx}{-sqrt(1.2^2-.25^2)}
\begin{scope}[shift={(1.65,5.25)},scale=1.12]
  \foreach \x in {0,1.2} \draw[fdisk] (\x,.25) circle[radius=1.2];
  \draw[fdisk,dash pattern=on 2pt off 2pt] (2.4,.25) circle[radius=1.2];
  \coordinate (u) at (\sx,0);
  \coordinate (a1) at (.6,{.25+\gh});
  \coordinate (b1) at (.6,{.25-\gh});
  \coordinate (ag) at (1.8,{.25+\gh});
  \coordinate (bg) at (1.8,{.25-\gh});
  \draw[fguide] (-1.4,0)--(3.95,0);
  \draw[frail] (u) arc[start angle=\sa,end angle=\ga,radius=1.2];
  \draw[frail] (u) arc[start angle=\sa,end angle={360-\ga},radius=1.2];
  \draw[frail] (a1) arc[start angle={180-\ga},end angle=\ga,radius=1.2];
  \draw[frail] (b1) arc[start angle={180+\ga},end angle={360-\ga},radius=1.2];
  \draw[fgate] (a1)--(b1);
  \draw[fgate] (ag)--(bg);
  \foreach \p in {u,a1,b1,ag,bg} \node[fpoint] at (\p) {};
  \node[anchor=east] at ($(u)+(-.12,0)$) {$u$};
  \node[anchor=south] at ($(ag)+(0,.10)$) {$a_i$};
  \node[anchor=north] at ($(bg)+(0,-.10)$) {$b_i$};
  \node[anchor=west,text=figBlue!65!black] at (2.03,1.24) {$d^+_i$};
  \node[anchor=west,text=figBlue!65!black] at (2.03,-.77) {$d^-_i$};
  \node[text=figGray] at (2.74,.64) {$O_{i+1}$};
  \node[text=figOrange] at (1.62,.48) {$G_i$};
\end{scope}
\node[fnote] at (3.55,3.53) {Distances from $u$ to $a_i$ and $b_i$ in the prefix.};
\node[fnote] at (3.55,3.08) {Paths may switch sides along a gate.};
\draw[fflow] (6.65,5.50)--(7.65,5.50);

% The same next disk, now in the unindexed notation of the state definition.
\begin{scope}[shift={(12.05,5.20)},scale=1.45]
  \coordinate (wm) at (\sx,0);
  \coordinate (wp) at ({-\sx},0);
  \coordinate (at) at (-.6,{.25+\gh});
  \coordinate (bt) at (-.6,{.25-\gh});
  \coordinate (crossing) at (-.6,0);
  \draw[figInk,line width=.75pt] (0,.25) circle[radius=1.2];
  \draw[fquery] (-1.55,0)--(1.65,0);
  \node[anchor=west] at (1.69,0) {$e$};
  \draw[fgate] (at)--(bt);
  \draw[figBlue,line width=1.55pt,-{Latex[length=1.7mm]}]
    (wm) arc[start angle=\sa,end angle={180-\ga},radius=1.2];
  \draw[figTeal,line width=1.55pt,-{Latex[length=1.7mm]}]
    (wm) arc[start angle=\sa,end angle={180+\ga},radius=1.2];
  \foreach \p in {wm,wp,at,bt,crossing} \node[fpoint] at (\p) {};
  \node[anchor=south east] at ($(at)+(-.06,.08)$) {$a_G$};
  \node[anchor=north east] at ($(bt)+(-.08,-.07)$) {$b_G$};
  \node[anchor=south east] at ($(wm)+(-.08,.03)$) {$w_-$};
  \node[anchor=south west] at ($(wp)+(.06,.03)$) {$w_+$};
  \node[anchor=south west] at ($(crossing)+(.05,.04)$) {$c$};
  \node[text=figBlue!65!black,anchor=east] at (-1.25,.77) {$l^+$};
  \node[text=figTeal,anchor=east] at (-1.17,-.59) {$l^-$};
  \node[text=figOrange,anchor=west] at (-.53,.82) {$G$};
  \node[fpoint,inner sep=1.1pt] at (0,.25) {};
  \node[anchor=south east] at (-.06,.29) {$o$};
  \draw[fguide] (0,.25)--(.848528,1.098528);
  \node at (.57,.59) {$r$};
  \node at (.42,-.34) {$O$};
  \draw[fguide] (wm)--(\sx,-1.51);
  \draw[fguide] (bt)--(-.6,-1.51);
  \draw[fguide] (wp)--({-\sx},-1.51);
  \draw[figGray,line width=.55pt,{Latex[length=1.3mm]}-{Latex[length=1.3mm]}]
    (\sx,-1.40)--node[below=3pt] {$L_-$} (-.6,-1.40);
  \draw[figGray,line width=.55pt,{Latex[length=1.3mm]}-{Latex[length=1.3mm]}]
    (-.6,-1.40)--node[below=3pt] {$L_+$} ({-\sx},-1.40);
\end{scope}

\draw[figGrid,line width=.5pt] (.15,2.35)--(15.85,2.35);
\node at (3.7,1.79) {$\displaystyle\mu_i=\frac{d^+_i+d^-_i}{2}$};
\node at (3.7,1.06) {$\delta_i=d^+_i-d^-_i$};
\node at (12.05,1.79) {$s=(O,G,\delta)$};
\node[fnote] at (12.05,1.10) {At gate $i$: $O=O_{i+1}$, $G=G_i$, $\delta=\delta_i$.};
\node[fnote] at (8,.42) {$l^+,l^-$ measure circular arcs; $L_-,L_+$ measure straight segments on the query line.};
\end{tikzpicture}
\endgroup
\caption{The quantities defining a gate state. (a) The prefix distances $d^+_i,d^-_i$ determine the ledger $\mu_i$ and balance $\delta_i$. (b) The disk and gate from (a) are shown in the notation $O,G$ of \Cref{def:state}. The colored arcs run from $w_-$ to $a_G$ and $b_G$ and have lengths $l^+$ and $l^-$. The lengths $L_-$ and $L_+$ are measured from $w_-$ to $c$ and from $c$ to $w_+$ along the query line.}
\label{fig:state}
\end{figure}

\subsection{Reduction to the chain bound}\label{subsec:reduction}

\begin{proposition}\label{prop:reduction}
Let \(U\ge\pi/2\).  Suppose that every chain in \(\mathcal Q\) satisfies
\begin{equation}\label{eq:chain-bound}
P_{\cO}(u,v)\le U\,\norm{u-v}.
\end{equation}
Then \(d_{\mathcal D}(p,q)\le U\norm{p-q}\) for every finite set \(S\) of distinct non-collinear points, every specified straight-line Delaunay triangulation \(\mathcal D\) of \(S\), and every \(p,q\in S\).
\end{proposition}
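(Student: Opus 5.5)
The reduction proceeds by induction on the number of faces the segment \([p,q]\) crosses, combined with the two transfer facts stated in \Cref{subsec:chains} and the coverage statement \Cref{lem:coverage}. Fix \(S\), \(\mathcal D\) and \(p\ne q\in S\). If \(pq\) is an edge of \(\mathcal D\), then \(d_{\mathcal D}(p,q)=\norm{p-q}\le U\norm{p-q}\) and there is nothing to prove. Otherwise we split \([p,q]\) at the sites of \(S\) lying in its relative interior. If \(p=z_0,z_1,\dots,z_k=q\) are the sites on \([p,q]\) in order, then the triangle inequality in \(\mathcal D\) and additivity of length along a segment give
\[
d_{\mathcal D}(p,q)\le\sum_j d_{\mathcal D}(z_{j-1},z_j)\quad\text{and}\quad \sum_j\norm{z_{j-1}-z_j}=\norm{p-q}.
\]
So it suffices to treat a pair \(p,q\) whose open segment contains no site. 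The hypothesis \(U\ge\pi/2\) is used here only through the one-disk bound \eqref{eq:one-disk}, which covers the degenerate chains with \(n=1\).

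For such a pair that is not an edge, the open segment \((p,q)\) avoids all vertices. It therefore crosses a well-defined sequence of triangular faces \(T_1,\dots,T_n\) of \(\mathcal D\), with \(p\) a vertex of \(T_1\), \(q\) a vertex of \(T_n\), and consecutive faces sharing the edge that \((p,q)\) crosses. If \((p,q)\) runs along an edge of \(\mathcal D\), then \(pq\) is that edge and we are in the trivial case. We take \(O_i\) to be the closed circumdisk of \(T_i\), put the terminals at \(u=p\) and \(v=q\), and use the oriented line through \(p\) and \(q\) with \(e=(q-p)/\norm{q-p}\) as the query line. The shared edge \(T_i\cap T_{i+1}\) is the common chord of \(\partial O_i\) and \(\partial O_{i+1}\), so it is the gate \(G_i\). \Cref{lem:coverage} then gives that this chain with this query line lies in \(\mathcal Q\). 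The conditions to check are proper crossing, the gate endpoints lying strictly on opposite sides, forward-directed centers, strictly ordered crossings, and strong cap-disjointness. Where cocircular faces give \(O_i=O_{i+1}\) as sets, we keep them as separate occurrences, as the definition permits. I would invoke that lemma rather than reprove it. If I had to check it myself, the key input is the empty-circle property: the vertex of \(T_{i+1}\) opposite the shared edge lies outside the open disk of \(T_i\). This gives both the cap ordering and the forward direction of centers.

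Next, \Cref{lem:expand} converts each rail into a walk in \(\mathcal D\) of no greater length, and each gate is already an edge of \(\mathcal D\). Any \(u\)--\(v\) path in the ladder therefore maps to a \(p\)--\(q\) walk in \(\mathcal D\) of no greater length, so \(d_{\mathcal D}(p,q)\le P_{\cO}(u,v)\). The hypothesis \eqref{eq:chain-bound} then gives \(P_{\cO}(u,v)\le U\norm{u-v}=U\norm{p-q}\). Summing over the pieces \([z_{j-1},z_j]\) completes the argument.

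The main obstacle is the degenerate geometry: sites exactly on \([p,q]\), the segment passing through a vertex, and cocircular groups where consecutive circumdisks coincide or a face edge is collinear with \([p,q]\). In these cases the chain may fail one of the strict conditions (i)--(iv), or the ``circles cross properly'' condition. My first attempt is the splitting at sites described above, which removes vertices on the open segment. The remaining risk is coincident consecutive circles. There, condition (i) fails, because \(\norm{o_i-o_{i+1}}=0\). I would handle this by merging consecutive faces with the same circumcircle into a single disk occurrence, taking as the new gates the first and last crossed edges of the cocircular polygon. A rail of the merged disk is still an arc whose chord chain consists of Delaunay edges of that specified triangulation, namely the polygon sides, so \Cref{lem:expand} continues to apply. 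This is exactly where the ``specified triangulation, no general position'' setting matters, and I expect the careful statement of \Cref{lem:coverage} to absorb it.
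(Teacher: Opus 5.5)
Your proposal is correct and follows the paper's proof. Both split \([p,q]\) at the sites it contains, dispose of the edge case using \(U\ge1\), build a disk chain along each remaining piece, invoke \Cref{lem:coverage} and \Cref{lem:expand}, and apply \eqref{eq:chain-bound}; your merging of maximal runs of consecutive faces with a common circumcircle gives exactly the paper's chain over maximal cocircular cells (\Cref{lem:cells,lem:decomposition}), which is the chain \Cref{lem:coverage} is stated for. Drop your earlier remark that coincident consecutive disks may be kept as separate occurrences, since the chain definition requires consecutive circles to meet in two points with neither disk containing the other and \(\mathcal Q\) requires proper crossing; also, chains with \(n=1\) already lie in \(\mathcal Q\), so the hypothesis covers them and \eqref{eq:one-disk} is not needed here.
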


\begin{proof}
When \(p=q\), both sides are zero.  Assume \(p\ne q\).

First suppose that the interior of \([p,q]\) contains no point of \(S\).  If \([p,q]\) is an edge of \(\mathcal D\), that edge itself gives a path of length \(\norm{p-q}\), and the desired bound follows from \(U\ge1\).

Otherwise, \Cref{lem:decomposition} constructs a disk chain \(\cO\) along \([p,q]\), with terminals \(p\) and \(q\).  By \Cref{lem:coverage}, this chain, oriented from \(p\) to \(q\), satisfies the conditions defining \(\mathcal Q\), so the assumed chain bound applies to it.

To obtain a path bound in \(\mathcal D\), take a shortest walk in the ladder of \(\cO\).  Each gate is already an edge of the specified triangulation \(\mathcal D\).  By \Cref{lem:expand}, each rail can be replaced by a walk along edges of \(\mathcal D\) between the same endpoints, with total length at most the length of the rail.  These replacements give a \(p\)--\(q\) walk in \(\mathcal D\) of length at most \(P_{\cO}(p,q)\).  Since \(d_{\mathcal D}(p,q)\) is the shortest-path distance in \(\mathcal D\), we obtain
\[
d_{\mathcal D}(p,q)
\le P_{\cO}(p,q)
\le U\norm{p-q},
\]
where the second inequality is the hypothesis \eqref{eq:chain-bound}.

For general \(p,q\), the segment may pass through other points of \(S\).  List all the points of \(S\) on it in order as
\[
p=p_0,p_1,\dots,p_m=q.
\]
Each subsegment \([p_{j-1},p_j]\) has no point of \(S\) in its interior, so the case just proved gives
\[
d_{\mathcal D}(p_{j-1},p_j)
\le U\norm{p_{j-1}-p_j},
\qquad 1\le j\le m.
\]
Concatenating shortest paths between these successive pairs produces a walk from \(p\) to \(q\).  Therefore
\[
\begin{aligned}
d_{\mathcal D}(p,q)
&\le\sum_{j=1}^m d_{\mathcal D}(p_{j-1},p_j)\\
&\le U\sum_{j=1}^m\norm{p_{j-1}-p_j}\\
&=U\norm{p-q}.
\end{aligned}
\]
The final equality holds because the points \(p_0,\dots,p_m\) occur in order along the straight segment \([p,q]\), so the lengths of its subsegments sum to \(\norm{p-q}\).
\end{proof}

It therefore remains to prove \eqref{eq:chain-bound} for chains in \(\mathcal Q\).

\section{The Bellman Form of the Chain Bound}\label{sec:bellman}

Our main idea is to read the chain bound as an accounting identity along the ladder.  Walking along a chain, one accumulates ladder length and one accumulates progress; if a function of the current state can be made to absorb the difference at every single move, then summing the local comparisons bounds the two totals.  This section proves that a function on gate states satisfying three local inequalities is sufficient for \eqref{eq:chain-bound}; \Cref{subsec:excess} proves the converse and identifies the smallest such function.

\subsection{An exact shortest-path recursion on the ladder}\label{subsec:recursion}

We first identify the geometry used by the recursion.  \Cref{lem:power-caps} determines which query points can serve as the source and terminal.  \Cref{lem:rail-order} identifies the endpoints and lengths of the two rails used in a step.

\begin{lemma}[power differences and cap positions]\label{lem:power-caps}
Let \(O_i=(o_i,r_i)\) and \(O_{i+1}=(o_{i+1},r_{i+1})\) be two properly crossing disks.  Suppose the query line crosses their common chord \(G_i=[a_i,b_i]\) at an interior point \(c_i\).  For every \(z\in\R^2\),
\begin{equation}\label{eq:power-affine}
\pow_{O_i}(z)-\pow_{O_{i+1}}(z)=2(o_{i+1}-o_i)\cdot(z-c_i).
\end{equation}
Along the query line, this power difference has slope \(2(o_{i+1}-o_i)\cdot e\).

For either disk \(O\), let \(w_-(O)\) and \(w_+(O)\) denote the backward and forward intersections of its circle with the query line.  If the pair is directed forward, \((o_{i+1}-o_i)\cdot e>0\), then
\[
\begin{aligned}
w_-(O_i)&\notin O_{i+1},&
w_+(O_i)&\in\operatorname{int}O_{i+1},\\
w_-(O_{i+1})&\in\operatorname{int}O_i,&
w_+(O_{i+1})&\notin O_i.
\end{aligned}
\]
Thus the outgoing cap on \(O_i\) contains its forward query point in its relative interior and excludes its backward query point.  The incoming cap on \(O_{i+1}\) contains its backward query point in its relative interior and excludes its forward query point.
\end{lemma}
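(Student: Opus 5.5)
The plan is to start from the algebraic identity, since the cap statements then follow from sign analysis along the query line. Expanding the powers gives
\[
\pow_{O_i}(z)-\pow_{O_{i+1}}(z)=-2z\cdot(o_i-o_{i+1})+\norm{o_i}^2-\norm{o_{i+1}}^2-r_i^2+r_{i+1}^2 ,
\]
which is affine in \(z\) with linear part \(2(o_{i+1}-o_i)\cdot z\). The tags \(a_i,b_i\) lie on both circles, so both powers vanish there and the difference vanishes on the radical line through \(a_i,b_i\). Since \(c_i\) lies on the chord \(G_i\), the difference vanishes at \(c_i\). An affine function with that linear part and a zero at \(c_i\) must equal \(2(o_{i+1}-o_i)\cdot(z-c_i)\), which is \eqref{eq:power-affine}. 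Restricting to the line \(z=c_i+te\), the difference becomes \(2t\,(o_{i+1}-o_i)\cdot e\), so its slope is as stated.

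For the cap positions, I will use the forward direction \((o_{i+1}-o_i)\cdot e>0\). Then the power difference is strictly negative on the query line behind \(c_i\) and strictly positive ahead of it. The key step is to show that \(c_i\) lies strictly between \(w_-(O)\) and \(w_+(O)\) for each of the two disks. This holds because \(c_i\) is an interior point of the chord \(G_i\), which is a chord of both circles, so \(c_i\in\operatorname{int}O_i\cap\operatorname{int}O_{i+1}\). Consequently \(x(w_-(O_i))<x(c_i)<x(w_+(O_i))\), and the same holds for \(O_{i+1}\).

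With this, I check the four memberships by evaluating at points where one power is zero.
\begin{itemize}
\item At \(w_-(O_i)\), \(\pow_{O_i}=0\) and the difference is negative, so \(\pow_{O_{i+1}}>0\) and the point lies outside \(O_{i+1}\).
\item At \(w_+(O_i)\), the difference is positive, so \(\pow_{O_{i+1}}<0\) and the point lies in \(\operatorname{int}O_{i+1}\).
\item At \(w_-(O_{i+1})\), \(\pow_{O_{i+1}}=0\) and the difference is negative, so \(\pow_{O_i}<0\) and the point lies in \(\operatorname{int}O_i\).
\item At \(w_+(O_{i+1})\), the difference is positive, so \(\pow_{O_i}>0\) and the point lies outside \(O_i\).
\end{itemize}

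The cap reformulation then follows. The outgoing cap on \(O_i\) is the closed arc of \(\partial O_i\) inside \(O_{i+1}\), and its relative interior is the set of points of \(\partial O_i\) in \(\operatorname{int}O_{i+1}\), because proper crossing makes the arc endpoints exactly \(a_i,b_i\). A point strictly inside \(O_{i+1}\) is therefore in the relative interior, and a point not in \(O_{i+1}\) is excluded. The argument for \(O_{i+1}\) is symmetric. The only delicate step is confirming that \(c_i\) separates \(w_\pm\) on each circle; the observation that \(c_i\) is interior to a common chord settles it, so I expect no real obstacle.
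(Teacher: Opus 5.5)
Your proposal is correct and follows the paper's proof essentially step for step. The identity comes from affineness plus vanishing at \(c_i\), the ordering \(x(w_-)<x(c_i)<x(w_+)\) on each circle comes from \(c_i\) being interior to the common chord, and the four memberships come from sign readings at zeros of one power. The only small difference is in the relative-interior claim: you identify the cap's relative interior with \(\partial O_i\cap\operatorname{int}O_{i+1}\) using the two-point intersection, where the paper uses a continuity argument; the two are equivalent.
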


\begin{proof}
For any point \(z\in\R^2\), expanding the squared distances gives
\[
\begin{aligned}
&\pow_{O_i}(z)-\pow_{O_{i+1}}(z)\\
&\quad=\bigl(\norm{z-o_i}^2-r_i^2\bigr)
       -\bigl(\norm{z-o_{i+1}}^2-r_{i+1}^2\bigr)\\
&\quad=2(o_{i+1}-o_i)\cdot z
       +\norm{o_i}^2-\norm{o_{i+1}}^2-r_i^2+r_{i+1}^2.
\end{aligned}
\]
The \(\norm z^2\) terms cancel, so the difference is affine in \(z\).  At both \(a_i\) and \(b_i\), the two powers are zero because these points lie on both circles.  Along the chord, their difference varies linearly between its two endpoint values, both zero.  It therefore vanishes throughout \(G_i\), including at \(c_i\).  Substituting \(z=c_i\) in the expansion above gives
\[
\norm{o_i}^2-\norm{o_{i+1}}^2-r_i^2+r_{i+1}^2
=-2(o_{i+1}-o_i)\cdot c_i.
\]
Using this expression for the constant term gives \eqref{eq:power-affine}.

On the query line, write \(z=c_i+te\), where \(t\in\R\) is the signed distance from \(c_i\) in the forward direction \(e\).  Then \eqref{eq:power-affine} becomes
\[
\pow_{O_i}(c_i+te)-\pow_{O_{i+1}}(c_i+te)
=2t\,(o_{i+1}-o_i)\cdot e.
\]
This proves the slope assertion.  For the remaining claims, assume \((o_{i+1}-o_i)\cdot e>0\).  The power difference then increases strictly along the query line and is zero there exactly at \(c_i\).

Since \(c_i\) is interior to a chord of \(O_i\), it lies inside \(O_i\), and hence
\[
x(w_-(O_i))<x(c_i)<x(w_+(O_i)).
\]
At these two points, \(\pow_{O_i}(w_\pm(O_i))=0\) and
\(w_\pm(O_i)-c_i=\bigl(x(w_\pm(O_i))-x(c_i)\bigr)e\).  Solving \eqref{eq:power-affine} for the power with respect to the next disk gives
\[
\begin{aligned}
\pow_{O_{i+1}}(w_-(O_i))
&=-2\bigl(x(w_-(O_i))-x(c_i)\bigr)(o_{i+1}-o_i)\cdot e>0,\\
\pow_{O_{i+1}}(w_+(O_i))
&=-2\bigl(x(w_+(O_i))-x(c_i)\bigr)(o_{i+1}-o_i)\cdot e<0.
\end{aligned}
\]

The same crossing \(c_i\) is also interior to a chord of \(O_{i+1}\), so
\[
x(w_-(O_{i+1}))<x(c_i)<x(w_+(O_{i+1})).
\]
At these query points \(\pow_{O_{i+1}}(w_\pm(O_{i+1}))=0\).  Solving \eqref{eq:power-affine} for the power with respect to the preceding disk now gives
\[
\begin{aligned}
\pow_{O_i}(w_-(O_{i+1}))
&=2\bigl(x(w_-(O_{i+1}))-x(c_i)\bigr)(o_{i+1}-o_i)\cdot e<0,\\
\pow_{O_i}(w_+(O_{i+1}))
&=2\bigl(x(w_+(O_{i+1}))-x(c_i)\bigr)(o_{i+1}-o_i)\cdot e>0.
\end{aligned}
\]
Negative power means that a point is strictly inside the corresponding disk, and positive power means that it is outside.  The four signs give the stated containments.  By continuity along each circle, a strictly negative power also places a small arc around the query point inside the neighboring disk, giving relative interior membership in the cap.
\end{proof}

\begin{lemma}[rail order and lengths]\label{lem:rail-order}
Let \(O_{i-1},O_i,O_{i+1}\) be three disks whose two neighboring pairs satisfy \Cref{def:class}(i)--(iii), with gates \(G_{i-1}=[a_{i-1},b_{i-1}]\) and \(G_i=[a_i,b_i]\).  Suppose the incoming and outgoing caps on \(O_i\) are strongly disjoint, and write \(w_\pm=w_\pm(O_i)\).  Along the upper and lower arcs of \(\partial O_i\) from \(w_-\) to \(w_+\), the endpoints occur in the orders
\[
\begin{aligned}
\text{upper arc:}\quad &w_-\ \longrightarrow\ a_{i-1}\ \longrightarrow\ a_i\ \longrightarrow\ w_+,\\
\text{lower arc:}\quad &w_-\ \longrightarrow\ b_{i-1}\ \longrightarrow\ b_i\ \longrightarrow\ w_+,
\end{aligned}
\]
where the two middle endpoints may coincide.  The arcs between them are exactly the upper and lower rails.

Writing \(\lvert\widehat{pq}\rvert\) for arclength along the indicated upper or lower arc of \(\partial O_i\), in the direction from \(w_-\) to \(w_+\), the rail lengths are
\[
\begin{aligned}
\bigl\lvert\widehat{a_{i-1}a_i}\bigr\rvert
&=\bigl\lvert\widehat{w_-a_i}\bigr\rvert
 -\bigl\lvert\widehat{w_-a_{i-1}}\bigr\rvert,\\
\bigl\lvert\widehat{b_{i-1}b_i}\bigr\rvert
&=\bigl\lvert\widehat{w_-b_i}\bigr\rvert
 -\bigl\lvert\widehat{w_-b_{i-1}}\bigr\rvert.
\end{aligned}
\]
Each rail length is thus the difference of two arclength coordinates measured from the same point \(w_-\) on the same circle.  When the two rail endpoints coincide, that difference is zero.
\end{lemma}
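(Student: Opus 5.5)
The plan is to locate the four points \(a_{i-1},b_{i-1},a_i,b_i\) on \(\partial O_i\) relative to \(w_\pm\) and the query line, and then read off the rails from strong cap-disjointness.

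\textbf{Upper and lower tags.} Since \(y(a_j)>0>y(b_j)\) by \Cref{def:class}(ii), each upper tag lies on the open upper arc of \(\partial O_i\) from \(w_-\) to \(w_+\), and each lower tag lies on the open lower arc. The line meets the circle only at \(w_\pm\), so these two open arcs are exactly the parts of the circle strictly above and strictly below it.

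\textbf{The two caps.} Next I identify the caps using \Cref{lem:power-caps}.
\begin{itemize}
\item \emph{Outgoing cap.} Apply the lemma to the pair \((O_i,O_{i+1})\). The outgoing cap is the closed arc of \(\partial O_i\) inside \(O_{i+1}\). Its endpoints are \(a_i,b_i\), one above and one below the line, and it contains \(w_+\) in its relative interior. Hence it is the arc \(a_i\to w_+\to b_i\) through \(w_+\), and it excludes \(w_-\).
\item \emph{Incoming cap.} Apply the lemma to the pair \((O_{i-1},O_i)\). Here \(w_-(O_i)\in\operatorname{int}O_{i-1}\) and \(w_+(O_i)\notin O_{i-1}\). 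So the incoming cap is the arc \(a_{i-1}\to w_-\to b_{i-1}\) through \(w_-\), and it excludes \(w_+\).
\end{itemize}

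\textbf{Ordering on each arc.} Parametrize the upper arc by arclength from \(w_-\). The incoming cap meets the upper arc in \([w_-,a_{i-1}]\), and the outgoing cap meets it in \([a_i,w_+]\). Strong cap-disjointness says the relative interiors of the caps are disjoint and no endpoint of one lies in the relative interior of the other. Therefore \(a_i\) cannot lie strictly before \(a_{i-1}\): otherwise \(a_i\) would be an interior point of the incoming cap. This gives the order \(w_-\to a_{i-1}\to a_i\to w_+\), with equality \(a_{i-1}=a_i\) possible. The lower arc is handled symmetrically.

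\textbf{Rails and lengths.} The complement of the two caps consists of the two arcs between them, which are \([a_{i-1},a_i]\) on the upper arc and \([b_{i-1},b_i]\) on the lower arc; these are exactly the rails. The length formulas then follow from additivity of arclength along an arc with ordered points.

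\textbf{Main obstacle.} The delicate step is the degenerate case \(a_{i-1}=a_i\). I must check that the rail is the zero-length arc, not the full circle, consistently with the convention in \Cref{subsec:chains}. The ordering argument settles this: the rail is the arc between the two caps that avoids both \(w_\pm\), and when the endpoints coincide that arc is a single point.
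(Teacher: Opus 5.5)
Your proposal is correct and follows essentially the same route as the paper: locate the incoming and outgoing caps through \Cref{lem:power-caps}, obtain the order of the tags from strong cap-disjointness, and read off the rail lengths by additivity of arclength. The only minor difference is that you invoke the endpoint clause of strong cap-disjointness ($a_i$ would lie in the relative interior of the incoming cap), whereas the paper uses the disjoint-relative-interiors clause (the open arc between the two tags would lie in both caps); either clause suffices.
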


\begin{proof}
By \Cref{lem:power-caps}, the incoming cap contains \(w_-\) in its relative interior and excludes \(w_+\), while the outgoing cap contains \(w_+\) in its relative interior and excludes \(w_-\).

Follow the upper arc of \(\partial O_i\) from \(w_-\) to \(w_+\).  By \Cref{def:class}(ii), this arc meets \(\partial O_{i-1}\) only at \(a_{i-1}\) and \(\partial O_{i+1}\) only at \(a_i\).  Each power is continuous along the arc and can change sign only at its zero.  The incoming cap therefore occupies the part from \(w_-\) to \(a_{i-1}\), while the outgoing cap occupies the part from \(a_i\) to \(w_+\).  On the lower arc, the respective intersections are \(b_{i-1}\) and \(b_i\), so the incoming and outgoing parts run from \(w_-\) to \(b_{i-1}\) and from \(b_i\) to \(w_+\).

Strong cap-disjointness determines the order of these endpoints.  If \(a_i\) came before \(a_{i-1}\) on the upper arc, the open arc between them would lie in the relative interiors of both caps.  An ordering with \(b_i\) before \(b_{i-1}\) would give the same overlap on the lower arc.  This proves the two stated orders, allowing the middle endpoints to coincide.  The rails are the arcs between the caps, so additivity of arclength along each side gives the two displayed differences, including the zero-length case.
\end{proof}

\begin{definition}[step]\label{def:ops}
A \emph{step} from \(s_i=(O_{i+1},G_i,\delta_i)\) chooses a disk \(O_{i+2}\) such that \((O_{i+1},O_{i+2})\) is a properly crossing forward pair at a gate \(G_{i+1}\) of length \(g_{i+1}>0\) whose crossing satisfies \(x(c_{i+1})>x(c_i)\), with the outgoing cap on \(O_{i+1}\) strongly disjoint from the stored incoming cap.  Let \(\alpha_{i+1},\beta_{i+1}\ge0\) be the lengths of the upper and lower rails on \(O_{i+1}\) identified by \Cref{lem:rail-order}, connecting \(a_i\) to \(a_{i+1}\) and \(b_i\) to \(b_{i+1}\), respectively.  Let \(\Delta x_i=x(c_{i+1})-x(c_i)>0\).
\end{definition}

On each disk, its two rails are the only edges connecting the earlier and later parts of the ladder.  Removing them separates these parts, so they form a cut with two edges.  This also applies to the rails incident with the source or terminal.

\begin{lemma}[two-edge cut]\label{lem:cut}
Consider a prefix ladder with a target at either endpoint of its last gate, or a complete ladder with the terminal as target.  Every simple path from the source to the target uses exactly one of the two rails on each disk, traversing it from the earlier part to the later part of the ladder.  With nonnegative edge weights, some shortest such walk is a simple path.
\end{lemma}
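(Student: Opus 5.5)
The plan is to analyze the ladder as a layered graph.  Write its vertex set as a disjoint union of \emph{levels}: level \(0\) is the source vertex \(u\) (its two tags identified); level \(i\), for \(1\le i\le n-1\), is the pair \(\{a_i,b_i\}\) joined by the gate edge \(a_ib_i\); and level \(n\) is the terminal \(v\).  By construction the rails of disk \(O_i\) are the only edges joining level \(i-1\) to level \(i\); one joins \(a_{i-1}\) to \(a_i\) and the other \(b_{i-1}\) to \(b_i\).  Every other edge is a gate and stays inside a single level.  All tags are distinct vertices, even when they coincide geometrically, so this is a purely combinatorial statement about the graph and needs no geometry beyond the definition of the ladder.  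In a prefix ladder at gate \(i\) only levels \(0,\dots,i\) exist, and the target lies in level \(i\).

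First I will show that the two rails of \(O_j\) form an edge cut separating levels \(<j\) from levels \(\ge j\).  Deleting them leaves no edge between the two sides, since every remaining edge either lies within one level or joins levels \(k-1,k\) with \(k\ne j\).  Next, take a simple path from \(u\) (level \(0\)) to a target at the top level \(m\), where \(m=i\) for a prefix ladder and \(m=n\) for the complete ladder.  Define the level function along the path.  It starts at \(0\), ends at \(m\), and changes by at most \(1\) per edge, changing exactly when a rail is traversed.  Each crossing of cut \(j\) in the increasing direction must be preceded by a return crossing, unless it is the first.  So the path crosses cut \(j\) an odd number of times, with upward crossings exceeding downward ones by exactly one.

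The key point is that a simple path cannot cross cut \(j\) three or more times, because the cut has only two edges and a simple path uses each edge at most once.  Hence it crosses exactly once, via exactly one of the two rails of \(O_j\), and traverses it upward, from the earlier to the later part of the ladder.  For the extreme disks this is the same argument.  On \(O_1\) both rails start at the single vertex \(u\), and on \(O_n\) both end at \(v\); these are still two distinct edges forming a cut.  The only subtlety I expect is the bookkeeping at \(u\) and \(v\), where two tags represent one vertex.  There I must check that the two rails remain distinct parallel-type edges, so that the count "at most two crossings" still holds.  Zero-length rails need no special treatment.

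For the final sentence I will use that, with nonnegative weights, any shortest walk can be shortcut to a simple path.  If a vertex repeats, deleting the closed subwalk between its two occurrences does not increase the length.  Iterating terminates in a simple path of length at most the walk's, which is therefore also shortest.  This step is routine; the main work is the parity and counting argument for the cut.
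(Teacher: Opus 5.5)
Your proof is correct and follows the same route as the paper: the two rails of each disk form a two-edge cut between the earlier and later parts, and a simple path must cross that cut an odd number of times while using each edge at most once, so it crosses exactly once and in the forward direction. The last claim also follows by cycle deletion as you describe. The only step the paper makes explicit that you take for granted is that a shortest walk exists at all: every walk shortcuts to a simple path of no greater length, and there are only finitely many simple paths, so a minimal simple path is a shortest walk.
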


\begin{proof}
Fix a disk in the ladder and remove its two rail edges.  The remaining graph has an earlier part containing the source and a later part containing the target.  The target is in the later part because it lies at the last gate of the prefix, or is the terminal of the complete ladder.  The two removed rails are the only edges joining these parts.

Take any simple path from the source to the target.  Each time it traverses one of these two rails, it changes from one part to the other.  It starts in the earlier part and ends in the later part, so it must traverse the cut an odd number of times.  A simple path repeats no vertex and hence uses each edge at most once.  Since the cut has only two edges, it can be traversed at most twice.  The only odd possibility is one traversal, which goes from the earlier part to the later part.  Applying this argument to each disk proves the assertion about every simple path.

To prove the shortest-walk assertion, start with any walk from the source to the target.  If it visits a vertex twice, the portion between those visits starts and ends at the same vertex.  Deleting that portion preserves the endpoints of the walk.  Its length is nonnegative, so the deletion does not increase the total length, even when the deleted portion has length zero.  Repeating this operation produces a simple path of length at most that of the original walk.

The ladder is a finite graph, so it has only finitely many simple paths between the source and target.  Choose one of minimum length.  Every walk reduces to a simple path of at most its length, and that simple path is at least as long as the chosen minimum.  Thus the chosen path is also a shortest walk among all walks from the source to the target.
\end{proof}

The recursion below uses distances to the endpoints of the last gate of the current prefix ladder, as required by \Cref{lem:cut}.  In a completed ladder, a path to an earlier gate may visit later gates and return; for such a target, the source and target lie on the same side of a later cut.

\begin{lemma}[exact updates]\label{lem:updates}
For a prefix ending at gate \(i\), a step as in \Cref{def:ops} gives the new shortest distances
\begin{equation}\label{eq:minplus}
\begin{aligned}
d^+_{i+1}&=\min\bigl(d^+_i+\alpha_{i+1},\ d^-_i+\beta_{i+1}+g_{i+1}\bigr),\\
d^-_{i+1}&=\min\bigl(d^-_i+\beta_{i+1},\ d^+_i+\alpha_{i+1}+g_{i+1}\bigr).
\end{aligned}
\end{equation}
\end{lemma}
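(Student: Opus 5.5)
We prove the two displayed identities by establishing the inequality ``\(\le\)'' and the inequality ``\(\ge\)'' separately, working in the prefix ladder at gate \(i+1\). This ladder is obtained from the prefix ladder at gate \(i\) by adding the two rails of \(O_{i+1}\), of lengths \(\alpha_{i+1}\) and \(\beta_{i+1}\) (identified by \Cref{lem:rail-order}), together with the gate edge \(a_{i+1}b_{i+1}\) of weight \(g_{i+1}\). By symmetry between the upper and lower sides it suffices to treat \(d^+_{i+1}\).

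For ``\(\le\)'' we exhibit two walks to \(a_{i+1}\). The first is a shortest walk to \(a_i\) in the old prefix followed by the upper rail. The second is a shortest walk to \(b_i\), then the lower rail to \(b_{i+1}\), then the new gate. The old prefix is a subgraph of the new one, so both walks are available and have lengths \(d^+_i+\alpha_{i+1}\) and \(d^-_i+\beta_{i+1}+g_{i+1}\).

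For ``\(\ge\)'' we invoke \Cref{lem:cut}. A shortest walk to \(a_{i+1}\) may be taken to be a simple path, and it uses exactly one rail of \(O_{i+1}\), traversed from the earlier part to the later part. Here the later part consists only of the two tags \(a_{i+1},b_{i+1}\), joined by the new gate. There are two cases.

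\emph{Upper rail.} The path arrives at \(a_{i+1}\) by that rail and ends there, since it is simple and \(a_{i+1}\) is the target. Its initial segment runs from \(u\) to \(a_i\) and lies entirely in the earlier part, which is exactly the old prefix ladder at gate \(i\). Hence that segment has length at least \(d^+_i\), and the whole path has length at least \(d^+_i+\alpha_{i+1}\).

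\emph{Lower rail.} The path reaches \(b_{i+1}\) and then must cross the gate to reach \(a_{i+1}\). By the same argument its length is at least \(d^-_i+\beta_{i+1}+g_{i+1}\).

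Taking the minimum over the two cases gives the reverse inequality.

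The main point requiring care is the claim that the earlier part of the cut, that is the graph with the two rails of \(O_{i+1}\) removed and restricted to the component of \(u\), coincides with the prefix ladder at gate \(i\). This guarantees that the distance to \(a_i\) or \(b_i\) computed within it really is \(d^\pm_i\), even though the tags may coincide geometrically. The justification is that the ladder is defined combinatorially on tags: the only edges incident to \(a_{i+1},b_{i+1}\) are the two new rails and the new gate. Zero-length rails are harmless, since they are still edges between distinct tags. With that settled, the rest is the routine case analysis above.
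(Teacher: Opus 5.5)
Your proof is correct and takes the same approach as the paper. The upper bounds come from the same two concatenated walks. The lower bounds come from \Cref{lem:cut}: a simple shortest path crosses exactly one rail of \(O_{i+1}\), forward, and you split into the same two cases. The only differences are cosmetic: you handle \(d^-_{i+1}\) by upper/lower symmetry where the paper writes it out, and you state explicitly that the earlier side of the cut is exactly the old prefix ladder, which the paper uses implicitly.
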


\begin{proof}
The enlarged prefix adds three edges to the old prefix: the upper rail from \(a_i\) to \(a_{i+1}\), of length \(\alpha_{i+1}\); the lower rail from \(b_i\) to \(b_{i+1}\), of length \(\beta_{i+1}\); and the new gate from \(a_{i+1}\) to \(b_{i+1}\), of length \(g_{i+1}\).  Recall that \(d_i^+\) and \(d_i^-\) are distances in the old prefix from the source \(u\) to \(a_i\) and \(b_i\), respectively.

We first compute the distance to the new upper endpoint \(a_{i+1}\).  By \Cref{lem:cut}, a shortest walk to this endpoint can be chosen to be a simple path.  It uses exactly one of the two new rails, in the forward direction.  Before that crossing it stays in the old prefix; afterwards it stays on the side consisting of the two new endpoints and their gate.

If the path uses the upper rail, its part in the old prefix ends at \(a_i\) and has length at least \(d_i^+\).  Traversing the rail adds \(\alpha_{i+1}\) and reaches the target, so the total length is at least \(d_i^++\alpha_{i+1}\).

If the path uses the lower rail, its part in the old prefix ends at \(b_i\) and has length at least \(d_i^-\).  The rail adds \(\beta_{i+1}\) and reaches \(b_{i+1}\).  On this side of the cut, the only edge leading to the target \(a_{i+1}\) is the new gate.  The path therefore also pays \(g_{i+1}\), giving a total length of at least \(d_i^-+\beta_{i+1}+g_{i+1}\).  These two cases exhaust the possible rail crossings, so
\[
d^+_{i+1}\ge
\min\bigl(d^+_i+\alpha_{i+1},\ d^-_i+\beta_{i+1}+g_{i+1}\bigr).
\]

Both candidate lengths can also be realized.  A shortest path in the old prefix from \(u\) to \(a_i\), followed by the upper rail, has length \(d_i^++\alpha_{i+1}\).  A shortest path in the old prefix from \(u\) to \(b_i\), followed by the lower rail and then the new gate, has length \(d_i^-+\beta_{i+1}+g_{i+1}\).  Both are walks to \(a_{i+1}\) in the enlarged prefix.  Its shortest-path distance is at most either length, so
\[
d^+_{i+1}\le
\min\bigl(d^+_i+\alpha_{i+1},\ d^-_i+\beta_{i+1}+g_{i+1}\bigr).
\]
The two inequalities prove the first line of \eqref{eq:minplus}.

For the new lower endpoint \(b_{i+1}\), choose a simple shortest path and apply \Cref{lem:cut} again.  If it uses the lower rail, it reaches the target directly and has length at least \(d_i^-+\beta_{i+1}\).  If it uses the upper rail, it reaches \(a_{i+1}\) and then traverses the new gate to \(b_{i+1}\), giving length at least \(d_i^++\alpha_{i+1}+g_{i+1}\).  Hence
\[
d^-_{i+1}\ge
\min\bigl(d^-_i+\beta_{i+1},\ d^+_i+\alpha_{i+1}+g_{i+1}\bigr).
\]
A shortest old-prefix path to \(b_i\) followed by the lower rail realizes the first length.  A shortest old-prefix path to \(a_i\) followed by the upper rail and the new gate realizes the second.  Thus
\[
d^-_{i+1}\le
\min\bigl(d^-_i+\beta_{i+1},\ d^+_i+\alpha_{i+1}+g_{i+1}\bigr),
\]
which proves the second line of \eqref{eq:minplus}.
\end{proof}

We now express this update in terms of the ledger and balance.  Recall from \eqref{eq:ledger-def} that \(d^+_i=\mu_i+\delta_i/2\) and \(d^-_i=\mu_i-\delta_i/2\).  We use \(t_+=\max(t,0)\) and \(\clip_{[a,b]}(t)=\max\bigl(a,\min(b,t)\bigr)\).  For any reals \(a,b\) and \(g\ge0\),
\[
\begin{aligned}
\min(a,b+g)-\min(b,a+g)&=\clip_{[-g,g]}(a-b),\\
\min(a,b+g)+\min(b,a+g)&=a+b-\bigl(|a-b|-g\bigr)_+.
\end{aligned}
\]
These identities follow by considering \(a-b<-g\), \(|a-b|\le g\), and \(a-b>g\); the expressions agree at the two boundaries.  In \eqref{eq:minplus}, take \(a=d^+_i+\alpha_{i+1}\), \(b=d^-_i+\beta_{i+1}\), and \(g=g_{i+1}\).  Their difference is \(\delta_i+\alpha_{i+1}-\beta_{i+1}\), and their sum is \(2\mu_i+\alpha_{i+1}+\beta_{i+1}\).  Thus the balance update and the ledger increment are
\begin{equation}\label{eq:update}
\begin{aligned}
\omega_{i+1}&=\bigl(|\delta_i+\alpha_{i+1}-\beta_{i+1}|-g_{i+1}\bigr)_+,\\
\delta_{i+1}&=\clip_{[-g_{i+1},g_{i+1}]}(\delta_i+\alpha_{i+1}-\beta_{i+1}),\\
j_i&=\frac{\alpha_{i+1}+\beta_{i+1}-\omega_{i+1}}2.
\end{aligned}
\end{equation}
The sum of the updated distances is \(a+b-\omega_{i+1}\); halving it gives the new ledger \(\mu_i+j_i\).  Consequently,
\begin{equation}\label{eq:sym-update}
\mu_{i+1}=\mu_i+j_i,
\qquad
\delta_{i+1}=\clip_{[-g_{i+1},g_{i+1}]}(\delta_i+\alpha_{i+1}-\beta_{i+1}).
\end{equation}
The next state is \(s_{i+1}=(O_{i+2},G_{i+1},\delta_{i+1})\).  The clipping gives \(|\delta_{i+1}|\le g_{i+1}\).

For fixed rail and gate lengths, \eqref{eq:update} shows that the balance \(\delta_i\) determines the new balance \(\delta_{i+1}\) and the ledger increment \(j_i\).  The ledger is then updated by \(\mu_{i+1}=\mu_i+j_i\).  This explains why the gate state records \(\delta_i\) while \(\mu_i\) is kept separately.

\paragraph{Initialization.}
An \emph{initialization} chooses a properly crossing forward pair \((O_1,O_2)\) at a gate \(G_1\) and takes \(u=w_-(O_1)\).  By \Cref{lem:power-caps}, \(u\) lies strictly outside \(O_2\), so it is an allowed source.  With \(\alpha_1,\beta_1\) denoting the lengths of the two rails on \(O_1\) from \(u\) to the tags of \(G_1\), the first two distances are
\[
d^+_1=\min(\alpha_1,\beta_1+g_1),
\qquad
d^-_1=\min(\beta_1,\alpha_1+g_1).
\]
Each minimum compares reaching the endpoint along its own rail with taking the other rail and then the gate.  Applying the same sum and difference identities gives
\begin{equation}\label{eq:init}
\omega_1=\bigl(|\alpha_1-\beta_1|-g_1\bigr)_+,
\quad
\delta_1=\clip_{[-g_1,g_1]}(\alpha_1-\beta_1),
\quad
\mu_1=\frac{\alpha_1+\beta_1-\omega_1}2 .
\end{equation}
The initial state is \(s_1=(O_2,G_1,\delta_1)\), with ledger \(\mu_1\).

\paragraph{Termination.}
A \emph{termination} at \(s=(O,G,\delta)\) takes \(v=w_+(O)\).  This point lies strictly outside every admissible predecessor by \Cref{lem:power-caps}, so it is an allowed terminal.  Let \(\alpha_{\mathrm{end}},\beta_{\mathrm{end}}\) be the lengths of the two rails from the tags of \(G\) to \(v\).  If the prefix distances are \(\mu+\delta/2\) and \(\mu-\delta/2\), the two candidate lengths to \(v\) are
\[
\mu+\frac\delta2+\alpha_{\mathrm{end}},
\qquad
\mu-\frac\delta2+\beta_{\mathrm{end}}.
\]
By \Cref{lem:cut}, the shortest completed path has the smaller of these lengths.  Its length is \(\mu+\kappa_{\mathrm{end}}\), where
\begin{equation}\label{eq:term}
\kappa_{\mathrm{end}}=\min\Bigl(\frac\delta2+\alpha_{\mathrm{end}},\ -\frac\delta2+\beta_{\mathrm{end}}\Bigr).
\end{equation}
For a chain ending after gate \(n-1\), this is
\begin{equation}\label{eq:total}
P_{\cO}(u,v)=\mu_{n-1}+\kappa_{\mathrm{end}}.
\end{equation}

These formulas hold when a rail has length zero, when \(|\delta_i|=g_i\), when \(\delta_i+\alpha_{i+1}-\beta_{i+1}=\pm g_{i+1}\), and when two shortest paths tie.

\subsection{Three local inequalities}\label{subsec:three}

\begin{definition}[feasible potential]\label{def:feasible}
Let \(U\in\R\).  A \emph{feasible potential at level \(U\)} is a finite real-valued function \(W\) on gate states, invariant under orientation-preserving isometries fixing the query line and homogeneous of degree one under positive scaling of all lengths and of \(\delta\), such that the following hold on their entire domains.
\begin{align}
\mu_1-U\bigl(x(c_1)-x(u)\bigr)+W(s_1)&\le0
&&\text{for every initialization,}\tag{I}\label{eq:I}\\
j_i-U\,\Delta x_i+W(s_{i+1})-W(s_i)&\le0
&&\text{for every state and every step,}\tag{E}\label{eq:E}\\
\kappa_{\mathrm{end}}-U\,L_+-W(s)&\le0
&&\text{for every state and its termination.}\tag{T}\label{eq:T}
\end{align}
\end{definition}

The three families range over the full domain of gate states, including every \(\delta\in[-g,g]\).  This uniform requirement gives the sufficient condition used below.

\begin{theorem}\label{thm:sufficient}
Let \(U\ge\pi/2\) and let \(W\) be a feasible potential at level \(U\).  Then every chain in \(\mathcal Q\) satisfies \(P_{\cO}(u,v)\le U\norm{u-v}\).
\end{theorem}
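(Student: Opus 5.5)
Fix a chain \(\cO=(O_1,\dots,O_n)\) in \(\mathcal Q\).  The plan is to telescope (I), (E) and (T) along the chain, with a separate treatment of \(n=1\) and of the terminals.

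For \(n=1\) the ladder is a single disk with \(u\ne v\) on its circle.  Then \eqref{eq:one-disk} and \(U\ge\pi/2\) give \(P_{\cO}(u,v)\le\frac\pi2\norm{u-v}\le U\norm{u-v}\); this is the only place the hypothesis \(U\ge\pi/2\) is used.

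For \(n\ge2\) I want the chain to start with an initialization and end with a termination.  The obstacle is that (I) and (T) are stated only for the special terminals \(u=w_-(O_1)\) and \(v=w_+(O_n)\), whereas \Cref{def:class} only asks that \(u,v\) lie on the query line with \(u\notin\operatorname{int}O_2\) and \(v\notin\operatorname{int}O_{n-1}\).  Since \(u\) lies on \(\partial O_1\) and on the query line, it is \(w_-(O_1)\) or \(w_+(O_1)\).  By \Cref{lem:power-caps}, \(w_+(O_1)\in\operatorname{int}O_2\), so \(u=w_-(O_1)\).  Symmetrically, \(w_-(O_n)\in\operatorname{int}O_{n-1}\), so \(v=w_+(O_n)\).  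The degenerate case \(w_-=w_+\) cannot occur, because the crossing \(c\) is interior to the circle.  Hence the pair \((O_1,O_2)\) is an initialization.  Each transition \((O_{i+1},O_{i+2})\) for \(1\le i\le n-2\) is a step in the sense of \Cref{def:ops}, with proper crossing, forward direction, \(x(c_{i+1})>x(c_i)\), and strong cap-disjointness all coming from \Cref{def:class} and the chain axioms.  The final state \(s_{n-1}=(O_n,G_{n-1},\delta_{n-1})\) terminates at \(v\).  I also check that the rail lengths \(\alpha_1,\beta_1\), \(\alpha_{i+1},\beta_{i+1}\) and \(\alpha_{\mathrm{end}},\beta_{\mathrm{end}}\) used in the recursion are exactly the rails of the ladder, which is \Cref{lem:rail-order} (with one side replaced by a terminal at the ends).

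With the chain set up this way, \Cref{lem:updates}, \eqref{eq:init}, \eqref{eq:sym-update} and \eqref{eq:total} give
\[
P_{\cO}(u,v)=\mu_{n-1}+\kappa_{\mathrm{end}}=\mu_1+\sum_{i=1}^{n-2}j_i+\kappa_{\mathrm{end}}.
\]
Now add (I) at \(s_1\), (E) for \(i=1,\dots,n-2\), and (T) at \(s_{n-1}\).  The potential terms \(W(s_1),\dots,W(s_{n-1})\) cancel, and what remains is
\[
P_{\cO}(u,v)\le U\Bigl(\bigl(x(c_1)-x(u)\bigr)+\sum_{i=1}^{n-2}\Delta x_i+L_+\Bigr).
\]
Here \(L_+=x(w_+(O_n))-x(c_{n-1})=x(v)-x(c_{n-1})\).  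By \eqref{eq:progress} the bracket equals \(\norm{u-v}\), which is the claimed bound.

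I expect the main care to be in the identification step rather than the algebra.  Specifically, I must confirm that every configuration in \(\mathcal Q\), including zero-length rails, ties, and repeated disks, falls inside the stated domains of (I), (E) and (T).  Since those domains are the full sets of gate states, steps and terminations, this reduces to matching the chain conditions against \Cref{def:ops} and the initialization and termination definitions, as outlined above.
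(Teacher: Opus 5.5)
Your proposal is correct and is essentially the paper's proof. The paper carries \eqref{eq:I} forward with one instance of \eqref{eq:E} per step as an induction invariant and closes with \eqref{eq:T}, using \eqref{eq:total} and \eqref{eq:progress}, and it handles \(n=1\) by \eqref{eq:one-disk} and \(U\ge\pi/2\); this is the same telescoping you carry out in a single sum. You also check explicitly, via \Cref{lem:power-caps}, that the terminals of a chain in \(\mathcal Q\) must be \(u=w_-(O_1)\) and \(v=w_+(O_n)\), so that the ends of the chain really are an initialization and a termination; the paper leaves this implicit, and your argument for it is right.
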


\begin{figure}[t]
\centering
\begingroup
\input{figure_assets/figure_style}
\begin{tikzpicture}[paper figure,
  potential/.style={draw=figBlue!55,fill=figBlue!8,line width=.55pt,
    rounded corners=1.5pt,minimum height=.59cm,inner xsep=7pt},
  comparison/.style={font=\fontsize{9}{11}\selectfont,fill=white,inner sep=3pt},
  cost/.style={draw=figOrange,line width=.8pt,-{Latex[length=1.8mm]}}]
\path[use as bounding box] (0,0) rectangle (16,8.2);
\node[fheading] at (.15,7.94) {(a)\quad Local comparisons along the query line};

% A schematic chain with n >= 4: the middle interval groups the remaining
% steps, so its sum is exactly the missing part of the telescoping series.
\coordinate (u) at (.65,4.25);
\coordinate (c1) at (3.15,4.25);
\coordinate (c2) at (6.65,4.25);
\coordinate (cn) at (11.85,4.25);
\coordinate (v) at (15.25,4.25);
\node[potential] (w1) at (3.15,6.55) {$W(s_1)$};
\node[potential] (w2) at (6.65,6.55) {$W(s_2)$};
\node[potential] (wn) at (11.85,6.55) {$W(s_{n-1})$};
\node[text=figLight] at (.65,6.55) {$0$};
\node[text=figLight] at (15.25,6.55) {$0$};
\draw[fflow] (.90,6.55)--(w1.west);
\draw[fflow] (w1.east)--(w2.west);
\draw[fflow] (w2.east)--(wn.west);
\draw[fflow] (wn.east)--(15,6.55);
\node[comparison] at (1.83,7.13) {$+W(s_1)$};
\node[comparison] at (4.90,7.13) {$W(s_2)-W(s_1)$};
\node[comparison] at (9.25,7.13)
  {$\sum_{i=2}^{n-2}\!\bigl[W(s_{i+1})-W(s_i)\bigr]$};
\node[comparison] at (13.67,7.13) {$-W(s_{n-1})$};
% Every displayed potential occurs with opposite signs in its two adjacent
% comparisons. A light slash records this cancellation without hiding text.
\foreach \w in {w1,w2,wn}
  \draw[figGray!45,line width=.6pt]
    ($(\w.south west)+(.10,.02)$)--($(\w.north east)+(-.10,-.02)$);
\node[fnote] at (9.25,6.28) {$\cdots$};
\foreach \x/\lab in {1.83/I,4.90/E,13.67/T}
  \node[text=figTeal] at (\x,5.84) {$(\lab)$};
\node[text=figTeal] at (9.25,5.84) {$(E)$ at each remaining step};

\draw[cost] (.65,4.72)--(3.15,4.72);
\draw[cost] (3.15,4.72)--(6.65,4.72);
\draw[cost] (6.65,4.72)--(11.85,4.72);
\draw[cost] (11.85,4.72)--(15.25,4.72);
\node[comparison] at (1.90,5.06) {$\mu_1-U\bigl[x(c_1)-x(u)\bigr]$};
\node[comparison] at (4.90,5.06) {$j_1-U\Delta x_1$};
\node[comparison] at (9.25,5.06) {$\sum_{i=2}^{n-2}(j_i-U\Delta x_i)$};
\node[comparison] at (13.55,5.06) {$\kappa_{\mathrm{end}}-UL_+$};
\draw[fquery] (.35,4.25)--(15.75,4.25);
\foreach \p/\lab in {u/u,c1/c_1,c2/c_2,cn/c_{n-1},v/v}{
  \draw[figGray,line width=.55pt] ($(\p)+(0,-.08)$)--($(\p)+(0,.08)$);
  \node[fpoint,inner sep=.95pt] at (\p) {};
  \node[anchor=north] at ($(\p)+(0,-.14)$) {$x(\lab)$};
}
\foreach \p in {c1,c2,cn}
  \draw[fguide] ($(\p)+(0,.15)$)--($(\p)+(0,.32)$);
\node[fill=white,inner xsep=6pt,text=figGray] at (9.25,4.25) {$\cdots$};

\draw[figGrid,line width=.5pt] (.15,3.44)--(15.85,3.44);
\node[fheading] at (.15,3.10) {(b)\quad Every potential cancels; length and progress remain};
\node at (8,2.04) {$\displaystyle
  \underbrace{\mu_1+\sum_{i=1}^{n-2}j_i+\kappa_{\mathrm{end}}}_{P_{\cO}(u,v)}
  -U\underbrace{\left[x(c_1)-x(u)+\sum_{i=1}^{n-2}\Delta x_i+L_+\right]}_{x(v)-x(u)=\norm{u-v}}
  \le0$};
\node[text=figTeal] at (8,.66)
  {$\displaystyle\Longrightarrow\qquad P_{\cO}(u,v)\le U\norm{u-v}$};
\end{tikzpicture}
\endgroup
\caption{The three inequalities telescope.  Adding \eqref{eq:I} at the first gate, one instance of \eqref{eq:E} per step, and \eqref{eq:T} at the terminal, every value of $W$ occurs once with each sign and cancels.  What survives is the ladder length against the total progress, and by \eqref{eq:total} and \eqref{eq:progress} that is \eqref{eq:chain-bound}.  The constant is $U$ itself: nothing is repaid at the end (\Cref{thm:sufficient}).}
\label{fig:telescope}
\end{figure}

Roughly speaking, the proof adds up the three inequalities along the chain and watches the potential cancel (\Cref{fig:telescope}).  We carry \eqref{eq:I} forward with one instance of \eqref{eq:E} per step, which is the induction below, and close with \eqref{eq:T} at the terminal.

\begin{proof}
Let \(n\ge2\).  We claim that for \(1\le i\le n-1\),
\begin{equation}\label{eq:invariant}
\mu_i-U\bigl(x(c_i)-x(u)\bigr)+W(s_i)\le0 .
\end{equation}
At \(i=1\) this is \eqref{eq:I}.  Assume it at \(i\).  The next disk of the chain is a step in the sense of \Cref{def:ops}: the pair \((O_{i+1},O_{i+2})\) crosses properly and forward by \Cref{def:class}(i),(iii), the new crossing satisfies \(x(c_{i+1})>x(c_i)\) by (iv), the two caps on \(O_{i+1}\) are strongly disjoint, and the two rails between them have the lengths given by \Cref{lem:rail-order}.  By \eqref{eq:sym-update}, \(\mu_{i+1}=\mu_i+j_i\) and \(\delta_{i+1}=\clip_{[-g_{i+1},g_{i+1}]}(\delta_i+\alpha_{i+1}-\beta_{i+1})\).  Adding the instance of \eqref{eq:E} for this step to \eqref{eq:invariant} and using \(\Delta x_i=x(c_{i+1})-x(c_i)\) gives \eqref{eq:invariant} at \(i+1\).

Now add to \eqref{eq:invariant} at \(i=n-1\) the instance of \eqref{eq:T} for the actual terminal, whose \(L_+=x(v)-x(c_{n-1})\).  The potential cancels and \eqref{eq:total} replaces the ledger by \(P_{\cO}(u,v)\), giving
\[
P_{\cO}(u,v)-U\bigl(x(v)-x(u)\bigr)\le0 .
\]
By \eqref{eq:progress} the bracket is \(\norm{u-v}\).  For \(n=2\) the same computation uses \eqref{eq:I} and \eqref{eq:T} only.  For \(n=1\) the bound \eqref{eq:one-disk} and \(U\ge\pi/2\) conclude.  All inequalities also hold at equality, including ties among shortest paths and among the two entries of a minimum.
\end{proof}

\subsection{Forgetting the predecessor}\label{subsec:forget}

A state records no predecessor.  What has to be checked is that the operations available at a state do not depend on which disk preceded it, and that nothing is lost by working with the state alone.

The candidates for a predecessor all pass through the two gate tags, so it costs nothing to describe them all at once.  Write \(M=(a_G+b_G)/2\) for the midpoint of the gate and \(\nu\) for the unit normal of \(G\) with \(\nu\cdot e>0\).  A circle through \(a_G\) and \(b_G\) has its center on the perpendicular bisector of \(G\), so those circles form the one-parameter family
\begin{equation}\label{eq:pencil}
O(X)=\Bigl(M+X\nu,\ \sqrt{X^2+(g/2)^2}\Bigr),
\qquad X\in\R,
\end{equation}
and the current disk is the member at one parameter, which we call \(X\); accordingly \(o=M+X\nu\) and \(r=\sqrt{X^2+(g/2)^2}\).

% Exact members of the circle pencil: M=(0,0), g/2=1.1.
% The enlarged right panel is the same current disk, scaled by 1.25.
\begin{figure}[t]
\centering
\begingroup%
\input{figure_assets/figure_style}%
\begin{tikzpicture}[paper figure]
\path[use as bounding box] (0,0) rectangle (16,6.6);
\node[fheading] at (.15,6.32) {(a)\quad One ordered family of predecessors};
\node[fheading] at (8.30,6.32) {(b)\quad The same cap and exposure};

% Three distinct circles sharing exactly the two gate tags.
\pgfmathsetmacro{\fiberR}{sqrt(.8^2+1.1^2)}
\begin{scope}[shift={(3.65,4.05)}]
  \draw[figLight,line width=.65pt] (-1.25,0)
    circle[radius={sqrt(1.25^2+1.1^2)}];
  \draw[figBlue!70!figGray,line width=.75pt] (-.45,0)
    circle[radius={sqrt(.45^2+1.1^2)}];
  \draw[figTeal,line width=1.15pt] (.8,0) circle[radius=\fiberR];
  \draw[fguide] (0,-1.62)--(0,1.57);
  \draw[fquery] (-3.1,0)--(2.95,0);
  \node[anchor=west] at (3.04,0) {$\nu\parallel e$};
  \draw[fgate] (0,-1.1)--(0,1.1);
  \foreach \x in {-1.25,-.45,.8} \node[fpoint,inner sep=1.15pt] at (\x,0) {};
  \node[anchor=north,inner sep=1.5pt] at (-1.02,-.20) {$o(Z)$};
  \draw[figGray,line width=.4pt] (-.45,.07)--(-.45,.20)--(-.78,.31);
  \node[anchor=south,inner sep=1.5pt] at (-1.02,.27) {$o(Y)$};
  \node[anchor=north,text=figTeal,inner sep=1.5pt] (fiber-center) at (1.37,-.28) {$o(X)$};
  \draw[figGray,line width=.4pt] (.8,-.07)--(1.15,-.20)--(fiber-center.north);
  \node[fpoint] at (0,1.1) {};
  \node[fpoint] at (0,-1.1) {};
  \node[anchor=south] at (0,1.57) {$a_G$};
  \node[anchor=north] at (0,-1.62) {$b_G$};
  \node[text=figOrange,anchor=west,inner sep=1.5pt] (fiber-gate) at (1.02,.68) {$G$};
  \draw[figGray,line width=.4pt] (0,.65)--(.83,.68)--(fiber-gate.west);
\end{scope}

% The cap is the backward arc, not a predecessor-dependent intersection.
\pgfmathsetmacro{\fiberA}{acos(-.8/\fiberR)}
\pgfmathsetmacro{\fiberW}{sqrt(\fiberR^2-.22^2)}
\begin{scope}[shift={(12.55,4.05)},scale=1.25]
  \fill[figTeal!4] (0,0) circle[radius=\fiberR];
  \fill[figBlue!17] (-.8,1.1)
    arc[start angle=\fiberA,end angle={360-\fiberA},radius=\fiberR]--cycle;
  \draw[figTeal,line width=1.15pt] (0,0) circle[radius=\fiberR];
  \draw[figBlue,line width=1.65pt] (-.8,1.1)
    arc[start angle=\fiberA,end angle={360-\fiberA},radius=\fiberR];
  \draw[fguide] (-.8,-1.42)--(-.8,1.38);
  \draw[fquery] (-2.53,-.22)--(2.38,-.22);
  \node[anchor=west] at (2.46,-.22) {$e$};
  \draw[fgate] (-.8,-1.1)--(-.8,1.1);
  \node[fpoint] at (-.8,1.1) {};
  \node[fpoint] at (-.8,-1.1) {};
  \node[anchor=south east] at (-.88,1.19) {$a_G$};
  \node[anchor=north east] at (-.88,-1.20) {$b_G$};
  \node[text=figOrange,anchor=west] at (-.68,.54) {$G$};
  \node[fpoint,inner sep=1.15pt] at (0,0) {};
  \node[anchor=south] at (0,.11) {$o(X)$};
  \node[text=figTeal] at (.46,-.78) {$O$};
  \coordinate (fiber-wminus) at (-\fiberW,-.22);
  \coordinate (fiber-wplus) at (\fiberW,-.22);
  \node[fpoint] at (fiber-wminus) {};
  \node[fpoint,fill=figTeal] at (fiber-wplus) {};
  \node[anchor=north east] at ($(fiber-wminus)+(-.05,-.11)$) {$w_-$};
  \node[anchor=north west] at ($(fiber-wplus)+(.05,-.11)$) {$w_+$};
\end{scope}
\node[fnote,align=center] (fiber-cap) at (9.51,5.12) {incoming cap};
\draw[figGray,line width=.45pt] (fiber-cap.south east)--(10.40,4.81)--(10.93,4.54);
\node[fnote,text=figTeal,align=center] (fiber-exposed) at (14.95,5.12) {always\\exposed};
\draw[figTeal,line width=.45pt] (fiber-exposed.south)--(14.57,4.14)--(fiber-wplus);

\draw[figGrid,line width=.5pt] (.15,1.95)--(15.85,1.95);
\node[fnote,align=center] at (3.78,1.30)
  {On the perpendicular bisector of $G$,\\predecessor centers lie strictly behind $o(X)$.};
\node at (3.78,.53) {$Z<Y<X$};
\node[fnote,align=center] at (12.12,1.30)
  {The same incoming cap\\for every predecessor.};
\node at (12.12,.53) {$w_+\notin O(Y)\qquad\text{for every }Y<X$};
\end{tikzpicture}%
\endgroup%
\caption{The predecessor of a state does not affect what the state can do.  (a) The circles through the fixed chord $G$ have their centers on its perpendicular bisector, and \eqref{eq:pencil} indexes them by the signed position $X$ of the center along it; a disk is an admissible predecessor exactly when its parameter is strictly behind that of the current disk.  (b) Every one of them cuts the same incoming cap out of $\partial O$, namely the arc on the far side of the supporting line of $G$, and leaves the forward query point $w_+$ outside.  So the rails a step produces, and the availability of a termination, are the same for all of them (\Cref{lem:fiber}).}
\label{fig:fiber}
\end{figure}

\begin{lemma}[the predecessor fiber]\label{lem:fiber}
The disks admissible as a predecessor of a state are exactly the members \(O(Y)\) of \eqref{eq:pencil} with \(Y<X\).  All of them cut the same incoming cap out of the current circle, namely the arc on the far side of the supporting line of \(G\).  The forward query point \(w_+\) lies strictly outside every one of them, and each of their backward query points lies strictly outside \(O\) (\Cref{fig:fiber}).
\end{lemma}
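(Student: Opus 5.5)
The plan is to reduce everything to the power-difference identity of \Cref{lem:power-caps}, applied to the current disk \(O=O(X)\) and a candidate predecessor \(O'\). First, a predecessor must meet \(\partial O\) exactly at the tags of \(G\), so its center lies on the perpendicular bisector of \(G\) and \(O'=O(Y)\) for some \(Y\). Proper crossing excludes \(Y=X\); for \(Y\neq X\) the two distinct circles through \(a_G,b_G\) cross properly because \(g>0\). The forward-direction condition \(e\cdot(o-o(Y))=(X-Y)\,\nu\cdot e>0\) holds exactly when \(Y<X\), since \(\nu\cdot e>0\). Hence the admissible predecessors are precisely the \(O(Y)\) with \(Y<X\). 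The remaining conditions of \Cref{def:class}, namely that \(G\) straddles the query line and that the crossing lies at \(c\), are properties of \(G\) alone and do not depend on \(Y\).

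For the common cap, I would apply \eqref{eq:power-affine} with the roles \(O_i=O(Y)\) and \(O_{i+1}=O\), with \(c\) replaced by \(M\) (any point of the chord works). This gives
\[
\pow_{O(Y)}(z)-\pow_O(z)=2(X-Y)\,\nu\cdot(z-M).
\]
On \(\partial O\) we have \(\pow_O=0\), so \(\pow_{O(Y)}(z)\) has the sign of \(\nu\cdot(z-M)\) whenever \(Y<X\). The incoming cap \(\partial O\cap O(Y)\) is therefore exactly the closed arc \(\{z\in\partial O:\nu\cdot(z-M)\le0\}\), which is the arc on the backward side of the line through \(G\), and this arc is independent of \(Y\).

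For the exposure claims, \Cref{lem:power-caps} applied to the forward pair \((O(Y),O)\) gives both \(w_+(O)\notin O(Y)\) and \(w_-(O(Y))\notin O\) directly. One can also read these off the displayed identity: \(w_+\) lies on \(\partial O\) with \(x(w_+)>x(c)\), and \(w_+-c\) is a positive multiple of \(e\) with \(\nu\cdot e>0\), so \(\nu\cdot(w_+-M)=\nu\cdot(w_+-c)>0\) because \(c\in G\). The power of \(w_+\) with respect to \(O(Y)\) is then strictly positive.

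The main obstacle is the bookkeeping in the first step. The statement asserts that \emph{every} \(Y<X\) is admissible, so I must confirm that no further condition of a step is secretly violated: the tags \(a_G,b_G\) keep their upper/lower naming, and strong cap-disjointness concerns only the outgoing cap on \(O\), which is decided at the next step rather than here. I would state explicitly that ``admissible'' means satisfying (i)--(iii) at gate \(G\) together with the fixed crossing, and check that each of these is either \(Y\)-independent or equivalent to \(Y<X\).
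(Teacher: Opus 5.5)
Your proposal is correct and follows essentially the same route as the paper. In both, the single identity $\pow_{O(Y)}(z)-\pow_{O(X)}(z)=2(X-Y)\,\nu\cdot(z-M)$, read for signs, gives forwardness if and only if $Y<X$ and a cap independent of $Y$, and \Cref{lem:power-caps} then gives the two exposure statements.

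The only differences are cosmetic. You derive the identity from \eqref{eq:power-affine} with $M$ in place of $c$, which is valid because the affine difference vanishes on the whole chord. You also justify proper crossing by noting that two distinct circles meet at two distinct points, where the paper cites the strict triangle inequalities in the nondegenerate triangle $o(X),o(Y),a_G$.
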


\begin{proof}
Every claim is a sign reading of one identity.  For any real \(Y\), subtracting the powers of the two members of \eqref{eq:pencil} at an arbitrary point \(z\) cancels the quadratic terms and leaves
\begin{equation}\label{eq:radical}
\pow_{O(Y)}(z)-\pow_{O(X)}(z)=2(X-Y)\,\nu\cdot(z-M).
\end{equation}
The second factor is the signed height of \(z\) above the supporting line of \(G\); the parameter \(Y\) enters only through the scalar \(X-Y\).

\emph{Which members are predecessors.}  The right side of \eqref{eq:radical} vanishes exactly on that supporting line, so two members meet only at \(a_G\) and \(b_G\).  Forwardness reads \(e\cdot\bigl(o(X)-o(Y)\bigr)=(X-Y)(e\cdot\nu)>0\), which is \(Y<X\).  Such a pair crosses properly: the two centers are distinct and neither disk contains the other, both circles passing through \(a_G\) and \(b_G\), so the two strict inequalities of \Cref{def:class}(i) are the triangle inequalities in the nondegenerate triangle \(o(X),o(Y),a_G\).

\emph{Why the cap is the same for all of them.}  By \eqref{eq:radical} a point \(z\in\partial O\) lies in \(O(Y)\) precisely when \(\nu\cdot(z-M)\le0\), a condition in which \(Y\) does not appear.  That half-plane cuts the stated arc out of \(\partial O\), the same arc for every predecessor.

\emph{Exposure.}  For every \(Y<X\), the pair \((O(Y),O(X))\) crosses properly and is directed forward, as proved above.  Applying \Cref{lem:power-caps} gives \(w_+(O(X))\notin O(Y)\) and \(w_-(O(Y))\notin O(X)\), which are the two exposure statements.
\end{proof}

So the incoming cap, the gate tags, and hence the two rail lengths produced by any step are the same for every admissible predecessor; and by \Cref{lem:fiber} a termination is always available, so the constructions in \Cref{subsec:recursion} give the operations an actual chain can perform at that state.

\Cref{thm:sufficient} shows that a feasible potential at level \(U\ge\pi/2\) implies \eqref{eq:chain-bound} for every chain in \(\mathcal Q\).  \Cref{subsec:excess} proves the converse.

Finally, invariance and homogeneity let us normalize the radius and the position of the query line once and for all, after which only the shape of a state remains.  \Cref{sec:lp} records that shape by four angles and turns \eqref{eq:E} into a pointwise condition on them.

\section{Reduction to a Univariate Linear Program}\label{sec:lp}

\Cref{thm:sufficient} leaves one task: exhibit a feasible potential.  We split each step into two comparisons: first from the incoming to the outgoing gate on the current circle, then between the current and next disks through the fixed outgoing gate.  A derivative controls the second comparison.  Together with the exact distance update, these comparisons reduce the task to a family of pointwise inequalities on a compact four-dimensional domain, indexed by four angles, in which the unknown enters through the values of a single function of one variable and its first derivative.

The reduction takes four steps.  First, invariance and homogeneity compress a gate state to four angles ranging over a compact set (\Cref{subsec:normalized}).  Second, the disks through the outgoing gate form a one-parameter family, along which we differentiate the potential (\Cref{subsec:pencil}).  Third, requiring this derivative to be nonpositive controls the change of disk; the comparison on the old circle uses angular monotonicity and a Lipschitz bound in the balance, and together they imply \eqref{eq:E} (\Cref{subsec:residual}).  Fourth, fixing the shape of the potential up to one even function of \(\gamma\) makes all the conditions affine in that function and its derivative, which is the linear program (\Cref{subsec:profiles}).  \Cref{thm:sufficient-profile} is the endpoint.

\subsection{Normalized states and the closed angular domain}\label{subsec:normalized}

Invariance and homogeneity reduce the geometric data of a gate state to its shape, together with the normalized balance.  We record that shape by four angles, and the domain they range over by a closed set of angles that is easier to certify over than the states themselves.

Normalize once and for all: put the query line on \(y=0\) with \(e=(1,0)\) and set \(r=1\).  What is then left of a state is a shape, and three angles describe it --- how high the center sits above the query line, which way the gate faces, and how much of the circle the gate cuts off --- while a fourth records the balance in the same angular units.  In order:
\[
\gamma=\arcsin\frac{y(o)}r,
\qquad
\varphi=\text{the direction of }\nu,
\qquad
\cos\theta=\frac Xr,
\quad
\sin\theta=\frac g{2r},
\qquad
\psi=\varphi+\frac\delta{2r},
\]
where \(\nu=(\cos\varphi,\sin\varphi)\) is the gate normal with positive forward component and \(X\) is its parameter in \eqref{eq:pencil}.  These ranges are forced: \(|\gamma|<\pi/2\), \(-\pi/2<\varphi<\pi/2\) and \(0<\theta<\pi\).  We call \(\psi\) the \emph{normalized balance}; the stable range \(|\delta|\le g\) reads \(|\psi-\varphi|\le\sin\theta\).  These four numbers carry every state from here to the end of the paper.

\begin{lemma}[arc coordinates and the angular domain]\label{lem:arcs}
The actual upper and lower arc coordinates of the gate tags, measured from \(w_-\), are
\begin{equation}\label{eq:arclengths}
\frac{l^+}r=\theta+\gamma-\varphi,
\qquad
\frac{l^-}r=\theta-\gamma+\varphi,
\end{equation}
and the two tags lie strictly above and below the query line if and only if
\begin{equation}\label{eq:diamond}
|\gamma-\varphi|<\theta<\pi-|\gamma+\varphi| .
\end{equation}
Conversely every triple with \(|\gamma|,|\varphi|<\pi/2\) satisfying \eqref{eq:diamond}, together with every \(\psi\) in the stable range, is realized by a gate state.  Moreover the two arcs from \(w_-\) to \(w_+\) have total angular lengths \(\pi+2\gamma\) and \(\pi-2\gamma\), so the terminal rails have lengths
\begin{equation}\label{eq:endrails}
\alpha_{\mathrm{end}}=r(\pi+2\gamma)-l^+,
\qquad
\beta_{\mathrm{end}}=r(\pi-2\gamma)-l^- .
\end{equation}
\end{lemma}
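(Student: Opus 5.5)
The plan is a direct computation in normalized coordinates. After the normalization of \Cref{subsec:normalized} (query line \(y=0\), \(e=(1,0)\), \(r=1\)), I will parametrize the circle as \(o+(\cos t,\sin t)\) with \(o=(o_x,\sin\gamma)\), and locate every relevant point by its polar angle \(t\) about \(o\).

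\textbf{The query points.} The condition \(y=0\) reads \(\sin t=-\sin\gamma\). So \(w_+\) sits at \(t=-\gamma\) and \(w_-\) at \(t=\pi+\gamma\); the first has the larger \(x\)-coordinate because \(\cos(-\gamma)>0\) when \(|\gamma|<\pi/2\).

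\textbf{The two arcs.} The open upper arc runs clockwise from \(w_-\) through \(t=\pi/2\) down to \(w_+\), and it is exactly the set of circle points with \(y>0\). Its angular length is \((\pi+\gamma)-(-\gamma)=\pi+2\gamma\). The lower arc runs counterclockwise and has length \(\pi-2\gamma\). This gives the totals, and \eqref{eq:endrails} then follows from additivity of arclength as in \Cref{lem:rail-order}.

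\textbf{The gate tags.} By \eqref{eq:pencil} we have \(o=M+X\nu\) and \(X=\cos\theta\). The tags are \(M\pm\sin\theta\,\nu^{\perp}\), so seen from \(o\) they are unit vectors at angles \(\varphi+\pi\mp\theta\). Two conventions must be fixed. First, which sign is the upper tag: I will fix it by continuity from the symmetric configuration \(\gamma=\varphi=0\), \(\theta=\pi/2\), and it should be \(a_G\) at \(t=\pi+\varphi-\theta\). Second, for \(X<0\) (that is, \(\theta>\pi/2\)) the cap is major; the formula \(\cos\theta=X/r\) handles this automatically.

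\textbf{The arc coordinates.} Measuring clockwise from \(w_-\) to \(a_G\) gives \(l^+=(\pi+\gamma)-(\pi+\varphi-\theta)=\theta+\gamma-\varphi\). Measuring counterclockwise to \(b_G\) at \(\pi+\varphi+\theta\) gives \(l^-=\theta+\varphi-\gamma\). These are \eqref{eq:arclengths}. The point I expect to need the most care is that these really are the \emph{actual}, possibly major, arclengths and not values reduced modulo \(2\pi\). This is handled by the next step: the strict above/below condition will force \(0<l^+<\pi+2\gamma\), so the coordinate is a genuine arclength along the upper arc, and similarly for \(l^-\).

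\textbf{The diamond.} The tag \(a_G\) lies strictly above the line if and only if it lies in the open upper arc, that is \(0<l^+<\pi+2\gamma\), which reads \(\varphi-\gamma<\theta<\pi+\gamma+\varphi\). Likewise \(0<l^-<\pi-2\gamma\) reads \(\gamma-\varphi<\theta<\pi-\gamma-\varphi\). Intersecting the four inequalities gives \(|\gamma-\varphi|<\theta<\pi-|\gamma+\varphi|\), which is \eqref{eq:diamond}.

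\textbf{The converse.} Given \(|\gamma|,|\varphi|<\pi/2\) satisfying \eqref{eq:diamond}, I will build the state explicitly:
\begin{itemize}
\item take the unit circle centred at \((0,\sin\gamma)\);
\item set \(\nu=(\cos\varphi,\sin\varphi)\), which has positive forward component;
\item place the chord with half-length \(\sin\theta>0\) at signed offset \(X=\cos\theta\), so that \(M=o-\cos\theta\,\nu\);
\item put \(\delta=2(\psi-\varphi)\), which lies in \([-g,g]\) exactly when \(|\psi-\varphi|\le\sin\theta\).
\end{itemize}
The equivalence just proved gives \(y(a_G)>0>y(b_G)\), so this is a gate state in the sense of \Cref{def:state}. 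Finally, I will check that \(\gamma,\varphi,\theta,\psi\) are recovered from it, which confirms that the four angles carry every state.
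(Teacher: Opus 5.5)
Your proposal is correct and follows the paper's proof essentially step for step: the same polar lifts (\(\pi+\gamma\) and \(-\gamma\) for \(w_-\) and \(w_+\), \(\pi+\varphi\mp\theta\) for the upper and lower tags), the same four strict inequalities \(0<l^+/r<\pi+2\gamma\) and \(0<l^-/r<\pi-2\gamma\) yielding the diamond, and the same explicit construction \(o=(0,\sin\gamma)\), \(M=o-\cos\theta\,\nu\) for the converse. One step can be tightened: deciding which tag is upper needs no continuity argument, because \(\nu^\perp=(-\sin\varphi,\cos\varphi)\) has \(y\)-component \(\cos\varphi>0\), so \(M+\sin\theta\,\nu^\perp\) is always the higher of the two tags and is therefore \(a_G\).
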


\begin{proof}
The radial vectors of the upper and lower tags have polar lifts \(\pi+\varphi-\theta\) and \(\pi+\varphi+\theta\), while \(w_-\) has lift \(\pi+\gamma\); the upper arc runs clockwise and the lower counterclockwise from it, giving \eqref{eq:arclengths} with the actual, possibly major, lengths.  The upper tag is strictly above the line exactly when \(0<l^+/r<\pi+2\gamma\), and similarly below for the lower tag; these four strict inequalities are \eqref{eq:diamond}.  For the converse, take \(r=1\), \(o=(0,\sin\gamma)\), \(M=o-\cos\theta\,\nu\), and the two tags at \(M\pm\sin\theta\,(-\sin\varphi,\cos\varphi)\), displaced from the gate midpoint along the gate itself; the same lifts show the tags are strictly on their respective sides, and the gate length is \(2\sin\theta\), which is the stable range.  The final claim follows since \(w_\pm\) have lifts \(\pi+\gamma\) and \(-\gamma\).
\end{proof}

For a fixed \(\gamma\), \eqref{eq:diamond} describes the interior of a convex quadrilateral in \((\varphi,\theta)\), the \emph{angular diamond}, with vertices
\begin{equation}\label{eq:vertices}
(\gamma,0),\qquad(-\gamma,\pi),\qquad\Bigl(\frac\pi2,\frac\pi2-\gamma\Bigr),\qquad\Bigl(-\frac\pi2,\frac\pi2+\gamma\Bigr).
\end{equation}
Let \(\mathcal C\) denote the \emph{closed angular domain}: all \((\gamma,\varphi,\theta,\psi)\) with \(|\gamma|,|\varphi|\le\pi/2\), \(0\le\theta\le\pi\), \(|\gamma-\varphi|\le\theta\le\pi-|\gamma+\varphi|\) and \(|\psi-\varphi|\le\sin\theta\).  This closed set is a sufficient test domain: verifying an inequality on all of \(\mathcal C\) implies it at every gate state, with the additional faces serving as boundary test configurations.

\subsection{Transport along a gate pencil}\label{subsec:pencil}

Fix a gate, the query line, and a balance \(\delta\) with \(|\delta|\le g\), and let the current disk run along the family \eqref{eq:pencil} as \(X\) varies over \(\R\).  We call this family the \emph{gate pencil} of that gate (\Cref{fig:pencil}).  By \Cref{lem:fiber}, the next disk in the second comparison has a larger value of \(X\).  To control this comparison, we require the potential to be nonincreasing along the pencil.  The crossing \(c\) is interior to the gate chord and stays fixed, so its power is negative in every disk of the pencil; hence every closed finite interval of \(X\) stays inside the strict domain, with \(r\ge g/2>0\) and \(\cos\gamma>0\) throughout.

\begin{lemma}[pencil derivatives]\label{lem:pencil-deriv}
Along the pencil, at fixed gate, query line and dimensional balance,
\begin{equation}\label{eq:derivs}
\begin{aligned}
r_X&=\cos\theta,
&r\,\theta_X&=-\sin\theta,
&r\,\gamma_X&=A:=\frac{\sin\varphi-\sin\gamma\cos\theta}{\cos\gamma},\\
r\,\psi_X&=-\cos\theta\,(\psi-\varphi),
&(r\theta)_X&=\theta\cos\theta-\sin\theta,
&(L_-)_X&=-\cos\varphi+\cos\theta\cos\gamma-\sin\gamma\,A .
\end{aligned}
\end{equation}
\end{lemma}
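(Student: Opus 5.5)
The plan is to compute every derivative directly in the normalized frame of \Cref{subsec:normalized}, but \emph{before} rescaling to \(r=1\), since \(r\) changes along the pencil. Put the query line on \(y=0\) with \(e=(1,0)\). Fix the gate, so \(M\), \(g\), \(\nu=(\cos\varphi,\sin\varphi)\), the crossing \(c\) and the dimensional balance \(\delta\) are all constant in \(X\). In particular \(\varphi\) is constant, so \(\varphi_X=0\). By \eqref{eq:pencil} the only moving data are
\[
o(X)=M+X\nu,
\qquad
r(X)=\sqrt{X^2+(g/2)^2}.
\]
The four angles are then explicit functions of \(X\): \(\cos\theta=X/r\), \(\sin\theta=g/(2r)\), \(\sin\gamma=y(o)/r\) and \(\psi=\varphi+\delta/(2r)\). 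Each identity in \eqref{eq:derivs} should follow from one differentiation. The remark before the lemma guarantees \(r\ge g/2>0\) and \(\cos\gamma>0\), so all divisions below are legitimate.

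The order I would follow is this.
\begin{enumerate}
\item \(r_X=X/r=\cos\theta\), directly.
\item For \(\theta\), differentiate \(\cos\theta=X/r\) rather than \(\sin\theta\), to avoid dividing by \(\cos\theta\) at \(\theta=\pi/2\). This gives
\[
-\sin\theta\,\theta_X=\frac{r-Xr_X}{r^2}=\frac{1-\cos^2\theta}{r}.
\]
Since \(\sin\theta>0\) on \(0<\theta<\pi\), this is \(r\theta_X=-\sin\theta\).
\item The product rule then gives \((r\theta)_X=r_X\theta+r\theta_X=\theta\cos\theta-\sin\theta\).
\item For \(\psi\), differentiate \(\delta/(2r)\) with \(\delta\) fixed. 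This gives \(-\delta r_X/(2r^2)\), that is, \(r\psi_X=-\cos\theta\,(\psi-\varphi)\).
\end{enumerate}

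For \(\gamma\), I would write \(y(o)=y(M)+X\sin\varphi\), so \(y(o)_X=\sin\varphi\). Differentiating \(\sin\gamma=y(o)/r\) gives
\[
\cos\gamma\,\gamma_X=\frac{r\sin\varphi-y(o)\cos\theta}{r^2}=\frac{\sin\varphi-\sin\gamma\cos\theta}{r},
\]
and dividing by \(\cos\gamma>0\) yields \(r\gamma_X=A\).

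For \(L_-\), since \(x(c)\) is fixed, only \(x(w_-)\) moves. I would use \(w_-=o-(r\cos\gamma,\,r\sin\gamma)\), so that \(x(w_-)=x(M)+X\cos\varphi-r\cos\gamma\). This is consistent with the lift \(\pi+\gamma\) used in \Cref{lem:arcs}, and its \(y\)-coordinate vanishes because \(y(o)=r\sin\gamma\). Differentiating,
\[
x(w_-)_X=\cos\varphi-r_X\cos\gamma+r\sin\gamma\,\gamma_X=\cos\varphi-\cos\theta\cos\gamma+\sin\gamma\,A,
\]
and \((L_-)_X=-x(w_-)_X\) is the stated formula.

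The computation is routine. The only points needing care are the following.
\begin{itemize}
\item The sign conventions: \(\cos\theta=X/r\) must match the orientation of \(\nu\) with \(\nu\cdot e>0\), and \(w_-\) must be correctly identified as the backward point \(o-r(\cos\gamma,\sin\gamma)\).
\item Avoiding spurious division by \(\cos\theta\) or \(\cos\gamma\). Differentiating \(\cos\theta\) handles the first, and the standing bound \(\cos\gamma>0\) on the strict domain handles the second.
\end{itemize}
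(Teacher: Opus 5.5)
Your proposal is correct and takes the same route as the paper: each identity comes from a single differentiation of \(r^2=X^2+(g/2)^2\), of the defining relation for \(\theta\), of \(r\sin\gamma=y(M)+X\sin\varphi\), of \(\psi=\varphi+\delta/(2r)\), by the product rule, and of \(L_-=x(c)-x(o)+r\cos\gamma\); the only cosmetic difference is that for \(\theta\) you differentiate \(\cos\theta=X/r\), where the paper differentiates \(\theta=\operatorname{atan2}(g/2,X)\). The paper's proof also records the equivalent form \((L_-)_X=(\cos\theta-\cos(\gamma-\varphi))/\cos\gamma\), which is used later in \Cref{prop:step}, but that form is not part of the statement you were asked to prove.
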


\begin{proof}
Differentiate \(r^2=X^2+(g/2)^2\) for the first, and \(\theta=\operatorname{atan2}(g/2,X)\) for the second.  For \(\gamma\), differentiate \(r\sin\gamma=y(M)+X\sin\varphi\) and substitute the first two.  For \(\psi\), use \(\psi=\varphi+\delta/(2r)\) with \(\varphi\) and \(\delta\) fixed.  For the fifth, \(l^++l^-=2r\theta\) by \eqref{eq:arclengths}.  For the last, \(L_-=x(c)-x(o)+r\cos\gamma\) has derivative \(-\cos\varphi+\cos\theta\cos\gamma-\sin\gamma\,A\), and a direct computation identifies this with \((\cos\theta-\cos(\gamma-\varphi))/\cos\gamma\).
\end{proof}

The potentials we shall use are built from a piecewise smooth function, so we record the differentiation rule that applies to them.

\begin{lemma}[curve lemma]\label{lem:curve}
Let \(F\) be locally Lipschitz on a neighborhood of the image of a \(C^1\) curve \(\xi\) defined on a compact interval.  Suppose finitely many closed pieces cover that neighborhood, on each piece \(F\) agrees with a \(C^1\) function defined on an open neighborhood of that piece, and the values agree on overlaps.  Then \(F\circ\xi\) is absolutely continuous, and at almost every parameter where its derivative exists, that derivative equals \(\nabla F_i(\xi)\cdot\xi'\) for some piece \(i\) containing \(\xi\) at that parameter.
\end{lemma}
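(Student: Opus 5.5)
The plan is to reduce the claim to one-variable facts about Lipschitz functions and to use the piecewise structure only at points of differentiability. Write \(\xi\colon[t_0,t_1]\to\R^k\) for the curve and \(K=\xi([t_0,t_1])\), which is compact. Since \(F\) is locally Lipschitz on a neighborhood \(N\) of \(K\), a finite subcover of \(K\) by balls on which \(F\) is Lipschitz, together with a Lebesgue-number argument, gives a single constant \(L\) with \(|F(z)-F(z')|\le L\norm{z-z'}\) for all \(z,z'\in K\).

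Absolute continuity comes first. The curve \(\xi\) is \(C^1\) on a compact interval, so it is Lipschitz with constant \(\max\norm{\xi'}\). The composition \(F\circ\xi\) is therefore Lipschitz on \([t_0,t_1]\), hence absolutely continuous. By Rademacher's theorem in one variable, it is differentiable almost everywhere.

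Next I identify the derivative. Let the closed pieces be \(C_1,\dots,C_m\), with \(F=F_i\) on \(C_i\) and \(F_i\) of class \(C^1\) on an open set \(V_i\supset C_i\). Set \(T_i=\xi^{-1}(C_i)\); each \(T_i\) is closed and the \(T_i\) cover \([t_0,t_1]\). Every closed subset of \(\R\) is the union of its isolated points, which are countable, and its accumulation points. So, off a countable set, every parameter \(t\) is an accumulation point of some \(T_i\) that contains it. Fix such a \(t\) where \((F\circ\xi)'(t)\) exists, and choose \(t_k\to t\) with \(t_k\in T_i\setminus\{t\}\). For these parameters \(F(\xi(t_k))=F_i(\xi(t_k))\) and \(F(\xi(t))=F_i(\xi(t))\). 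Since \(F_i\circ\xi\) is differentiable at \(t\) (because \(\xi(t)\in V_i\) and \(F_i\) is \(C^1\) there), the difference quotients along \(t_k\) converge to \(\nabla F_i(\xi(t))\cdot\xi'(t)\). The full derivative exists by assumption, so it agrees with this limit along the subsequence. Discarding the countable exceptional set, which has measure zero, gives the claim with \(i\) a piece containing \(\xi(t)\).

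The main obstacle is the accumulation-point step. The naive approach would assume that \(\xi(t)\) lies in the interior of a single piece, but the curve may run along a common boundary of pieces for a positive-length set of parameters, so that approach fails. The countability of isolated points of the closed sets \(T_i\) is what handles this. If the paper intends ``almost every'' to refer only to differentiability points of \(F\circ\xi\), this argument in fact gives the formula at all but countably many of them.
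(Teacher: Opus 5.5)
Your proof is correct and follows the paper's argument. The Lipschitz composition gives absolute continuity, and at a differentiability point you pass to parameters lying in a single closed piece, use closedness to place \(\xi(t)\) in that piece, and apply the \(C^1\) chain rule to \(F_i\). The one difference is that your countable exceptional set is unnecessary: the finitely many closed sets \(T_i\) cover the interval, so any sequence \(t_k\to t\) has a subsequence in one \(T_i\), which then contains \(t\). Hence, as in the paper, the formula holds at \emph{every} parameter where the derivative exists.
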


\begin{proof}
Compactness gives a finite Lipschitz cover of the image, so \(F\circ\xi\) is Lipschitz on the parameter interval and hence absolutely continuous.  Fix a parameter of differentiability and let \(\iota\to0\) through nonzero values; write \(\xi\) and \(\xi_\iota\) for the points of the curve at that parameter and at the shifted one.  Since there are finitely many closed pieces, a subsequence has all \(\xi_\iota\) in one closed piece, and closedness puts \(\xi\) in it too.  On that piece
\[
F(\xi_\iota)-F(\xi)=\nabla F_i(\xi)\cdot(\xi_\iota-\xi)+o\bigl(\norm{\xi_\iota-\xi}\bigr),
\]
and \(\norm{\xi_\iota-\xi}=O(|\iota|)\), so dividing by \(\iota\) and passing to the limit along the subsequence gives the stated value, which must be the existing derivative.
\end{proof}

Consequently a bound imposed on \emph{every} incident gradient controls the actual derivative almost everywhere.  This covers a curve lying along a switching locus for an interval, or crossing it infinitely often.

\subsection{The step residual}\label{subsec:residual}

\begin{figure}[t]
\centering
\begingroup
\input{figure_assets/figure_style}
\begin{tikzpicture}[paper figure]
\path[use as bounding box] (0,0) rectangle (16,7.6);
\node[fheading] at (.15,7.30) {(a)\quad First comparison: the old circle};
\node[fheading] at (8.55,7.30) {(b)\quad Second comparison: the gate pencil};

% -------------------------------------------------------------------------
% Panel (a): both gates lie on the same old circle, centered at (2.7,4.75).
% Tag polar angles are (145,225) for G and (40,300) for G'.  Hence the
% complete upper/lower rails have lengths r*105deg and r*75deg, giving
% theta*=theta+90deg and varphi*=varphi-15deg, as in eq:rail-angles.
% -------------------------------------------------------------------------
\begin{scope}
\coordinate (po) at (2.70,4.75);
\coordinate (pa) at ($(po)+(145:1.63)$);
\coordinate (pb) at ($(po)+(225:1.63)$);
\coordinate (paout) at ($(po)+(40:1.63)$);
\coordinate (pbout) at ($(po)+(300:1.63)$);
% Exact chord/query intersections, with query height 4.32.
\pgfmathsetmacro{\pcx}{2.7-1.63*cos(40)/cos(5)+.43*tan(5)}
\pgfmathsetmacro{\pcoutx}{2.7-1.63*cos(130)/cos(-10)+.43*tan(-10)}
\coordinate (pc) at (\pcx,4.32);
\coordinate (pcout) at (\pcoutx,4.32);
\fill[figTeal!3] (po) circle[radius=1.63];
\draw[fdisk] (po) circle[radius=1.63];
\draw[frail] (pa) arc[start angle=145,end angle=40,radius=1.63];
\draw[frail] (pb) arc[start angle=225,end angle=300,radius=1.63];
\draw[frail,-{Latex[length=1.8mm]}]
  ($(po)+(105:1.63)$) arc[start angle=105,end angle=82,radius=1.63];
\draw[frail,-{Latex[length=1.8mm]}]
  ($(po)+(251:1.63)$) arc[start angle=251,end angle=274,radius=1.63];
\draw[fgate,line width=.85pt] (pa)--(pb);
\draw[fgate,line width=1.65pt] (paout)--(pbout);
\draw[fquery] (.42,4.32)--(4.75,4.32);
\node[anchor=west] at (4.80,4.32) {$e$};
\draw[figTeal,line width=1pt,-{Latex[length=1.9mm]},shorten >=2pt]
  (pc)--(pcout);
\node[text=figTeal,anchor=south,inner sep=2pt] at (2.58,4.40) {$\Delta x_i$};
\draw[fguide] (po)--($(po)+(70:1.63)$);
\node at (3.15,5.53) {$r$};
\node[fpoint,inner sep=1.0pt] at (po) {};
\node[anchor=south east,inner sep=2pt] at (2.55,4.94) {$o$};
\node[text=figTeal] at (2.25,5.42) {$O$};
\foreach \p in {pa,pb,paout,pbout,pc,pcout}
  \node[fpoint,inner sep=1.15pt] at (\p) {};
\node[anchor=south east] at (1.29,5.83) {$a_i$};
\node[anchor=north east] at (1.40,3.48) {$b_i$};
\node[anchor=south west] at (4.00,5.92) {$a_{i+1}$};
\node[anchor=north west] at (3.58,3.23) {$b_{i+1}$};
\node[text=figOrange,anchor=east] at (.88,4.98) {$G_i$};
\node[text=figOrange,anchor=east,inner sep=1pt] at (3.75,5.02) {$G_{i+1}$};
\node[text=figOrange,anchor=west,inner sep=1pt] (glabel) at (4.46,3.46) {$g_{i+1}$};
\draw[figGray,line width=.4pt]
  ($(paout)!.76!(pbout)$)--(4.27,3.46)--(glabel.west);
\node[anchor=west,inner sep=1pt] (cilabel) at (1.81,3.96) {$c_i$};
\draw[figGray,line width=.4pt] (pc)--(1.70,3.96)--(cilabel.west);
\node[anchor=east,inner sep=1pt] (colabel) at (3.17,3.96) {$c_{i+1}$};
\draw[figGray,line width=.4pt] (pcout)--(3.35,3.96)--(colabel.east);
\node[text=figBlue] at (2.57,6.67) {$\alpha_{i+1}$};
\node[text=figBlue] at (2.49,2.86) {$\beta_{i+1}$};
\node at (6.43,5.71) {$\displaystyle\theta^*-\theta=\frac{\alpha_{i+1}+\beta_{i+1}}{2r}$};
\node at (6.43,4.87) {$\displaystyle\varphi^*-\varphi=-\frac{\alpha_{i+1}-\beta_{i+1}}{2r}$};
\node[fnote] at (6.43,3.80) {$\Delta x_i=x(c_{i+1})-x(c_i)$};
\node[fnote] at (4.05,2.34) {Traverse both rails; allow a switch across the gate.};
\end{scope}

% -------------------------------------------------------------------------
% Panel (b): the gate pencil.  h=1.3, X=-0.8, X'=0.95; c is distinct from M.
% -------------------------------------------------------------------------
\begin{scope}[shift={(8.3,0)}]
\coordinate (M) at (3.55,4.74);
\coordinate (a) at (3.55,6.04);
\coordinate (b) at (3.55,3.44);
\coordinate (c) at (3.55,4.04);
\coordinate (old) at (2.75,4.74);
\coordinate (new) at (4.50,4.74);
\pgfmathsetmacro{\rold}{sqrt(.8^2+1.3^2)}
\pgfmathsetmacro{\rnew}{sqrt(.95^2+1.3^2)}
\draw[figBlue,line width=1.1pt] (old) circle[radius=\rold];
\draw[figTeal,line width=1.1pt] (new) circle[radius=\rnew];
\draw[fquery] (.72,4.04)--(6.82,4.04);
\node[anchor=west] at (6.90,4.04) {$e$};
\draw[fguide,-{Latex[length=1.8mm]}] (1.52,4.74)--(6.47,4.74);
\node[anchor=west,text=figGray] at (6.55,4.74) {$\nu$};
\draw[fgate] (a)--(b);
\foreach \p in {a,b,c,M} \node[fpoint,inner sep=1.15pt] at (\p) {};
\node[fpoint,fill=figBlue] at (old) {};
\node[fpoint,fill=figTeal] at (new) {};
\node[anchor=south,inner sep=1pt] (alabel) at (3.55,6.58) {$a_{i+1}$};
\draw[figGray,line width=.4pt] (a)--(alabel.south);
\node[inner sep=0pt] (blabel) at (2.48,2.96) {$b_{i+1}$};
\draw[figGray,line width=.45pt] (blabel.east)--(3.10,3.04)--(b);
\node[inner sep=0pt] at (3.85,4.99) {$M$};
\node[anchor=west,inner sep=1pt] (clabel) at (4.66,4.32) {$c_{i+1}$};
\draw[figGray,line width=.4pt] (c)--(4.20,4.32)--(clabel.west);
\node[text=figOrange,anchor=west,inner sep=1pt] (Glabel) at (4.66,5.65) {$G_{i+1}$};
\draw[figGray,line width=.4pt] (3.55,5.58)--(4.30,5.65)--(Glabel.west);
\node[text=figBlue,anchor=south east] at (2.65,4.93) {$o(X_0)$};
\node[text=figTeal,anchor=south west] at (4.60,4.93) {$o(X_1)$};
\draw[fflow] (2.48,2.52)--(5.11,2.52);
\node[fnote,fill=white,inner sep=3pt] at (3.80,2.52) {$X_0<X_1$};
\end{scope}

\draw[figGrid,line width=.5pt] (.15,1.98)--(15.85,1.98);
\node at (8,1.32) {$\underbrace{j_i-U\Delta x_i+W_*-W(s_i)}_{\text{rails and gate}}+\underbrace{W(s_{i+1})-W_*}_{\text{pencil}}\le0$};
\node[fnote] at (8,.44) {Each bracket is nonpositive; the derivative $W_X=E$ controls the second.};
\end{tikzpicture}
\endgroup
\caption{A step is two comparisons, not one.  (a) On the old circle the two complete rails $\alpha_{i+1},\beta_{i+1}$ are traversed and one update across the outgoing gate accounts for switching sides, as in \eqref{eq:update}; the disk is unchanged, and the increments of the gate angles are \eqref{eq:rail-angles}.  Write $W_*$ for the potential at the outgoing gate on that same disk.  (b) The outgoing gate $G_{i+1}$ and the balance are then held fixed while the disk moves forward along the pencil \eqref{eq:pencil}, its center advancing from $o(X_0)$ to $o(X_1)$ along the perpendicular bisector of $G_{i+1}$ and its radius varying according to \eqref{eq:pencil}.  Each of $j_i-U\Delta x_i+W_*-W(s_i)$ and $W(s_{i+1})-W_*$ is nonpositive, and their sum is \eqref{eq:E} (\Cref{prop:step}).}
\label{fig:pencil}
\end{figure}

Before fixing the form of the potential, note what it has to do.  By \eqref{eq:T} a state must already pay for terminating there, and by \eqref{eq:endrails} that price is at most \(r\pi\) against a progress of \(L_+\le2r\cos\gamma\); the corresponding requirement on the potential is what the first bracket below records, so \(r\bigl[\pi/2-U\cos\gamma\bigr]\) is the price of a state at which the free part of the potential is zero.  The remaining two terms are bookkeeping: \(-r\theta\) records how much of the circle the gate has already cut off, and \(U L_-\) how far the crossing has advanced.  What is left free is a single dimensionless function \(H\), which measures the slack of the fit.

Let \(H\) be a function of \((\gamma,\varphi,\theta,\psi)\), to be chosen, and set
\begin{equation}\label{eq:W-form}
W=r\Bigl[\frac\pi2-U\cos\gamma-H\Bigr]-r\theta+U\,L_- .
\end{equation}
This is finite, invariant, and homogeneous of degree one, because \(\gamma,\varphi,\theta,\psi\) are dimensionless while \(r\), \(l^\pm\), \(L_-\) and \(\delta\) scale by the common factor.

\begin{definition}[step residual]\label{def:residual}
With \(K(\theta)=\cos\theta\,(\pi/2-\theta)+\sin\theta\), the \emph{step residual} of \(H\) at a state and at one incident gradient of \(H\) is
\begin{equation}\label{eq:residual}
E=K(\theta)-U\cos\varphi-\cos\theta\,H+\cos\theta\,(\psi-\varphi)H_\psi-A\,H_\gamma+\sin\theta\,H_\theta ,
\end{equation}
where subscripts denote partial derivatives at fixed remaining variables and \(A\) is as in \eqref{eq:derivs}.  Its \emph{cleared} form, obtained by multiplying by \(\cos\gamma\) and substituting \(A\) from \eqref{eq:derivs},
\begin{equation}\label{eq:residual-cleared}
\widetilde E=\cos\gamma\bigl[K(\theta)-U\cos\varphi-\cos\theta\,H+\cos\theta\,(\psi-\varphi)H_\psi+\sin\theta\,H_\theta\bigr]
-\bigl(\sin\varphi-\sin\gamma\cos\theta\bigr)H_\gamma ,
\end{equation}
carries no denominator.
\end{definition}

The two forms are used for different purposes.  The coefficient \(A\) of \eqref{eq:derivs} divides by \(\cos\gamma\), which is positive at every gate state but vanishes at the two faces \(\gamma=\pm\pi/2\) of the closed domain \(\mathcal C\); there both the numerator and the denominator of \(A\) vanish, and the quotient has no continuous extension, since along \(\gamma=\pi/2-t\), \(\theta=\pi/2\), \(\varphi=vt\) one gets \(A\to v\) for every \(v\in(-1,1)\).  So \(E\) is the derivative residual, defined wherever \(\cos\gamma>0\), and \(\widetilde E\) is the quantity a closed-domain condition can require: it is defined on all of \(\mathcal C\), and at a point with \(\cos\gamma>0\) the two conditions \(E\le0\) and \(\widetilde E\le0\) are equivalent.  Statements about \(\mathcal C\) below are therefore stated for \(\widetilde E\).

\begin{proposition}[the step inequality is a pointwise condition]\label{prop:step}
Suppose \(H\) is continuous, locally Lipschitz in \(\psi\) on the whole real fiber, and covered by finitely many closed pieces on each of which it agrees with a \(C^1\) function, with matching values on overlaps.  Suppose further that
\begin{equation}\label{eq:fiber-lip}
|H(\gamma,\varphi,\theta,\psi)-H(\gamma,\varphi,\theta,\widetilde\psi)|\le|\psi-\widetilde\psi|
\qquad\text{for all real }\psi,\widetilde\psi,
\end{equation}
that
\begin{equation}\label{eq:J-mono}
H_\theta\ge|H_\varphi|
\qquad\text{at fixed }\gamma,\psi,
\end{equation}
at every incident gradient, on the closed angular diamond of every \(\gamma\) and for every real \(\psi\), and that \(\widetilde E\le0\) at every point of \(\mathcal C\) and at every incident gradient.  Then \eqref{eq:E} holds for every gate state and every step.
\end{proposition}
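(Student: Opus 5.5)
The plan is to split each step into the two comparisons of \Cref{fig:pencil} and show that each bracket is nonpositive separately. Fix a state \(s_i=(O_{i+1},G_i,\delta_i)\) and a step. Let \(s_*=(O_{i+1},G_{i+1},\delta_{i+1})\) be the intermediate state: the old disk, the new gate and the updated balance. Write \(W_*=W(s_*)\). Then \eqref{eq:E} is the sum
\[
\bigl[j_i-U\Delta x_i+W_*-W(s_i)\bigr]+\bigl[W(s_{i+1})-W_*\bigr].
\]
First I would check that \(s_*\) is a genuine gate state. Its tags straddle the line by \Cref{def:class}(ii), and \(|\delta_{i+1}|\le g_{i+1}\) because of the clipping. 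Its angles therefore lie in \(\mathcal C\) by \Cref{lem:arcs}.

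\emph{First bracket (old circle).} On the old circle \(r\) and \(\gamma\) are fixed, and so is \(w_-\). By \eqref{eq:arclengths} and \Cref{lem:rail-order}, the new gate angles are
\[
\theta^*=\theta+\frac{\alpha_{i+1}+\beta_{i+1}}{2r},
\qquad
\varphi^*=\varphi-\frac{\alpha_{i+1}-\beta_{i+1}}{2r}.
\]
This follows by adding and subtracting the increments of \(l^\pm\), which are the rail lengths. Since \(w_-\) is common to both gates, \(L_-^*-L_-=\Delta x_i\). Substituting \eqref{eq:W-form} and \(j_i=(\alpha_{i+1}+\beta_{i+1}-\omega_{i+1})/2\), the \(r\theta\) term cancels half the rail sum and the \(U L_-\) term cancels \(U\Delta x_i\). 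The first bracket therefore reduces to
\[
-\frac{\omega_{i+1}}2-r\bigl[H(\gamma,\varphi^*,\theta^*,\psi^*)-H(\gamma,\varphi,\theta,\psi)\bigr].
\]
Next I would use the unclipped balance \(\delta_i+\alpha_{i+1}-\beta_{i+1}\). The normalized balance it gives at the new gate is \(\varphi^*+(\delta_i+\alpha_{i+1}-\beta_{i+1})/(2r)=\psi\), exactly the old value. Clipping then moves it by \(\omega_{i+1}/(2r)\) to \(\psi^*\), so \eqref{eq:fiber-lip} gives \(H(\cdot,\psi^*)\ge H(\cdot,\psi)-\omega_{i+1}/(2r)\) at \((\gamma,\varphi^*,\theta^*)\).

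It then suffices that \(H(\gamma,\varphi^*,\theta^*,\psi)\ge H(\gamma,\varphi,\theta,\psi)\) at the fixed real value \(\psi\). This is why \eqref{eq:J-mono} is required for every real \(\psi\): the intermediate \(\psi\) need not lie in the stable range of the new gate. I would run the straight segment from \((\varphi,\theta)\) to \((\varphi^*,\theta^*)\). It lies in the closed diamond, which is convex. Its direction is \((\Delta\varphi,\Delta\theta)\) with \(\Delta\theta\ge|\Delta\varphi|\), since \(\alpha_{i+1},\beta_{i+1}\ge0\). By \Cref{lem:curve}, at almost every parameter the derivative equals \(H_\theta\Delta\theta+H_\varphi\Delta\varphi\ge(H_\theta-|H_\varphi|)\Delta\theta\ge0\) for some incident gradient. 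Absolute continuity then gives the monotonicity, and the first bracket is at most \(-\omega_{i+1}/2+\omega_{i+1}/2=0\).

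\emph{Second bracket (pencil).} Hold \(G_{i+1}\), the query line and \(\delta_{i+1}\) fixed, and move \(X\) from the parameter of \(O_{i+1}\) to that of \(O_{i+2}\). By \Cref{lem:fiber} this is an increase of \(X\), and the whole closed interval stays in the strict domain, as noted before \Cref{lem:pencil-deriv}. Differentiating \eqref{eq:W-form} with \eqref{eq:derivs} and collecting terms, I expect the \(U\) terms involving \(\cos\theta\cos\gamma\) and \(\sin\gamma\,A\) to cancel against \((L_-)_X\). The constant parts should combine to \(K(\theta)\), using \((r\theta)_X=\theta\cos\theta-\sin\theta\). The result should be \(W_X=E\) at each incident gradient. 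Since \(\cos\gamma>0\) there, \(E\le0\) is equivalent to \(\widetilde E\le0\), which holds by hypothesis on \(\mathcal C\). \Cref{lem:curve}, applied to the \(C^1\) curve \(X\mapsto(\gamma,\varphi,\theta,\psi)(X)\), gives \(W\) nonincreasing along the pencil, so the second bracket is nonpositive.

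I expect the main obstacle to be the bookkeeping of this derivative identity, together with the observation that unclipped transport leaves \(\psi\) unchanged. The remaining care is in the domains: the first-bracket segment lives at a possibly unstable \(\psi\), and the pencil curve must stay in a region where \(H\) is defined and piecewise \(C^1\), so that \Cref{lem:curve} applies.
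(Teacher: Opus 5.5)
Your proposal is correct and follows the paper's proof essentially step for step. It uses the same split at \(W_*\) (old disk, outgoing gate, clipped balance). It makes the same observation that the unclipped normalized balance stays equal to \(\psi\), so \eqref{eq:fiber-lip} pays exactly \(\omega_{i+1}/2\). It gets monotonicity of \(H\) along the segment in the convex closed diamond from \Cref{lem:curve} and \eqref{eq:J-mono}, applied at the possibly unstable \(\psi\). Finally, on the pencil it uses \(W_X=E\), with \(\cos\gamma>0\) turning \(\widetilde E\le0\) into \(W_X\le0\). The derivative identity you anticipate does hold as you describe: \(K(\theta)\) comes from the \(r_X\pi/2\) term and \((r\theta)_X\), and \((L_-)_X=(\cos\theta-\cos(\gamma-\varphi))/\cos\gamma\) makes the \(U\)-terms collapse to \(-U\cos\varphi\).
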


The proof compares the two ends of a step in two moves (\Cref{fig:pencil}): first along the old circle, from the incoming to the outgoing gate, then along the pencil of the outgoing gate.  The larger fiber in \eqref{eq:fiber-lip} and \eqref{eq:J-mono} is needed because the first move initially carries the raw balance, before accounting for a possible switch across the outgoing gate, so the pair it compares can leave the stable range.

\begin{proof}
Fix a step \(s_i\to s_{i+1}\), \(O=O_{i+1}\).

\emph{Along the old circle.}  By \Cref{lem:rail-order}, the outgoing gate has arc coordinates \(l^++\alpha_{i+1}\) and \(l^-+\beta_{i+1}\) on \(O\).  Its crossing advances by \(\Delta x_i\).  Write \(\varphi^*,\theta^*\) for the angles of the outgoing gate on \(O\); by \eqref{eq:arclengths} the two increments mean
\begin{equation}\label{eq:rail-angles}
\theta^*-\theta=\frac{\alpha_{i+1}+\beta_{i+1}}{2r},
\qquad
\varphi^*-\varphi=-\frac{\alpha_{i+1}-\beta_{i+1}}{2r} .
\end{equation}
Before allowing a switch across the outgoing gate, the carried balance is \(\delta_i+\alpha_{i+1}-\beta_{i+1}\), and \((\delta_i+\alpha_{i+1}-\beta_{i+1})/(2r)+\varphi^*=\delta_i/(2r)+\varphi=\psi\): the raw normalized balance is unchanged.  Allowing such a switch gives \(\delta_{i+1}=\clip_{[-g_{i+1},g_{i+1}]}(\delta_i+\alpha_{i+1}-\beta_{i+1})\) with \(|\delta_i+\alpha_{i+1}-\beta_{i+1}-\delta_{i+1}|=\omega_{i+1}\), so the new normalized balance \(\psi^*\) differs from \(\psi\) by \(\omega_{i+1}/(2r)\).  By \eqref{eq:fiber-lip},
\begin{equation}\label{eq:pay-omega}
-r\bigl[H(\gamma,\varphi^*,\theta^*,\psi^*)-H(\gamma,\varphi^*,\theta^*,\psi)\bigr]\le\frac{\omega_{i+1}}2 .
\end{equation}
Writing \(W_*\) for \eqref{eq:W-form} evaluated on the old circle at the outgoing gate, \(\gamma\) and \(r\) are unchanged, \(r\theta\) increases by \((\alpha_{i+1}+\beta_{i+1})/2\) by \eqref{eq:rail-angles}, and \(U L_-\) increases by \(U \Delta x_i\).  Hence, using \(j_i=(\alpha_{i+1}+\beta_{i+1}-\omega_{i+1})/2\),
\[
\begin{aligned}
j_i-U\,\Delta x_i+W_*-W(s_i)
&=\underbrace{-\frac{\omega_{i+1}}2}_{(a)}
\underbrace{-r\bigl[H(\gamma,\varphi^*,\theta^*,\psi^*)-H(\gamma,\varphi^*,\theta^*,\psi)\bigr]}_{(b)}\\
&\quad\underbrace{-r\bigl[H(\gamma,\varphi^*,\theta^*,\psi)-H(\gamma,\varphi,\theta,\psi)\bigr]}_{(c)} .
\end{aligned}
\]
The two terms \((a)+(b)\) are \(\le0\) by \eqref{eq:pay-omega}.  Term \((c)\) compares \(H\) at two angular pairs of the same \(\gamma\), and it is \(\le0\) because \(H\) does not decrease along the straight segment joining them.

Both pairs lie in the closed diamond of that \(\gamma\), by \Cref{lem:arcs} applied on \(O\) to the two chords \(G_i\) and \(G_{i+1}\), whose arc coordinates are \(l^\pm\) and \(l^++\alpha_{i+1},\,l^-+\beta_{i+1}\): the tags of \(G_i\) straddle the query line because \(s_i\) is a gate state, and those of \(G_{i+1}\) because \(s_{i+1}\) is one.  That diamond is convex, so the segment stays in it, and along the segment \(\dot\theta\ge|\dot\varphi|\) by \eqref{eq:rail-angles} and \(\alpha_{i+1},\beta_{i+1}\ge0\).  Applying \Cref{lem:curve} to \(H\) along that segment, at the fixed \(\gamma\) and the fixed raw balance \(\psi\), at almost every parameter its derivative is \(H_\theta\dot\theta+H_\varphi\dot\varphi\) at some incident gradient, and
\[
H_\theta\dot\theta+H_\varphi\dot\varphi\ \ge\ H_\theta\dot\theta-|H_\varphi|\,|\dot\varphi|\ \ge\ |H_\varphi|\bigl(\dot\theta-|\dot\varphi|\bigr)\ \ge\ 0
\]
by \eqref{eq:J-mono} and \(\dot\theta\ge|\dot\varphi|\ge0\).  So \(H\) does not decrease and \((c)\le0\).  The segment compares function values along the old circle, while the ladder update remains the single move in \eqref{eq:update}.

\emph{Along the pencil.}  It remains to compare the old circle with the new one through the common outgoing gate, at the fixed balance \(\delta_{i+1}\).  By \Cref{lem:fiber} the two center parameters satisfy \(X_0<X_1\), so it suffices to show that \(W\) is nonincreasing in \(X\) along the pencil.  Substituting \eqref{eq:derivs} into \eqref{eq:W-form} and applying \Cref{lem:curve} to \(H\) along the pencil curve, which is \(C^1\) with compact image inside the strict domain, gives at almost every \(X\), for some incident gradient,
\[
W_X=K(\theta)-U\cos\varphi-\cos\theta\,H+\cos\theta\,(\psi-\varphi)H_\psi-A\,H_\gamma+\sin\theta\,H_\theta=E,
\]
where the terms in \(U\) cancel to \(-U\cos\varphi\) since
\(U\bigl[-\cos\theta\cos\gamma+\sin\gamma\,A+(L_-)_X\bigr]=-U\cos\varphi\).
Every point of the pencil is a gate state, so \(\cos\gamma>0\) there and the hypothesis \(\widetilde E\le0\) gives \(E\le0\) at every incident gradient; hence \(W_X\le0\) almost everywhere.  \(W\) is absolutely continuous along the pencil by \Cref{lem:curve}, so \(W(X_1)\le W(X_0)\).  The comparison holds for every positive radius ratio \(r_{i+2}/r_{i+1}\).

Combining the two moves gives \eqref{eq:E}.
\end{proof}

The term \(\sin\theta\,H_\theta\) in \eqref{eq:residual} comes from \(-rH\) in \eqref{eq:W-form} through the chain term \(-rH_\theta\theta_X\) and \(r\theta_X=-\sin\theta\), so it enters with a plus sign; the derivative of \(-r\theta\) is \(-\theta\cos\theta+\sin\theta\), which \(K(\theta)\) absorbs.

\subsection{Amplitude profiles and the sufficient conditions}\label{subsec:profiles}

We now fix the shape of \(H\), leaving one function of one variable free.  Two quantities of \(\gamma\) alone are needed first: a threshold at which the balance saturates, and a smoothed \(|\gamma|\) that stays away from zero at \(\gamma=0\).  Fix a smoothing parameter \(\varepsilon\in(0,1]\) and put, for \(|\gamma|\le\pi/2\),
\[
k=\frac\pi2-\cos\gamma,
\qquad
h=\sqrt{\gamma^2+\varepsilon^2\cos^2\gamma},
\]
the parameter \(\varepsilon\) being what keeps \(h\) positive at \(\gamma=0\).  Everything below holds for every such \(\varepsilon\); a specific value is chosen only in \Cref{thm:main}.

Our idea is to let one free function do three jobs.  The balance term has to absorb the clipping that occurs when the balance saturates, so it is odd in \(\psi\) and constant beyond \(\pm k\).  The angular term has to stay monotone when a pair of complete rails is traversed, which is the inequality \eqref{eq:J-mono}; making it affine in \((\varphi,\theta)\) at fixed \(\gamma\) is what puts that inequality in closed form.  Both terms are scaled by one even function \(w\) of \(\gamma\), and that function is the only thing left to choose.

\begin{definition}[amplitude profile]\label{def:profile}
An \emph{amplitude profile} is an even \(C^2\) function \(w\) on \([-\pi/2,\pi/2]\), understood with \(C^2\) extensions through the two endpoints.  Together with the smoothing parameter \(\varepsilon\) it determines the two terms
\begin{equation}\label{eq:H-shape}
H_0=\sigma\,\clip_{[-1,1]}\Bigl(\frac\psi k\Bigr)
\quad\text{with}\quad
\sigma=\gamma\Bigl[1-w\cos\gamma\Bigl(\frac\pi2-h\Bigr)\Bigr],
\qquad
J=w\cos\gamma\Bigl[\gamma\varphi+h\Bigl(\theta-\frac\pi2\Bigr)\Bigr],
\end{equation}
and \(H=H_0+J\) in \eqref{eq:W-form}.  So the profile enters the first through the amplitude \(\sigma\), which discounts \(\gamma\) by the weight \(\cos\gamma\,(\pi/2-h)\), and the second as the amplitude of an affine function of \((\varphi,\theta)\).
\end{definition}

We call \(H_0\) the \emph{balance term} and \(J\) the \emph{angular term}, and they behave as the three jobs above require.  The balance term has three closed branches, \(\psi\le-k\), \(-k\le\psi\le k\) and \(\psi\ge k\), on which it equals \(-\sigma\), \(m\psi\) and \(\sigma\), where
\begin{equation}\label{eq:slope}
m=\frac\sigma k
\end{equation}
is its slope on the central branch; both incident gradients are required at each of the two switches \(\psi=\pm k(\gamma)\), including when one branch meets the stable range in a single point.  The angular term is affine in \((\varphi,\theta)\) at fixed \(\gamma\), which is what makes \eqref{eq:J-mono} checkable in closed form.

The following elementary bounds hold for \(0<\varepsilon\le1\) and will be used throughout:
\begin{equation}\label{eq:elementary}
h\ge\varepsilon,
\qquad
|\gamma|\le h\le\frac\pi2,
\qquad
k\ge|\gamma|,
\qquad
k\ge\frac\pi2-1>0 .
\end{equation}
Indeed \(h^2-\varepsilon^2=\gamma^2-\varepsilon^2\sin^2\gamma\ge0\); the function \(\gamma^2+\cos^2\gamma\) has derivative \(2\gamma-\sin2\gamma\ge0\) on \([0,\pi/2]\) and endpoint value \((\pi/2)^2\), giving \(h\le\pi/2\); and \(k-\gamma\) has derivative \(\sin\gamma-1\le0\) with endpoint value zero, giving \(k\ge|\gamma|\) after reflection.  So \(k\) and \(h\) are positive and smooth on a neighborhood of the closed interval.

\begin{theorem}[sufficient conditions, affine in the profile]\label{thm:sufficient-profile}
Let \(U\ge\pi/2\), let \(\varepsilon\in(0,1]\), and let \(w\) be an amplitude profile.  Suppose that for every \(\gamma\in[0,\pi/2]\),
\begin{align}
w&\ge0,
&
w\cos\gamma\Bigl(\frac\pi2-h\Bigr)&\le1,
&
m'&\ge0,
\tag{S}\label{eq:scalar}
\end{align}
with \(m\) the central slope \eqref{eq:slope}, and that, with the two vertex values of \eqref{eq:Jbracket} below,
\begin{equation}\tag{S\('\)}\label{eq:scalar-init}
\frac\pi2-U\cos\gamma-\frac{\gamma\sigma}k
+w\cos\gamma\max\Bigl(\Bigl|\gamma^2-\frac\pi2h\Bigr|,\ \gamma\Bigl(\frac\pi2-h\Bigr)\Bigr)\le0,
\end{equation}
and suppose that the cleared step residual \(\widetilde E\) of \(H=H_0+J\) is nonpositive at every point of \(\mathcal C\) and every incident gradient.  Then \eqref{eq:W-form} is a feasible potential at level \(U\), and consequently
\begin{equation}\label{eq:profile-bound}
d_{\mathcal D}(p,q)\le U\norm{p-q}
\end{equation}
for every finite set of distinct non-collinear planar points, every specified straight-line Delaunay triangulation \(\mathcal D\) of it, and every pair of its points.
\end{theorem}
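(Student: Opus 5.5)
The plan is to verify the three families (I), (E), (T) of \Cref{def:feasible} for \(W\) in \eqref{eq:W-form} with \(H=H_0+J\), and then apply \Cref{thm:sufficient} and \Cref{prop:reduction}. Finiteness, invariance and degree-one homogeneity were already noted after \eqref{eq:W-form}, so only the three inequalities remain.

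\textbf{Step (E).} This should follow from \Cref{prop:step}, whose hypotheses I check in turn.

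\emph{Regularity.} \(H\) is continuous and piecewise \(C^1\) on the three closed branches \(\psi\le-k\), \(|\psi|\le k\), \(\psi\ge k\). This holds because \(w\), \(k\) and \(h\) are smooth on a neighborhood of \([-\pi/2,\pi/2]\) by \eqref{eq:elementary} and the \(C^2\) extension of \(w\). \(H\) is also locally Lipschitz in \(\psi\).

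\emph{Lipschitz bound \eqref{eq:fiber-lip}.} The \(\psi\)-slope of \(H\) is \(0\) or \(m=\sigma/k\), so it suffices to show \(|\sigma|\le k\). By \eqref{eq:scalar}, \(0\le w\cos\gamma(\pi/2-h)\le1\), hence \(|\sigma|\le|\gamma|\le k\).

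\emph{Monotonicity \eqref{eq:J-mono}.} Only \(J\) depends on \((\varphi,\theta)\), and it does so affinely. So \(H_\theta=w\cos\gamma\,h\) and \(H_\varphi=w\cos\gamma\,\gamma\), and the condition reduces to \(w\ge0\) together with \(h\ge|\gamma|\), which is \eqref{eq:elementary}.

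\emph{Residual.} \(\widetilde E\le0\) on \(\mathcal C\) is assumed directly. I would add a remark that the condition \(m'\ge0\) enters where incident gradients at the switch \(\psi=\pm k(\gamma)\) are compared. It ensures the \(H_\gamma\) term on the central branch is consistent, and it controls the moving switch along the pencil. Since \(\widetilde E\) is required at both incident gradients, this may in fact be automatic, but I would track where \(m'\) appears in checking the hypotheses.

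\textbf{Step (T).} By \eqref{eq:term}, \(\kappa_{\mathrm{end}}\le\tfrac12(\alpha_{\mathrm{end}}+\beta_{\mathrm{end}})\). By \eqref{eq:endrails}, this average equals \(r\pi-r\theta\). Also \(L_-+L_+=2r\cos\gamma\). Substituting these, (T) reduces to
\[
r\pi-r\theta-U(2r\cos\gamma-L_-)-r\bigl[\tfrac\pi2-U\cos\gamma-H\bigr]+r\theta-UL_-\le0,
\]
that is, \(\tfrac\pi2-U\cos\gamma+H\le0\). This looks too strong, so I would use the minimum in \eqref{eq:term} more carefully. Writing \(\kappa_{\mathrm{end}}=\tfrac12(\alpha_{\mathrm{end}}+\beta_{\mathrm{end}})-\tfrac12|\delta+\alpha_{\mathrm{end}}-\beta_{\mathrm{end}}|\), the normalized correction is \(-|\psi-\gamma|\) up to the arc identities, since \((\alpha_{\mathrm{end}}-\beta_{\mathrm{end}})/2r=2\gamma-(\gamma-\varphi)\cdot\)const. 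The needed inequality then becomes a bound \(H\le U\cos\gamma-\pi/2+|\psi-\gamma|\)-type, to be checked on \(\mathcal C\). The \(\psi\)-part is handled by the 1-Lipschitz property: \(H_0\) has slope at most \(1\) and its value at \(\psi=\gamma\) is \(\gamma\sigma/k\) or \(\sigma\). The angular part \(J\) is affine, so it is maximized at the diamond vertices \eqref{eq:vertices}, where its values give \(w\cos\gamma\cdot\{\gamma^2-\tfrac\pi2h,\ \ldots\}\). This is exactly the maximum appearing in \eqref{eq:scalar-init}. So I expect (T) and (I) together to reduce to \eqref{eq:scalar-init}.

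\textbf{Step (I).} Here \(u=w_-(O_1)\). By (E)'s pencil monotonicity, I would reduce the initialization to a degenerate state whose gate shrinks toward \(w_-\), i.e.\ \(\theta\to|\gamma-\varphi|\), the vertex \((\gamma,0)\). At that state, (I) reads like (T) with reversed sign: \(\mu_1=r\theta\)-type and \(W\) evaluated at a vertex. Monotonicity of \(H\) along rails (the same argument as term \((c)\)) then transports it to the actual gate. This gives \(-W(\text{vertex})\ge0\), which is \eqref{eq:scalar-init} at the vertex \((\gamma,0)\) with \(\psi=\gamma\)-type balance.

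\textbf{Main obstacle.} The hardest part is matching the vertex evaluations of \(J\) and the clipped \(H_0\) at the relevant balance exactly to the expression in \eqref{eq:scalar-init}, for both (I) and (T), including the sign cases \(\gamma<0\) handled by evenness of \(w\) and oddness of \(\sigma\). I would first try symmetry \(\gamma\mapsto-\gamma\), \(\varphi\mapsto-\varphi\), \(\psi\mapsto-\psi\) (upper\(\leftrightarrow\)lower) to reduce to \(\gamma\ge0\). After (I), (E), (T) are established, \Cref{thm:sufficient} and \Cref{prop:reduction} give \eqref{eq:profile-bound}.
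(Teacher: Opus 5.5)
\textbf{The gap is in your (T) step.} Your (E) step is the paper's argument. Your (T) step, however, fails as written because of a sign error. From \eqref{eq:arclengths} and \eqref{eq:endrails}, $(\alpha_{\mathrm{end}}-\beta_{\mathrm{end}})/(2r)=\gamma+\varphi$, which is your own intermediate expression, and $\delta/(2r)=\psi-\varphi$. So the correction term is $|\psi+\gamma|$, not $|\psi-\gamma|$. This gives $\kappa_{\mathrm{end}}=r\bigl(\pi-\theta-|\psi+\gamma|\bigr)$, and (T) reads $\frac\pi2-U\cos\gamma+H_0+J-|\psi+\gamma|\le0$.

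If you compare $H_0$ with its value at $\psi=\gamma$, as you propose, you are left needing $\frac\pi2-U\cos\gamma+\frac{\gamma\sigma}k+J\le0$. \eqref{eq:scalar-init} does not supply this: it carries $-\gamma\sigma/k$, and $\gamma\sigma/k\ge0$. (Also, $H_0(\gamma,\gamma)$ is always $\gamma\sigma/k$, never $\sigma$, because $|\gamma|\le k$.)

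The missing idea is to compare at the opposite balance $\psi=-\gamma$ and use that $H_0$ is odd in $\psi$. By \eqref{eq:fiber-lip}, $H_0(\gamma,\psi)\le H_0(\gamma,-\gamma)+|\psi+\gamma|=-\gamma\sigma/k+|\psi+\gamma|$. Then $J\le w\cos\gamma\max(\cdots)$ on the diamond reduces (T) to exactly \eqref{eq:scalar-init}. This is why a single scalar condition controls both ends. Initialization sits at the vertex $(\gamma,0)$ with $\psi=\gamma$. Termination sits at the opposite vertex $(-\gamma,\pi)$ with $\psi=-\gamma$, where $J=w\cos\gamma\,(\frac\pi2h-\gamma^2)$.

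\textbf{Your (I) step is a valid variant of the paper's.} Treat the source as a degenerate gate at $w_-$: the vertex $(\gamma,0)$ with $\delta=0$. Then \eqref{eq:update} with $\alpha=l^+$, $\beta=l^-$, $g=g_1$ is exactly \eqref{eq:init}. Terms (a)--(c) in the proof of \Cref{prop:step} then give $\mu_1-UL_-+W_*\le r\bigl[\frac\pi2-U\cos\gamma-\frac{\gamma\sigma}k-J(\gamma,0)\bigr]$, where $W_*$ is evaluated on $O_1$ at $G_1$. The right side is $\le0$ by \eqref{eq:scalar-init}, since $-J(\gamma,0)=w\cos\gamma\,(\frac\pi2h-\gamma^2)\ge0$.

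The paper instead works at the actual gate directly. There \eqref{eq:fiber-lip} absorbs $\omega_1/2$ by replacing $H_0(\gamma,\psi)$ with $H_0(\gamma,\gamma)$, and the vertex values bound $|J|$ on the whole diamond.

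The pencil plays a different role from the one you describe: it does not shrink the gate. It is used only at the end, to pass from $O_1$ (where $u$, $\mu_1$ and $L_-$ are measured) to $O_2$, the disk of $s_1=(O_2,G_1,\delta_1)$, which is a pencil successor of $O_1$. The same degenerate-step view, run forward into the vertex $(-\gamma,\pi)$ at $w_+$, gives the corrected (T).

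\textbf{Two smaller points.} First, $m'\ge0$ is not used anywhere in this proof; the paper needs it only for \Cref{lem:endpoints}. Second, (S) is assumed only for $\gamma\ge0$, so your bounds $|\sigma|\le k$ and $w\ge0$ need evenness of $w$ and oddness of $\sigma$ to cover $\gamma<0$.
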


\begin{proof}
Finiteness, invariance and homogeneity were noted after \eqref{eq:W-form}.

\emph{The fiber bound and the angular term.}  For \(\gamma\ge0\), the second condition in \eqref{eq:scalar} and \(h\le\pi/2\) bound the bracket of \(\sigma\) in \eqref{eq:H-shape} between \(0\) and \(1\), so \(0\le\sigma\le\gamma\le k\).
Since \(w\) is even, \(\sigma\) is odd, so \(|\sigma|\le k\) on the whole interval, which is exactly \eqref{eq:fiber-lip} for \(H_0\) because \(\clip\) has Lipschitz constant one in \(\psi/k\) and \(|\sigma/k|\le1\); as \(J\) is constant in \(\psi\), \eqref{eq:fiber-lip} holds for \(H\).  Also, by \(w\ge0\) and \(h\ge|\gamma|\),
\begin{equation}\label{eq:Jtheta}
H_\theta=J_\theta=w\cos\gamma\,h\ \ge\ |w\cos\gamma\,\gamma|=|J_\varphi|=|H_\varphi| ,
\end{equation}
on the closed diamond of every \(\gamma\).  Since \(H_0\) does not depend on \(\varphi,\theta\), these two partial derivatives are the same at every incident gradient and for every real \(\psi\), so \eqref{eq:Jtheta} is \eqref{eq:J-mono} on the domain \Cref{prop:step} requires.  For an actual pair of complete rails the increment of \(J\) is in fact
\(J^*-J=\frac{w\cos\gamma}{2r}\bigl[(h-\gamma)\alpha_{i+1}+(h+\gamma)\beta_{i+1}\bigr]\ge0\), by \eqref{eq:rail-angles}.

\emph{Regularity.}  The three expressions \(-\sigma,\ \sigma\psi/k,\ \sigma\) are \(C^2\) on their closed branches since \(k,h>0\), and agree at \(\psi=\pm k\).  At \(\psi=k\) the central and outer gradients in \((\gamma,\psi)\) differ by \((-m k',m)\), which annihilates the switch tangent \((1,k')\); at \(\psi=-k\) the analogous difference annihilates \((1,-k')\).  So the hypotheses of \Cref{lem:curve} hold, and both incident gradients are genuine separate obligations.

\emph{Steps.}  \Cref{prop:step} applies, using \eqref{eq:fiber-lip}, \eqref{eq:Jtheta} and the assumed sign of \(\widetilde E\), and gives every instance of \eqref{eq:E}.

\emph{Initialization.}  At fixed \(\gamma\ge0\), \(J\) is affine on the closed diamond, so by \eqref{eq:vertices} its bracket takes at the four vertices the values
\begin{equation}\label{eq:Jbracket}
\pm\Bigl(\gamma^2-\frac\pi2h\Bigr),
\qquad
\pm\,\gamma\Bigl(\frac\pi2-h\Bigr),
\end{equation}
two numbers and their negatives.  An affine function on a convex polytope attains every value between its vertex extremes, so the larger of the two absolute values controls the bracket everywhere:
\begin{equation}\label{eq:Jmax}
|J|\le w\cos\gamma\max\Bigl(\Bigl|\gamma^2-\frac\pi2h\Bigr|,\ \gamma\Bigl(\frac\pi2-h\Bigr)\Bigr)
\quad\text{on the closed diamond,}
\end{equation}
and that maximum is exactly the one appearing in \eqref{eq:scalar-init}.
Since \(|\gamma|\le k\) we have \(H_0(\gamma,\gamma)=\gamma\sigma/k\), so \eqref{eq:scalar-init} is exactly
\begin{equation}\label{eq:I-cond}
H_0(\gamma,\gamma)+J\ \ge\ \frac\pi2-U\cos\gamma
\qquad\text{on the closed diamond.}
\end{equation}
Now take an actual initialization and evaluate the geometric quantities on \(O_1\), with \(\psi=\varphi+\delta_1/(2r)\).  Its two source rails are the arc coordinates \(l^+,l^-\) of the first gate, so \eqref{eq:init} reads \(\mu_1=(l^++l^--\omega_1)/2\) with \(\omega_1=(|l^+-l^-|-g_1)_+\), and by \eqref{eq:arclengths} the raw normalized balance is \(\gamma\), while the clipped one differs from \(\gamma\) by \(\omega_1/(2r)\).  Using \(l^++l^-=2r\theta\) and \eqref{eq:fiber-lip},
\[
\mu_1-U\,L_-+W
=r\Bigl[\frac\pi2-U\cos\gamma-H_0(\gamma,\psi)-J\Bigr]-\frac{\omega_1}2
\le r\Bigl[\frac\pi2-U\cos\gamma-H_0(\gamma,\gamma)-J\Bigr]\le0
\]
by \eqref{eq:I-cond}.  Applying the pencil comparison of \Cref{prop:step} from \(O_1\) to its actual successor at that same gate decreases the \(W\) term further, which gives \eqref{eq:I}.

\emph{Termination.}  By \eqref{eq:endrails} and \eqref{eq:arclengths}, the symmetric terminal cost is
\(\kappa_{\mathrm{end}}=r\pi-r\theta-r|\psi+\gamma|\), and \(L_+=2r\cos\gamma-L_-\).  Hence
\[
\kappa_{\mathrm{end}}-U\,L_+-W=r\Bigl[\frac\pi2-U\cos\gamma+H_0+J-|\psi+\gamma|\Bigr].
\]
Since \(H_0\) is odd in \(\psi\), \eqref{eq:fiber-lip} gives \(H_0(\gamma,\psi)\le H_0(\gamma,-\gamma)+|\psi+\gamma|=-H_0(\gamma,\gamma)+|\psi+\gamma|\); adding \eqref{eq:Jmax} and applying \eqref{eq:I-cond} shows the bracket is \(\le0\).  This is \eqref{eq:T}, including the case where the two entries of the minimum tie.

So \(W\) is a feasible potential at level \(U\).  \Cref{thm:sufficient} gives \eqref{eq:chain-bound} and \Cref{prop:reduction} gives \eqref{eq:profile-bound}.
\end{proof}

The point of \Cref{thm:sufficient-profile} is the shape of its hypotheses.  Write \(\chi=\clip_{[-1,1]}(\psi/k)\) for the saturating factor of the balance term, and collect everything that \eqref{eq:H-shape} multiplies by the profile into
\[
Z=\cos\gamma\Bigl[\gamma\varphi+h\Bigl(\theta-\frac\pi2\Bigr)-\gamma\Bigl(\frac\pi2-h\Bigr)\chi\Bigr],
\]
so that \(H=\gamma\chi+w\,Z\) and hence \(H_\gamma=(\gamma\chi)_\gamma+w\,Z_\gamma+w'Z\).  Substituting into \eqref{eq:residual} and differentiating at fixed remaining variables therefore gives, wherever \(\cos\gamma>0\),
\begin{equation}\label{eq:affine}
E=R_0+R_1\,w+R_2\,w'-U\cos\varphi ,
\end{equation}
where \(R_0,R_1,R_2\) are explicit functions of \((\gamma,\varphi,\theta,\psi)\) alone, namely
\begin{equation}\label{eq:R-coeffs}
\begin{aligned}
R_0&=K(\theta)-\cos\theta\,\gamma\chi+\cos\theta\,(\psi-\varphi)(\gamma\chi)_\psi-A\,(\gamma\chi)_\gamma,\\
R_1&=-\cos\theta\,Z+\cos\theta\,(\psi-\varphi)Z_\psi-A\,Z_\gamma+\sin\theta\cos\gamma\,h,\\
R_2&=-A\,Z.
\end{aligned}
\end{equation}
The third condition of \eqref{eq:scalar} has the same shape.  Write \(\Sigma=\cos\gamma\,\gamma(\pi/2-h)\) for the quantity that \eqref{eq:H-shape} subtracts from \(\gamma\), so that \(\sigma=\gamma-w\,\Sigma\); differentiating \(m=\sigma/k\) and using \(k'=\sin\gamma\) gives
\[
m'=\frac{k-\gamma\sin\gamma}{k^2}+\Bigl(\frac{\Sigma\sin\gamma}{k^2}-\frac{\Sigma'}k\Bigr)w-\frac\Sigma k\,w',
\]
again affine in \((w,w')\); the remaining conditions \eqref{eq:scalar} and \eqref{eq:scalar-init} are affine by inspection.  Multiplying \eqref{eq:affine} by \(\cos\gamma\) puts the closed-domain condition \(\widetilde E\le0\) in the same form: each coefficient of \eqref{eq:R-coeffs} carries \(A\) only through the product \(\cos\gamma\,A=\sin\varphi-\sin\gamma\cos\theta\), so the three cleared coefficients are defined on all of \(\mathcal C\).

So all the hypotheses of \Cref{thm:sufficient-profile} are, at each fixed \(\varepsilon\) and each point of \(\mathcal C\), closed affine half-space conditions on the three numbers \(\bigl(U,\,w(\gamma),\,w'(\gamma)\bigr)\).  Minimizing \(U\) subject to them is a linear program whose unknown is a function of one variable and whose constraints are indexed by the compact set \(\mathcal C\) --- a semi-infinite linear program.  \Cref{sec:profiles} exhibits a spline feasible point of that program at \(33/20\).

\section{A Certified Spline Profile}\label{sec:profiles}

\Cref{thm:sufficient-profile} turns a bound into a question about one univariate function.  This section exhibits a cubic spline profile feasible at \(33/20\), proves the main theorem, and certifies the required conditions over their whole closed domains.

\subsection{Endpoint reduction and the two reflections}\label{subsec:endpoints}

Before exhibiting the profile, we cut down what has to be checked.  Three reductions are available: eliminate the balance, because the residual is affine in it on each branch; halve the range of \(\gamma\), because a reflection preserves everything; and pair up the branch obligations, because a second reflection interchanges them.

Fix a closed angular triple and intersect the stable interval \([\varphi-\sin\theta,\varphi+\sin\theta]\) separately with each of the three closed branches of \(H_0\), keeping every nonempty intersection, including one that is a single point.  On the central branch \(H_0=m\psi\), so by \eqref{eq:residual-cleared} the cleared residual is affine in \(\psi\) with slope \(-(\sin\varphi-\sin\gamma\cos\theta)m'\); on the two outer branches \(H_0=\pm\sigma\) it is constant in \(\psi\).  An affine function on a compact interval attains its maximum at an endpoint, so:

\begin{lemma}[endpoint reduction]\label{lem:endpoints}
Assume \(m'\ge0\).  Then \(\widetilde E\le0\) throughout \(\mathcal C\), at every incident gradient, if and only if it holds at the following finitely many places for each closed angular triple:
\begin{itemize}
\item the two outer branches, whenever the corresponding intersection is nonempty;
\item the central branch at \(\psi=\varphi-\sin\theta\) or \(\psi=-k\) when \(\sin\varphi-\sin\gamma\cos\theta\ge0\), and at \(\psi=\varphi+\sin\theta\) or \(\psi=k\) when \(\sin\varphi-\sin\gamma\cos\theta\le0\), whichever of the two is the relevant endpoint of the central intersection.
\end{itemize}
When \(\sin\varphi-\sin\gamma\cos\theta=0\), either endpoint may be used, so both closed sign domains may be retained.  At \(\psi=\pm k\) both incident gradients are tested.  At \(\gamma=\pm\pi/2\), \eqref{eq:residual-cleared} gives \(\widetilde E=0\).
\end{lemma}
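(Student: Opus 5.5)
The plan is to fix a closed angular triple $(\gamma,\varphi,\theta)$ and study $\widetilde E$ as a function of $\psi$ alone on the stable interval $I=[\varphi-\sin\theta,\varphi+\sin\theta]$. I would split $I$ by its intersections with the three closed branches $\psi\le-k$, $|\psi|\le k$, $\psi\ge k$ of $H_0$, keeping degenerate one-point intersections. On each branch $H$ agrees with a $C^2$ function, and each incident gradient at a point comes from one branch containing it. So the condition ``$\widetilde E\le0$ at every incident gradient'' is equivalent to requiring, for each branch $B$, that the branch formula $\widetilde E_B$ satisfies $\widetilde E_B\le0$ on $I\cap B$. At $\psi=\pm k$ both adjacent branches contain the point, which accounts for the two incident gradients there.

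On the outer branches $H=\pm\sigma+J$ does not depend on $\psi$. Hence $H_\psi=0$, and $H,H_\gamma,H_\theta$ are constant in $\psi$, so \eqref{eq:residual-cleared} is constant on $I\cap B$. It therefore suffices to test any single point of the intersection whenever that intersection is nonempty.

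On the central branch $H=m\psi+J$, so $H_\psi=m$ and $H_\gamma=m'\psi+J_\gamma$. Substituting into \eqref{eq:residual-cleared}, the terms $-\cos\gamma\cos\theta\,m\psi$ and $\cos\gamma\cos\theta(\psi-\varphi)m$ cancel in their $\psi$-dependence. The only remaining $\psi$-dependence is $-(\sin\varphi-\sin\gamma\cos\theta)m'\psi$, so $\widetilde E_B$ is affine in $\psi$ with the stated slope. The central intersection is the compact interval $[\max(\varphi-\sin\theta,-k),\min(\varphi+\sin\theta,k)]$. Since $m'\ge0$, the maximum of $\widetilde E_B$ is attained at the left endpoint when $\sin\varphi-\sin\gamma\cos\theta\ge0$ (nonpositive slope) and at the right endpoint when it is $\le0$. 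When the coefficient vanishes the function is constant, so either endpoint works. This gives the ``if'' direction; the ``only if'' direction is trivial because the listed points lie in $\mathcal C$.

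Finally, at $\gamma=\pm\pi/2$ we have $\cos\gamma=0$, so the bracketed term vanishes and $\widetilde E=-(\sin\varphi\mp\cos\theta)H_\gamma$. The closed diamond at $\gamma=\pi/2$ has vertices $(\pi/2,0)$, $(-\pi/2,\pi)$, $(\pi/2,0)$, $(-\pi/2,\pi)$ by \eqref{eq:vertices}, so it degenerates to the segment $\theta=\pi/2-\varphi$. On that segment $\sin\varphi=\cos\theta$, and so $\widetilde E=0$; the case $\gamma=-\pi/2$ is symmetric. This requires $H_\gamma$ to remain finite there, which holds because $w$ has $C^2$ extensions and $k,h>0$.

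The main point of care is the bookkeeping at the switches $\psi=\pm k$ and at one-point intersections. I would handle these by phrasing everything branchwise on closed sets, as above, so that no case depends on whether an intersection is degenerate.
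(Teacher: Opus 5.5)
Your proposal is correct and follows the paper's own argument: split the stable interval by the three closed branches of $H_0$, note that $\widetilde E$ is constant in $\psi$ on the outer branches and affine with slope $-(\sin\varphi-\sin\gamma\cos\theta)\,m'$ on the central one, and test the endpoint that this sign dictates (both gradients at $\psi=\pm k$). Your check that the diamond collapses to $\theta=\pi/2\mp\varphi$ at $\gamma=\pm\pi/2$, so that both $\cos\gamma$ and $\sin\varphi-\sin\gamma\cos\theta$ vanish there, is the same justification the paper gives for $\widetilde E=0$ on those faces.
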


At a switch, the central and outer gradients remain separate verification obligations.

Two reflections cut the work down further.  Both are sign bookkeeping: \(w\) is even, hence \(k\) and \(h\) are even and \(\sigma\) is odd, and everything else in \eqref{eq:residual} follows.

\begin{lemma}[two reflections]\label{lem:reflections}
Both of the following maps preserve \(\mathcal C\) and \(\widetilde E\), and preserve \(E\) wherever \(\cos\gamma>0\).
\begin{itemize}
\item At fixed \(\theta\), the map \((\gamma,\varphi,\psi)\mapsto(-\gamma,-\varphi,-\psi)\).  It preserves \(H\) and reverses the signs of \(H_\gamma\), \(H_\psi\) and \(\sin\varphi-\sin\gamma\cos\theta\).
\item At fixed \(\gamma\), the map \((\varphi,\theta,\psi)\mapsto(-\varphi,\pi-\theta,-\psi)\).  It reverses the signs of \(H\), \(H_\gamma\), \(\sin\varphi-\sin\gamma\cos\theta\) and \(\cos\theta\), and preserves \(H_\psi\), \(H_\theta\), \(K(\theta)\), \(\sin\theta\) and \(\cos\varphi\).  It also interchanges the two outer branches, interchanges the two switches \(\psi=\pm k\), and exchanges the two central endpoints of \Cref{lem:endpoints} together with their sign domains \(\sin\varphi-\sin\gamma\cos\theta\ge0\) and \(\sin\varphi-\sin\gamma\cos\theta\le0\).
\end{itemize}
\end{lemma}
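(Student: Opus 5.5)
The plan is to verify both reflections by direct sign bookkeeping on each ingredient of \eqref{eq:residual-cleared}, after first recording the parity of the building blocks. Since \(w\) is even, \(k=\pi/2-\cos\gamma\) and \(h=\sqrt{\gamma^2+\varepsilon^2\cos^2\gamma}\) are even in \(\gamma\), \(\sigma\) is odd, and \(\cos\gamma\) is even. From these, \(H_0=\sigma\,\clip_{[-1,1]}(\psi/k)\) is even under \((\gamma,\psi)\mapsto(-\gamma,-\psi)\) and odd under \(\psi\mapsto-\psi\) at fixed \(\gamma\). The angular term \(J=w\cos\gamma\,[\gamma\varphi+h(\theta-\pi/2)]\) is unchanged by \((\gamma,\varphi)\mapsto(-\gamma,-\varphi)\) and changes sign under \((\varphi,\theta)\mapsto(-\varphi,\pi-\theta)\). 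I would also record that \(K(\pi-\theta)=K(\theta)\); this holds because \(\cos(\pi-\theta)(\pi/2-(\pi-\theta))=\cos\theta(\pi/2-\theta)\).

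For the first map \((\gamma,\varphi,\psi)\mapsto(-\gamma,-\varphi,-\psi)\), I first check the domain. The inequalities \(|\gamma-\varphi|\le\theta\le\pi-|\gamma+\varphi|\) and \(|\psi-\varphi|\le\sin\theta\) are invariant, so the map preserves \(\mathcal C\). Next, \(H\) is preserved, which means its partials \(H_\gamma,H_\varphi,H_\psi\) flip sign and \(H_\theta\) is preserved. The factor \(\sin\varphi-\sin\gamma\cos\theta\) flips sign. In \eqref{eq:residual-cleared}, each term is then unchanged: \((\psi-\varphi)H_\psi\) is a product of two sign changes, \((\sin\varphi-\sin\gamma\cos\theta)H_\gamma\) is likewise a product of two sign changes, and \(\cos\gamma\), \(K\), \(\cos\varphi\), \(\cos\theta\), \(H\), \(\sin\theta H_\theta\) are each fixed. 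Hence \(\widetilde E\) is preserved, and so is \(E=\widetilde E/\cos\gamma\) wherever \(\cos\gamma>0\). The branches \(\psi\le-k\) and \(\psi\ge k\) are swapped, but the corresponding gradients are carried to each other, so the statement at every incident gradient transfers.

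For the second map \((\varphi,\theta,\psi)\mapsto(-\varphi,\pi-\theta,-\psi)\), the domain check uses that \(\theta\mapsto\pi-\theta\) with \(\varphi\mapsto-\varphi\) exchanges the two diamond inequalities, and that \(\sin\theta\) is preserved, so the stable range is preserved. Since \(H\) is negated, \(H_\gamma\) is negated. The partials \(H_\psi\), \(H_\theta\), \(H_\varphi\) are preserved, because each picks up two minus signs, one from \(H\) and one from the variable. The factors \(\cos\theta\) and \(\sin\varphi-\sin\gamma\cos\theta\) are negated, and \(\psi-\varphi\) is negated. Term by term: \(\cos\theta\,H\) is fixed as a product of two sign changes; \(\cos\theta(\psi-\varphi)H_\psi\) has two sign changes and one invariant factor, so it is fixed; \((\sin\varphi-\sin\gamma\cos\theta)H_\gamma\) is fixed. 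Together with \(K\), \(\cos\varphi\), \(\sin\theta\,H_\theta\) this gives \(\widetilde E\) invariant. The branch \(\psi\ge k\) goes to \(\psi\le-k\), the switches are exchanged, and the endpoint \(\varphi-\sin\theta\) maps to \(-(\varphi-\sin\theta)=(-\varphi)+\sin\theta\), so the endpoint data of \Cref{lem:endpoints} is exchanged together with its sign domain.

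The only step needing care, and the one I expect to be the main obstacle, is the statement for every incident gradient at the switches \(\psi=\pm k\) and on the \(\clip\) pieces. I would handle it by noting that each map is a linear isometry of the variables that carries closed pieces to closed pieces, so it carries the \(C^1\) extension on one piece to the \(C^1\) extension on the image piece. Consequently the chain rule applies piecewise and the sign computations above hold for each piece's gradient separately.
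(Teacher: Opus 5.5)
Your proposal is correct and is the same argument as the paper, which justifies the lemma in one line: \(w\), \(k\), \(h\) are even and \(\sigma\) is odd, and term-by-term sign bookkeeping in \eqref{eq:residual-cleared} does the rest. You write that bookkeeping out explicitly, including the check that the incident gradients on the swapped outer branches correspond. One cosmetic point: because of \(\theta\mapsto\pi-\theta\), the second map is an affine (not linear) isometric involution, and this changes nothing in your chain-rule argument.
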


The first reflection reduces the work to \(\gamma\ge0\).  The second pairs up the six endpoint obligations of \Cref{lem:endpoints}, so three of them have to be verified and the other three follow.

\subsection{A spline profile and the main theorem}\label{subsec:spline}

We now construct a spline profile feasible at \(U=33/20\).

To satisfy \Cref{thm:sufficient-profile}, the admissible pairs \(\bigl(w(\gamma),w'(\gamma)\bigr)\) must be the values and derivatives of one even \(C^2\) function.  The affine conditions constrain each pair pointwise, and the spline basis enforces the required consistency across \(\gamma\), including the matching conditions at the knots.

Let \(B_i=B_{i,3}\), \(0\le i\le6\), be the seven cubic clamped B-splines with the exact knot vector
\begin{equation}\label{eq:knots}
(T_0,\dots,T_{10})=\Bigl(0,0,0,0,\tfrac14,\tfrac12,\tfrac34,1,1,1,1\Bigr).
\end{equation}
For \(0\le\tau<1\) and \(0\le i\le9\), set \(B_{i,0}(\tau)=1\) if \(T_i\le\tau<T_{i+1}\), and \(B_{i,0}(\tau)=0\) otherwise.  For \(p=1,2,3\) and \(0\le i\le9-p\), use the Cox recurrence
\begin{equation}\label{eq:cox}
B_{i,p}(\tau)=\frac{\tau-T_i}{T_{i+p}-T_i}B_{i,p-1}(\tau)+\frac{T_{i+p+1}-\tau}{T_{i+p+1}-T_{i+1}}B_{i+1,p-1}(\tau),
\end{equation}
with a term omitted when its denominator vanishes, and with each cubic piece extended to its closed interval by its own polynomial; in particular \(\tau=1\) is evaluated by the last polynomial.

\begin{lemma}[coherence]\label{lem:coherence}
Each \(B_i\) is nonnegative, \(\sum_{i=0}^6B_i\equiv1\) on \([0,1]\), and at each of the three interior knots the values and the first two derivatives of the two adjacent polynomial pieces agree.  The vector of first derivatives at \(\tau=0\) is \((-12,12,0,0,0,0,0)\).  Consequently, for any real \(\lambda_0,\dots,\lambda_6\) with \(\lambda_0=\lambda_1\), the function
\begin{equation}\label{eq:spline-profile}
w(\gamma)=\frac{39}{200}+\frac{28}{125}\gamma^2+\sum_{i=0}^6 \lambda_i\,B_i\Bigl(\frac{|\gamma|}{\pi/2}\Bigr)
\end{equation}
is an even \(C^2\) function on \([-\pi/2,\pi/2]\) with \(C^2\) extensions through the endpoints.
\end{lemma}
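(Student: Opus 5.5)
The plan is to verify the B-spline facts directly from the Cox recurrence \eqref{eq:cox} with the knot vector \eqref{eq:knots}, and then read off the evenness and smoothness of \eqref{eq:spline-profile}.

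\emph{Nonnegativity and partition of unity.} Induct on the degree \(p\). For \(\tau\in[T_i,T_{i+p+1})\) both coefficients in \eqref{eq:cox} lie in \([0,1]\), and \(B_{i,p}\) vanishes outside this interval, so \(B_{i,p}\ge0\). For the sum, regroup \(\sum_iB_{i,p}\): the second coefficient of \(B_{i-1,p}\) and the first coefficient of \(B_{i,p}\) both multiply \(B_{i,p-1}\) and add up to \(1\). The only exceptions are at the ends, where the clamped multiplicity makes the omitted terms have zero basis functions. Hence \(\sum_iB_{i,p}=\sum_iB_{i,p-1}=\dots=1\) on \([0,1)\). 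At \(\tau=1\) the same identity holds by continuity of the last polynomial piece. Since everything is a finite exact computation, one can equally well expand each \(B_i\) explicitly on the four intervals \([0,\tfrac14],\dots,[\tfrac34,1]\) as rational cubics and check both facts symbolically.

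\emph{Smoothness at interior knots and the initial derivatives.} Each interior knot \(\tfrac14,\tfrac12,\tfrac34\) is simple, so standard B-spline theory gives \(C^{3-1}=C^2\) continuity there. In the explicit form, one compares values, first and second derivatives of adjacent pieces at each knot, which are 21 rational identities. Near \(0\), only \(B_0,B_1,B_2,B_3\) are nonzero on \([0,\tfrac14]\). The standard derivative formula \(B_{i,3}'=3\bigl[B_{i,2}/(T_{i+3}-T_i)-B_{i+1,2}/(T_{i+4}-T_{i+1})\bigr]\) at \(\tau=0\), where only \(B_{1,2}(0)=1\) survives among the quadratics, gives \(B_0'(0)=-3/(1/4)=-12\), \(B_1'(0)=12\), and zero for the rest.

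\emph{Evenness and \(C^2\) of \(w\).} Set \(f(\tau)=\sum\lambda_iB_i(\tau)\) on \([0,1]\) and \(w(\gamma)=\tfrac{39}{200}+\tfrac{28}{125}\gamma^2+f(2|\gamma|/\pi)\). This function is even by construction, and it is \(C^2\) on \((0,\pi/2]\) by the knot matching. At \(\gamma=0\), the only issue is the composition with \(|\gamma|\). An even function built from a \(C^2\) function \(f\) of \(|\gamma|\) is \(C^2\) at \(0\) exactly when \(f'(0)=0\): the first derivatives from the two sides are \(\pm f'(0)\cdot 2/\pi\), and the second derivatives \(f''(0)(2/\pi)^2\) agree automatically. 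By the derivative vector, \(f'(0)=-12\lambda_0+12\lambda_1=0\) when \(\lambda_0=\lambda_1\). The \(\gamma^2\) term is smooth. For the endpoints \(\pm\pi/2\), the last cubic piece is a polynomial and can be extended beyond \(\tau=1\), which provides the \(C^2\) extensions that \Cref{def:profile} requires.

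The only step needing care is the \(C^2\) behavior at \(\gamma=0\), which reduces to the vanishing one-sided derivative. Everything else is mechanical exact arithmetic.
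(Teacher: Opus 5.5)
Your proposal is correct and follows essentially the same route as the paper. Nonnegativity comes by induction on the Cox recurrence, and the partition of unity and the value/derivative matching at the three simple interior knots are finite exact checks, which you additionally justify by the standard telescoping argument, the $C^{p-1}$-at-simple-knots fact, and the B-spline derivative formula giving $(-12,12,0,\dots,0)$. Evenness then uses that $\lambda_0=\lambda_1$ kills the one-sided derivative $12(\lambda_1-\lambda_0)/(\pi/2)$ at $\gamma=0$ while the second derivatives match automatically, and the last cubic piece supplies the extension through $\pm\pi/2$.

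One small imprecision: on all of $[T_i,T_{i+p+1})$ the Cox coefficients are only nonnegative, not bounded by $1$. Each can exceed $1$, but only where the lower-degree spline it multiplies vanishes. Nonnegativity on the support of that factor is exactly what the induction needs, as the paper phrases it.
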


\begin{proof}
Every factor of a nonzero term in \eqref{eq:cox} is nonnegative on the support of its lower-degree factor, so nonnegativity follows by induction on \(p\).  Summing the cubic coefficients on each of the four knot intervals gives exactly \((1,0,0,0)\), which is the partition of unity; this is a finite computation with rational coefficients.  The value and two-derivative matchings at the interior knots are likewise finite rational identities, and they hold for each basis function separately, hence for every real linear combination.  For evenness: the quadratic base has zero derivative at \(\gamma=0\), and the correction has derivative \(12(\lambda_1-\lambda_0)/(\pi/2)\) there, so \(\lambda_0=\lambda_1\) makes the one-sided first derivatives vanish; the second derivatives from the two sides agree by the chain rule and the values agree, so the extension across \(\gamma=0\) is \(C^2\).  The outermost polynomial piece extends through \(\gamma=\pm\pi/2\), giving the required regularity in a neighborhood.
\end{proof}

\begin{theorem}[main theorem]\label{thm:main}
Let \(\varepsilon=1/20\), \(U=33/20\), and let \(w\) be the profile \eqref{eq:spline-profile} with
\begin{equation}\label{eq:coeffs}
\lambda=\Bigl(-\tfrac{20921}{500000},\,-\tfrac{20921}{500000},\,\tfrac{173}{200000},\,\tfrac{2299}{1000000},\,\tfrac{1547}{500000},\,\tfrac{28913}{100000},\,\tfrac{60041}{40000}\Bigr).
\end{equation}
Then \(w\) satisfies all hypotheses of \Cref{thm:sufficient-profile}.  Consequently, for every finite set \(S\subset\R^2\) of distinct points not contained in a line, every specified straight-line Delaunay triangulation \(\mathcal D\) of \(S\), and every \(p,q\in S\),
\begin{equation}\label{eq:main-bound}
d_{\mathcal D}(p,q)\le\frac{33}{20}\norm{p-q}=1.65\,\norm{p-q} ,
\end{equation}
where the edges of \(\mathcal D\) carry their Euclidean lengths.  Arbitrary choices of the diagonals of a cocircular polygon are included.
\end{theorem}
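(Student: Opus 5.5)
The plan is to reduce the theorem to a finite list of certified inequalities and then invoke \Cref{thm:sufficient-profile}.

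Since \(\lambda_0=\lambda_1\), \Cref{lem:coherence} shows that \eqref{eq:spline-profile} is an even \(C^2\) profile, so \Cref{def:profile} applies with \(\varepsilon=1/20\). Once every hypothesis of \Cref{thm:sufficient-profile} is verified at \(U=33/20\ge\pi/2\), the bound \eqref{eq:main-bound} follows at once from \eqref{eq:profile-bound}. The statement about cocircular polygons is inherited from \Cref{prop:reduction}, which already quantifies over every specified triangulation.

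The hypotheses fall into two groups.

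\emph{Univariate conditions.} These are the three conditions \eqref{eq:scalar} and the condition \eqref{eq:scalar-init}, each on \(\gamma\in[0,\pi/2]\).
\begin{enumerate}
\item For \(w\ge0\), I would write \(w\) on each knot interval as a polynomial in \(\tau=2\gamma/\pi\) with rational coefficients plus the quadratic base. Bounding \(\pi\) by rational enclosures, I would check positivity of Bernstein coefficients.
\item The conditions \(w\cos\gamma(\pi/2-h)\le1\), \(m'\ge0\) and \eqref{eq:scalar-init} involve \(\cos\gamma\), \(\sin\gamma\) and \(h=\sqrt{\gamma^2+\varepsilon^2\cos^2\gamma}\), so they are transcendental. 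For these I would use a subdivision of \([0,\pi/2]\) into boxes with outward-rounded interval arithmetic. Each box would be refined until the interval enclosure has the correct sign.
\item Two places need care. Near \(\gamma=0\), \(m'\) and the term \(\gamma\sigma/k\) are evaluated through the explicit affine formula for \(m'\) derived after \Cref{thm:sufficient-profile}; the margin should be strict there because \(k\ge\pi/2-1\). At \(\gamma=\pi/2\), \(\cos\gamma=0\), so \eqref{eq:scalar-init} becomes \(\pi/2-\gamma\sigma/k\le0\), which holds since \(\sigma=\gamma=k=\pi/2\). In a neighborhood of this endpoint I would instead use a first-order Taylor enclosure in \(\pi/2-\gamma\), because plain intervals lose the cancellation.
\end{enumerate}

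\emph{The four-dimensional condition.} This is \(\widetilde E\le0\) on \(\mathcal C\), and I would cut it down in three moves.
\begin{enumerate}
\item Since \(m'\ge0\) is already known, \Cref{lem:endpoints} eliminates \(\psi\). Only finitely many candidate points per triple \((\gamma,\varphi,\theta)\) remain: the outer branches, the relevant central endpoint on each sign domain of \(\sin\varphi-\sin\gamma\cos\theta\), and both gradients at each switch.
\item By \Cref{lem:reflections}, it suffices to take \(\gamma\ge0\) and to verify three of the six endpoint obligations.
\item Each remaining obligation is then a function of \((\gamma,\varphi,\theta)\) on the closed diamond family \eqref{eq:diamond}, given in the cleared affine form built from \eqref{eq:R-coeffs}, with \(w,w'\) taken from the spline.
\end{enumerate}
I would parametrize the diamond by barycentric-type coordinates so that it becomes a fixed box. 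I would then cover that box adaptively by interval boxes, with \(\varphi\)- and \(\theta\)-monotone enclosures of \(K\), \(\cos\theta\), \(\sin\theta\) and similar terms, and with the branch selection decided exactly on each box. A box that straddles a switch or a sign domain would be tested under every case it meets.

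I expect the main obstacle to be the faces of \(\mathcal C\) where the residual is tight or vanishes. At \(\gamma=\pi/2\), \(\widetilde E\equiv0\) by \Cref{lem:endpoints}, and near the diamond vertices, such as \(\theta\to0\) with \(\varphi=\gamma\), the margin is small. There, naive interval boxes cannot certify \(\le0\), because the enclosure straddles zero at every resolution. My first approach is to factor out the vanishing quantity: write \(\widetilde E=\cos\gamma\cdot F\) near \(\gamma=\pi/2\), or expand in the distance to the vertex, and certify the sign of the cofactor \(F\) by a mean-value enclosure, using derivative bounds instead of value bounds. If the margin near \(\gamma=0\), where \(h\approx\varepsilon\) is small, proves delicate, I would handle that strip with exact Bernstein expansions after replacing \(\cos\) and \(\sin\) by certified Taylor polynomials with remainder.
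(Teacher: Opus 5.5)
Your proposal is correct and is essentially the paper's proof. After \Cref{lem:coherence}, the scalar conditions and the sign of \(\widetilde E\) are certified by exact and interval arithmetic, using \Cref{lem:endpoints,lem:reflections} on a box in normalized arc coordinates with mean-value rules where the residual degenerates, and then \Cref{thm:sufficient-profile} and \Cref{prop:reduction} finish. Minor points: \(w\ge0\) needs no Bernstein check, since nonnegativity and partition of unity of the B-splines give \(w\ge 39/200+\min_i\lambda_i=76579/500000>0\) at once; and the face where the residual vanishes identically for every \(\gamma\), which makes your mean-value fallback mandatory, is the gate face \(\varphi=\pm\pi/2\), \(\theta=\pi/2\mp\gamma\), while \(m'\ge0\) is also tight at \(\gamma=\pi/2\) (at \(\theta=0\) the residual is the initialization residual, controlled by \eqref{eq:scalar-init}).
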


\begin{proof}
The coefficients satisfy \(\lambda_0=\lambda_1\), so \Cref{lem:coherence} makes \(w\) an even \(C^2\) amplitude profile.  Nonnegativity of the basis and the partition of unity give the uniform bound
\begin{equation}\label{eq:wmin}
w(\gamma)\ \ge\ \frac{39}{200}+\min_i \lambda_i\ =\ \frac{76579}{500000}\ =\ 0.153158\ >\ 0 ,
\end{equation}
which is the first condition in \eqref{eq:scalar}.  The remaining two conditions in \eqref{eq:scalar}, the condition \eqref{eq:scalar-init}, and the sign of the cleared step residual on \(\mathcal C\) at every incident gradient are verified by rigorous arithmetic certification on their full closed domains, including switches and boundary cases (\Cref{subsec:certify}).  The reductions of \Cref{lem:endpoints,lem:reflections} leave three representative branch obligations and \(\gamma\ge0\), and all three are covered exhaustively.  Since \(\pi/2<11/7<33/20\), \Cref{thm:sufficient-profile} applies at \(U=33/20\) and gives \eqref{eq:main-bound} with the stated quantifiers.  \Cref{prop:reduction} supplies the transfer to \(\mathcal D\); the walks it produces use only the mandatory boundary edges of maximal cocircular cells and the shared gate edges.

For \(p\ne q\) the bound is strict.  At an actual initialization, the gate lies in the interior of the angular diamond of its first circle, and \(J_\theta=w\cos\gamma\,h>0\) by \eqref{eq:wmin}.  Thus \(J\) is a nonconstant affine function there, and \(|J|\) is strictly below the vertex bound in \eqref{eq:Jmax}.  Combining this with \eqref{eq:scalar-init} gives
\[
\frac\pi2-U\cos\gamma-H_0(\gamma,\gamma)-J<0,
\]
using the first reflection of \Cref{lem:reflections} when \(\gamma<0\).  The initialization calculation in \Cref{thm:sufficient-profile} therefore makes \eqref{eq:I} strict.  The pencil comparison and the addition of \eqref{eq:E} and \eqref{eq:T} preserve this strictness; the one-disk case follows from \(\pi/2<U\).  The transfer in \Cref{prop:reduction}, followed by the maximum over the finitely many distinct pairs of sites, gives \(\st(\mathcal D)<33/20\).
\end{proof}

\begin{figure}[t]
\centering
\begingroup
\input{figure_assets/figure_style}
\begin{tikzpicture}[paper figure]
\path[use as bounding box] (0,-.55) rectangle (16,5.9);
\node[fheading] at (.15,5.55) {(a)\quad The amplitude profile};
\node[fheading] at (8.65,5.55) {(b)\quad The correction at small angles};
\node[fnote,anchor=west] at (9.55,4.94) {Spline minus its quadratic base, on $0\le \tau\le1/2$.};
\coordinate (fullorigin) at (1.35,.55);
\coordinate (zoomorigin) at (9.70,.55);
\pgfplotsset{profile axis/.style={
  scale only axis,axis lines=left,
  axis line style={draw=figLight,line width=.5pt,-},
  tick style={draw=figLight,line width=.5pt},
  tick align=outside,tick pos=left,
  tick label style={font=\fontsize{9}{11}\selectfont,text=figInk},
  label style={font=\fontsize{10}{12}\selectfont,text=figInk},
  xlabel={$\tau=\gamma/(\pi/2)$},xlabel style={yshift=-2pt},
  ylabel style={yshift=2pt},scaled ticks=false,
  ymajorgrids=true,grid style={draw=figGrid,line width=.35pt},clip=true,
}}
\begin{axis}[
  profile axis,at={(fullorigin)},anchor=south west,width=5.95cm,height=4.45cm,
  xmin=0,xmax=1,ymin=0,ymax=2.4,
  xtick={0,.25,.5,.75,1},xticklabels={$0$,$1/4$,$1/2$,$3/4$,$1$},
  ytick={0,.5,1,1.5,2},ylabel={$w(\gamma)$},
]
\fill[figTeal!4] (axis cs:0,0) rectangle (axis cs:.5,2.4);
\foreach \knot in {.25,.5,.75}{
  \edef\temp{\noexpand\draw[figLight,densely dotted,line width=.5pt]
  (axis cs:\knot,0)--(axis cs:\knot,2.4);}\temp
}
\addplot[figTeal,line width=1.35pt] table[x=s,y=spline] {figure_assets/profile_data.dat};
\node[circle,fill=figTeal,inner sep=1.2pt] at (axis cs:0,0.153158) {};
\node[font=\fontsize{9}{11}\selectfont,anchor=west] at (axis cs:.055,0.30) {$w(0)=0.153158$};
\end{axis}
\begin{axis}[
  profile axis,at={(zoomorigin)},anchor=south west,width=6.0cm,height=4.45cm,
  xmin=0,xmax=.5,ymin=-.046,ymax=.008,
  xtick={0,.125,.25,.375,.5},xticklabels={$0$,$1/8$,$1/4$,$3/8$,$1/2$},
  ytick={-.04,-.02,0},yticklabels={$-0.04$,$-0.02$,$0$},
  ylabel={$\sum_i \lambda_iB_i(\tau)$},
  restrict x to domain=0:.5,unbounded coords=discard,filter discard warning=false,
]
\draw[figLight,densely dotted,line width=.5pt] (axis cs:.25,-.046)--(axis cs:.25,.008);
\draw[figGray,line width=.55pt] (axis cs:0,0)--(axis cs:.5,0);
\addplot[figTeal,line width=1.35pt] table[x=s,y=correction] {figure_assets/profile_data.dat};
\node[circle,fill=figTeal,inner sep=1.2pt] at (axis cs:.368142972775,0) {};
\draw[figGray,line width=.45pt] (axis cs:.368142972775,-.0015)--(axis cs:.368142972775,-.007);
\node[font=\fontsize{9}{11}\selectfont] at (axis cs:.365,-.011) {$\tau\approx0.368$};
\end{axis}
\end{tikzpicture}
\endgroup
\caption{The spline profile \eqref{eq:spline-profile} of \Cref{thm:main}. (a) The profile on the whole range; dotted lines mark the three interior knots, the pale region marks the horizontal range used in (b), and the value at $\tau=1$ is $2.248723\ldots$. (b) The B-spline correction added to the quadratic base $39/200+(28/125)\gamma^2$, which is where the extra shape freedom acts: it is negative at small angles, crosses zero at $\tau\approx0.368$, and grows to $1.501025$ at $\tau=1$.}
\label{fig:profiles}
\end{figure}

\Cref{thm:main} is an existence statement about the displayed profile.  \Cref{fig:profiles} shows its shape: it starts low at \(\gamma=0\), stays nearly flat across the first two knot intervals, and rises steeply near \(\gamma=\pi/2\), so the spline family can adapt the amplitude separately at the two ends of the angular domain.

\subsection{Certifying the closed conditions}\label{subsec:certify}

The conditions to be certified are one-variable inequalities on \([0,\pi/2]\) together with the sign of \(\widetilde E\) on a compact three-dimensional set for each of three representative branches.  Both are established by exhaustive covers in exact arithmetic; we describe the mathematical content and omit the implementation.  The key idea is that every operation returns an interval containing the true value, so an inequality established for the interval is established for the value.

\paragraph{Two obstacles before covering can start.}
Neither is about arithmetic; both concern the form in which the residual is written, since an enclosure is only as good as the expression it is computed from.

First, clearing the denominator as in \eqref{eq:residual-cleared} keeps the residual defined at \(\gamma=\pm\pi/2\), but leaves an expression in which the coefficient of \(H_\gamma\) is a difference of two nearly equal terms near that face.  Write \(\operatorname{sinc}z=\sin z/z\) for \(z\ne0\), with \(\operatorname{sinc}0=1\).  Reparameterizing the closed diamond by two normalized arc coordinates in place of \((\varphi,\theta)\) turns \(A\) into a quotient of \(\operatorname{sinc}\) values whose denominator stays above \(2/\pi\) by concavity of \(\sin\); this gives a form that is smooth on the whole closed cube, and multiplying it by \(\cos\gamma\) returns \(\widetilde E\).  \Cref{subsec:app-cover} carries this out.

Second, several other quantities in the residual are differences of nearly equal terms near the faces, so a naive enclosure of each is wide exactly where the certificate needs it narrow.  Each of them has an exact cancellation-free identity --- for instance \(h-\gamma=\varepsilon^2\cos^2\gamma/(h+\gamma)\), whose denominator is bounded away from zero by \eqref{eq:elementary} --- and on the outer branches the residual is a linear combination of the two nonnegative arc coordinates, so a box touching the gate face is closed as soon as both coefficients are nonpositive.  \Cref{subsec:app-cover} gives the coefficients and the identities used.

\paragraph{Exact arithmetic and exhaustive covers.}
Two kinds of certificate are used.  The polynomial inequalities --- nonnegativity of the profile on each closed knot interval --- are settled by exact rational Bernstein expansions.  A power polynomial is translated to \([0,1]\) and its coefficients converted to the Bernstein basis of the same degree, each Bernstein coefficient being a weighted partial sum of the power coefficients with ratio-of-binomials weights.  Since that basis is nonnegative and sums to one, nonnegative Bernstein coefficients certify the polynomial on the whole interval.

Everything else is settled by exhaustive covers in outward-directed interval arithmetic over exact rational endpoints.  Every elementary operation returns an interval containing the true value, and so does every transcendental function the residual uses: \(\pi\) through a Machin identity, sine and cosine through Taylor polynomials with Lagrange remainders, and \(\operatorname{sinc}\) through its monotonicity in \(|z|\).  \Cref{subsec:app-arith} lists the enclosures.

The domain is covered by closed boxes in the coordinates \eqref{eq:cube}, one of them the angle \(\gamma\) and the other two the normalized arc coordinates.  A box is closed in one of three ways: a required accessibility or sign predicate is refuted with a strictly negative outward upper bound, so the box contains no required point; or the residual has a nonpositive outward upper bound; or one of two mean-value rules below applies.  Otherwise the box is bisected at its exact rational midpoint into two closed children whose union is exactly the parent.  An exhausted queue therefore proves the property on the entire initial domain, since the union of the closed leaves is the initial box.  Excluding a box requires a \emph{strict} sign.

Two mean-value rules close the boxes at the faces where the residual vanishes identically, since there an enclosure of the residual itself is too wide to have a sign.

The first rule handles a box touching such a face.  It integrates a certified sign of a partial derivative out from an exact zero: the residual vanishes on the gate face and, at \(\gamma=\pi/2\), is identically zero along one of the two arc coordinates, so certified signs of the corresponding partial derivatives on an enlarged box bound the residual on the box.

The second rule handles a box meeting or crossing a knot, and is the standard centered enclosure.  Take a closed box with exact rational midpoint, enclose the residual at that midpoint, and enclose each of its partial derivatives on the whole box; integrating along the straight segment from the midpoint, which stays inside the closed convex box, bounds the residual by its midpoint enclosure widened by the sum over the coordinates of the half-width times the largest absolute value in that partial's enclosure.  The rule applies across a knot because the residual on a fixed branch depends on \(w\) and \(w'\), and its first derivative in \(\gamma\) on the true continuous \(w''\), all of which match across knots by \Cref{lem:coherence}.

The exact spline coefficients and the polynomial on each knot interval are listed in \Cref{subsec:app-spline}.

\section{The Agent-Assisted Search}\label{sec:agent}

The coefficients \eqref{eq:coeffs} were produced by a search carried out with a GPT-based multi-agent system.  This section records what was searched and what the search had to supply, since the two are separated by \Cref{thm:sufficient-profile}.

\paragraph{What the theorem leaves to be searched.}
By \Cref{thm:equivalence}, proving a bound at level \(U\) is fitting the excess from above.  By \Cref{thm:sufficient-profile}, a sufficient fit is described by conditions that are affine in \(\bigl(U,w(\gamma),w'(\gamma)\bigr)\) at each point of the compact set \(\mathcal C\).  A candidate for the search is therefore a shape for \(w\) together with rational values of its parameters, and a candidate is admissible exactly when finitely many one-variable inequalities and one family of pointwise inequalities on \(\mathcal C\) hold.  The geometric reductions yield a linear program that serves as the search problem.

The geometric content is established in \Cref{sec:bellman,sec:lp}: the chain bound is expressed through local conditions on gate states, and the step inequality is reduced to a pointwise condition on four angles.  The search operates on the resulting algebraic constraints.

\paragraph{What the system did.}
Three things, in each round.

First, it proposed a shape: the saturating form of the balance term with its two switches at \(\psi=\pm k(\gamma)\), the affine angular term \(J\) of \eqref{eq:H-shape}, the smoothing value \(\varepsilon=1/20\), and later the replacement of a quadratic amplitude by a cubic spline on the knot vector \eqref{eq:knots}.  \Cref{thm:sufficient-profile} holds for every \(\varepsilon\in(0,1]\) and every amplitude profile, so each proposed shape is evaluated against conditions fixed in advance of the search.

Second, having fixed a shape, it solved the resulting linear program numerically.  Sampling the constraints of \(\mathcal C\) on a grid, together with sequences of points approaching the faces, yields a finite linear program in \(U\) and the profile parameters, which a floating-point solver minimizes; the solution is then rationalized.  For the spline shape this produced a sampled optimum near \(1.633\), and \eqref{eq:coeffs} is a rational point with margin above it, giving room for certifying the value \(33/20\).

Third, it ran the exhaustive certification of \Cref{subsec:certify} on the rationalized candidate.  The exhaustive cover in exact arithmetic establishes that the rationalized candidate satisfies the full hypotheses of \Cref{thm:sufficient-profile}.

\paragraph{Certification of the search output.}
The search supplies the explicit rational coefficients in \eqref{eq:coeffs}.  Their admissibility follows from the exact certification of \Cref{subsec:certify}, which establishes the hypotheses needed for \Cref{thm:main}.  Floating-point optimization guides the choice of candidates, and the proof uses their exact coefficients and the verified inequalities.  A sampled optimum is not a bound: the reduction is what makes the difference between the two a finite, checkable list of conditions.

\section{Conclusion and Further Improvements}\label{sec:conclusion}

We have proved an upper bound of \(1.65\) on the stretch factor of every planar Delaunay triangulation.  The proof uses a Bellman formulation of the disk-chain problem and a potential constructed from a univariate spline profile.  The sufficient conditions are affine in the profile and its derivative, which allows us to search for a potential by linear programming and certify it over the full continuous domain.

By extending the potential we have further improved the upper bound to \(1.5949\), which narrows the interval for the worst-case stretch factor to \([1.5932,1.5949]\).  The explicit potential and its certification will be presented in a later version of this paper.

\section*{Acknowledgments}
The framework of this work was developed by the authors.  The search for the specific form of the potential --- the shape of its free part, and the profile together with its coefficients --- was carried out by a GPT-based multi-agent system that we developed.

\appendix

\section{From a Specified Delaunay Triangulation to Chains}\label{app:transfer}

This appendix proves the five statements used in \Cref{subsec:reduction}: the grouping of faces into maximal cocircular cells, the decomposition of a query segment, the replacement of a ladder path by a walk in \(\mathcal D\), the one-disk bound, and the fact that the chains so produced lie in the class \(\mathcal Q\) of \Cref{def:class}.  Throughout, \(S\) and \(\mathcal D\) are as in \Cref{sec:prelim}.  The framework is Xia's \cite{xia2013}; the proofs are included so that the present paper is self-contained.

\Cref{lem:cells} does all the geometric work.  Once the faces are grouped into maximal cocircular cells, whose boundary edges belong to \(\mathcal D\) for every admissible choice of the diagonals, the other four statements are readings of that grouping along the query segment: which cells the segment meets and where it may be cut (\Cref{lem:decomposition}), how a rail is traded for a boundary walk (\Cref{lem:expand}), what a single disk costs (\Cref{lem:one-disk}), and why the conditions of \(\mathcal Q\) hold (\Cref{lem:coverage}).

\subsection{Maximal cocircular cells}\label{subsec:cells}

\begin{lemma}\label{lem:cells}
The triangular faces of \(\mathcal D\) are uniquely grouped into \emph{maximal cocircular cells} with the following properties: each cell is a convex polygon inscribed in a circle whose open disk contains no point of \(S\); the faces of \(\mathcal D\) contained in a cell triangulate it completely; and every edge of the boundary of a cell is an edge of \(\mathcal D\).
\end{lemma}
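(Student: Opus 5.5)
The plan is to define the cells through an equivalence relation on faces and then verify the three properties in turn. Call two triangular faces of \(\mathcal D\) \emph{related} if they share an edge and have the same circumcircle, and take the transitive closure. A cell is the union of one equivalence class. Every face lies in exactly one class, so the grouping is unique once the relation is fixed. Within a class all faces have the same circumcircle \(C\), since equality of circumcircles propagates along the chain of shared edges. Every vertex of every face in the class is therefore a point of \(S\) on \(C\). Because each face's open circumdisk is empty of \(S\), the common open disk of \(C\) contains no point of \(S\).

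Next I show that the union \(R\) of a class is the convex polygon \(Q=\operatorname{conv}(S\cap C)\), and that the faces of the class triangulate it. Let \(V=S\cap C\), so \(Q\) is a convex polygon inscribed in \(C\) whose vertex set is \(V\). Each face of the class is a triangle on points of \(V\), so \(R\subseteq Q\). For the reverse inclusion, first note that every point of \(V\) on \(C\) must be a vertex of the planar triangulation \(\mathcal D\). Then take an edge \(e\) of a class face that is not an edge of \(\partial Q\). This edge is a chord of \(C\), so both sides of it contain points of \(Q\). The face \(f'\) of \(\mathcal D\) on the other side of \(e\) has its third vertex \(z\) on or outside \(C\), because the open circumdisk of the original face is empty. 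If \(z\) were strictly outside \(C\), then the circumdisk of \(f'\), which passes through the endpoints of \(e\) and through \(z\), would contain the arc of \(C\) on \(f'\)'s side of \(e\). That arc carries a vertex of \(V\) in the disk's interior, contradicting the Delaunay property of \(f'\). Hence \(z\in C\), so \(f'\) has circumcircle \(C\) and lies in the same class. Consequently \(R\) has no boundary edge inside \(\operatorname{int}Q\), and a connectedness argument gives \(R=Q\). The faces of the class are faces of a planar triangulation, so their interiors are disjoint, and therefore they triangulate \(Q\) completely.

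For the boundary property, each edge of \(\partial Q\) joins two vertices of \(V\) that are consecutive on \(C\). Since the faces of the class triangulate \(Q\), this edge is an edge of some face of the class, and hence an edge of \(\mathcal D\). The same conclusion can be seen directly: any triangulation of a convex polygon contains all of its boundary edges.

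The main obstacle is the step "\(z\) strictly outside \(C\) forces a contradiction". It needs an arc of \(C\) on the far side of \(e\) that contains a point of \(V\). I would handle it by first proving that every chord edge \(e\) of a class face that is not on \(\partial Q\) has points of \(V\) on both of its sides along \(C\). I would then apply the power-difference identity \eqref{eq:power-affine} to the two circles through \(e\): it shows that on the side of \(e\) opposite to \(z\)'s circle center the new disk contains the arc of \(C\), which locates a vertex of \(V\) inside the open disk. The degenerate cases are \(z\) on the line of \(e\) and collinear subsets of \(S\); I would dispose of them using non-degeneracy of the faces of \(\mathcal D\).
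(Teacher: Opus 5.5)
Your proof is correct, but it takes a different route from the paper.

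\textbf{The paper's route.} It lifts \(S\) to the paraboloid, where an empty open disk becomes a plane supporting \(\widehat S\) from below. Each face of \(\mathcal D\) then lies in a unique maximal lower face of the lifted hull, and the cell is the projection of that face. Coverage comes from a density argument: at a point interior to both a face and the cell, the face's affine lift agrees with the cell's supporting plane on an open set, hence everywhere.

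\textbf{Your route.} You stay in the plane. You flood across shared edges with equal circumcircles, and you use the empty circumdisk of the neighbouring face together with \eqref{eq:power-affine} to show the flood can only stop at edges of \(Q=\operatorname{conv}(S\cap C)\). You then get \(R=Q\) by a clopen argument in \(\operatorname{int}Q\). That argument does go through, because every point of \(\partial R\) lies on \(\partial Q\).

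\textbf{What each buys.} Your route is elementary, reuses a tool the paper already has, and names the cell explicitly as \(\operatorname{conv}(S\cap C)\). The lifting route gets uniqueness, maximality and the global picture at no extra cost. There the cells are the faces of one lower hull, hence a face-to-face convex subdivision, which \Cref{lem:decomposition} later invokes.

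\textbf{Two small repairs.}
\begin{enumerate}
\item Your last paragraph mislocates the arc. The arc of \(C\) that falls inside the open circumdisk \(C'\) of \(f'\) is the arc on \(z\)'s side of \(e\). Indeed \(\pow_C-\pow_{C'}\) is affine, vanishes on the line of \(e\), and is positive at \(z\), hence positive on that whole open side. It is not determined by the position of a circle center, which may lie on either side of \(e\). Your earlier phrasing (``the arc of \(C\) on \(f'\)'s side'') was the right one.
\item To match the statement, you should add that every face of \(\mathcal D\) contained in \(Q\) belongs to the class. Its interior meets the interior of some class face, since the class faces cover \(Q\), and distinct faces of \(\mathcal D\) have disjoint interiors. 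The same remark shows that two faces lie in one cell exactly when they share a circumcircle. That is a more substantive form of the uniqueness than ``unique once the relation is fixed.''
\end{enumerate}
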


\begin{proof}
Lift \(x=(x_1,x_2)\) to \(\widehat x=(x_1,x_2,x_1^2+x_2^2)\).  A circle \(\norm{x-c}^2=R^2\) corresponds to the plane \(z=2c\cdot x+R^2-\norm c^2\), and the vertical difference between \(\widehat y\) and that plane is \(\norm{y-c}^2-R^2\).  So a disk is empty of points of \(S\) exactly when the corresponding plane supports \(\widehat S\) from below.  Each face of \(\mathcal D\) has an empty circumdisk, hence lies in a unique maximal lower face of the convex hull of \(\widehat S\), and the projection of that maximal lower face is a convex polygon inscribed in a circle with empty open disk.  This assigns each face of \(\mathcal D\) to one cell, and the assignment is unique because maximal lower faces are.

To see that the faces assigned to a cell cover it, take a point in its relative interior lying on no edge of \(\mathcal D\).  It lies in the interior of a unique face, and on a planar neighborhood the affine lift of that face agrees with the supporting affine function of the maximal lower face; two affine functions agreeing on an open set are equal, so the three lifted vertices of the face lie on that lower face, and the face belongs to the cell.  Such points are dense in the cell, and a finite union of closed triangles is closed, so those faces cover the cell.

Finally let \(x,y\) be adjacent vertices of the cell on its circle.  The open chord \((x,y)\) lies in the open empty disk, so it contains no point of \(S\).  The boundary of the cell is covered by edges of \(\mathcal D\) and no point of \(S\) subdivides \(xy\), so \(xy\in E(\mathcal D)\).  Consequently the boundary edges of a cell belong to \(\mathcal D\) for every admissible triangulation of the cocircular polygon.
\end{proof}

\subsection{Decomposing a query segment}\label{subsec:decompose}

\begin{lemma}\label{lem:decomposition}
List the points of \(S\) on the segment \([p,q]\) in order, starting at \(p_0=p\) and ending at \(q\).  Each subsegment \(J_j=[p_{j-1},p_j]\) of positive length is either an edge of \(\mathcal D\), or it crosses in order a chain \(\cO_j\) of disks in the sense of \Cref{subsec:chains}, whose gates are shared boundary edges of consecutive maximal cocircular cells and whose terminals are \(p_{j-1},p_j\).
\end{lemma}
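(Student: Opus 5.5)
The plan is to work one subsegment at a time, using the cell structure of \Cref{lem:cells}.

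Fix \(J_j=[p_{j-1},p_j]\) of positive length. By construction its relative interior contains no point of \(S\). If it is an edge of \(\mathcal D\) there is nothing to prove, so assume it is not.

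\textbf{Step 1: a cell sequence.} The maximal cocircular cells tile the convex hull of \(S\). By \Cref{lem:cells} each cell is a convex polygon whose boundary edges all belong to \(\mathcal D\). Walk along \(J_j\) from \(p_{j-1}\). Because the open segment meets no site, it leaves each cell through the relative interior of a boundary edge, or it runs along a boundary edge. The latter is impossible: that edge joins two sites, so it would force \(J_j\) to be that edge, or would put a site in the open segment. Convexity of each cell means the segment meets each cell in a single interval. So the walk produces a finite sequence of cells \(C_1,\dots,C_n\), where \(C_1\) has \(p_{j-1}\) as a vertex and \(C_n\) has \(p_j\) as a vertex. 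Consecutive cells share a boundary edge \(e_i=[a_i,b_i]\), and \(J_j\) crosses \(e_i\) at an interior point. Take \(O_i\) to be the circumdisk of \(C_i\). The gate \(G_i\) is then the common chord \(e_i\), since both circles pass through \(a_i,b_i\). Label \(a_i\) and \(b_i\) on the left and right of the oriented segment.

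\textbf{Step 2: the chain axioms.} Several facts follow from the empty-disk property.
\begin{itemize}
\item Consecutive circles are distinct: otherwise the two cells would be cocircular and adjacent, so they would have merged into one maximal cell.
\item Neither of two consecutive disks contains the other: each passes through two points of the other's circle and is empty of the other's vertices.
\item The crossing points are strictly ordered along \(J_j\), because cells are convex and the sequence visits each interval once.
\item Terminal conditions: \(p_{j-1}\) is a site, so it is not in the open disk \(O_2\); likewise \(p_j\notin\operatorname{int}O_{n-1}\).
\item Tags: the tags \(a_i,b_i\) are sites on the circle of \(O_i\) and of \(O_{i+1}\).
\end{itemize}

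\textbf{Step 3: strong cap-disjointness.} The incoming cap on \(O_i\) is the arc of \(\partial O_i\) beyond the chord \(e_{i-1}\), on the side away from \(C_i\). The outgoing cap is the arc beyond \(e_i\). Both \(e_{i-1}\) and \(e_i\) are boundary edges of the same inscribed convex polygon \(C_i\), so the arcs they cut off lie on opposite sides of \(C_i\). These arcs can meet at most at a shared vertex. Their relative interiors are therefore disjoint, and neither arc's endpoint lies in the interior of the other. I would verify that the cap inside \(O_{i-1}\) is exactly the arc cut off by \(e_{i-1}\) by the radical-axis identity \eqref{eq:radical}. One repeated disk occurrence (the same circle appearing twice in a non-consecutive position) is allowed by the definition, so no injectivity is needed.

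\textbf{Main obstacle.} I expect the hard part to be the claim that the walk never passes through a vertex of a cell. That would make the next cell ambiguous and could make consecutive disks cross only tangentially. This is ruled out only because the open segment contains no site and all cell vertices are sites. I will spell this out carefully, together with the degenerate case \(n=1\) (\(J_j\) is a chord of a single cell but not an edge of it). That case is allowed, with terminals on one circle.
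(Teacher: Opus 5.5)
Your proposal is correct and follows essentially the same route as the paper: walk the maximal cocircular cells of \Cref{lem:cells} along \(J_j\), use the absence of sites in the open segment to exclude vertex transitions and positive-length edge overlaps, take the shared boundary edges as gates, and identify each cap as the circular gap of the cell across its entrance or exit edge, so that strong cap-disjointness follows because these are two distinct edges of one inscribed convex polygon. The paper reads the cap off the lifted supporting functions, \(\pow_{O_{i-1}}=f_i-f_{i-1}\) on \(\partial O_i\), which is the same affine power difference you invoke through \eqref{eq:radical}; in either form, emptiness of \(O_{i-1}\) at the remaining vertices of \(C_i\) is what selects the arc on the side away from \(C_i\), and you should state that step explicitly.
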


\begin{proof}
By \Cref{lem:cells} the maximal cocircular cells form a face-to-face convex subdivision of the triangulated region.  The relative interior of \(J_j\) contains no point of \(S\).  If it overlaps an edge of the subdivision or a diagonal of a cell in positive length, then convexity together with the absence of interior points of \(S\) forces \(J_j\) to be exactly that edge, which is the first case.

Otherwise list the cells \(F_1,\dots,F_m\) crossed by \(J_j\) in query order.  Consecutive cells share a boundary edge of positive length: a transition at a single vertex would make that vertex a point of \(S\) in the relative interior of \(J_j\).  By \Cref{lem:cells} these shared edges are edges of \(\mathcal D\), and we take them as the gates.  Let \(O_i\) be the empty circumdisk of \(F_i\).  If \(f_i\) is the lower supporting affine function of \(F_i\), then for \(x\in\partial O_i\),
\[
\pow_{O_{i-1}}(x)=f_i(x)-f_{i-1}(x),
\]
so the cap of \(O_i\) coming from \(O_{i-1}\) is exactly the circular gap of \(F_i\) across its entrance edge, and similarly for the exit edge.  Since a convex inscribed polygon has distinct entrance and exit edges, those two gaps have disjoint relative interiors and neither contains an endpoint of the other in its relative interior, which is strong cap-disjointness.  The two endpoints of \(J_j\) are vertices of the first and last cell, and emptiness of the respective disks gives the terminal exposure.  Hence these disks, gates and terminals form a chain, and \(J_j\) meets the gates in order.
\end{proof}

\subsection{Expanding a ladder path in the triangulation}\label{subsec:expand}

\begin{lemma}\label{lem:expand}
For a chain produced by \Cref{lem:decomposition},
\[
d_{\mathcal D}(p_{j-1},p_j)\le P_{\cO_j}(p_{j-1},p_j).
\]
\end{lemma}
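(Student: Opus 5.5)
The plan is to take a shortest walk in the ladder of \(\cO_j\) and replace it edge by edge by a walk in \(\mathcal D\) between the same endpoints, never increasing the length.  By \Cref{lem:cut} we may take a simple shortest path.  It uses gates and rails only.  Each gate \(G_i\) is, by \Cref{lem:decomposition}, a shared boundary edge of the consecutive cells \(F_i,F_{i+1}\).  By \Cref{lem:cells} it is therefore an edge of \(\mathcal D\), with weight equal to its chord length, so it is kept as it is.  The terminals \(p_{j-1},p_j\) are vertices of \(\mathcal D\).  Each incidence tag is geometrically a vertex of \(S\): the gate endpoints are cell vertices.  So the replaced pieces concatenate to a walk in \(\mathcal D\) from \(p_{j-1}\) to \(p_j\).

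The substantive step is the rail.  A rail of \(O_i\) is an arc of the circumcircle of the cell \(F_i\), joining two of its vertices.  By the description of caps in the proof of \Cref{lem:decomposition}, the caps of \(O_i\) are exactly the circular gaps of \(F_i\) across its entrance and exit edges.  The rail is the complementary arc on one side, so every vertex of \(F_i\) met along the rail lies on the rail, in circular order.  We replace the rail by the polygonal path through these consecutive vertices of \(F_i\).  Consecutive vertices of the inscribed convex polygon are joined by boundary edges of the cell, which lie in \(\mathcal D\) by \Cref{lem:cells}.  Each chord is no longer than the arc it subtends, so summing gives a boundary walk no longer than the rail.  The degenerate cases need brief checks.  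A zero-length rail joins coincident tags, so the replacement is the empty walk.  At the terminals, the terminal exposure conditions make \(p_{j-1}\) and \(p_j\) vertices of the first and last cells on the rail arcs.

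The main obstacle is making precise that the rail arc contains no stray points and that its endpoints are genuinely consecutive along the boundary path of \(F_i\).  In particular, the incoming and outgoing gate edges must be boundary edges whose endpoints are cut off only by the caps, with no vertex of \(F_i\) hidden inside a cap.  I would handle this with the power identity \(\pow_{O_{i-1}}=f_i-f_{i-1}\) on \(\partial O_i\) from \Cref{lem:decomposition}.  Since \(f_{i-1}\) supports the lifted points from below, this power is nonnegative at every site.  Hence no vertex of \(F_i\) lies in the open cap, and the vertices of \(F_i\) on \(\partial O_i\) outside the open caps are exactly those on the two rails.  Summing the edgewise inequalities then gives \(d_{\mathcal D}(p_{j-1},p_j)\le P_{\cO_j}(p_{j-1},p_j)\).
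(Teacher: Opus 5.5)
Your proposal is correct and follows the paper's proof. Gates are kept as edges of \(\mathcal D\) by \Cref{lem:cells}, and each rail is replaced by the boundary walk through the cell vertices met along it, with each chord satisfying \(2r\sin(\eta/2)\le r\eta\). The step you flag as the main obstacle is not needed: once you list \emph{all} vertices of the cell lying on the rail arc, consecutive ones are circularly adjacent and hence span boundary edges, whether or not other vertices sit inside the caps. Likewise \Cref{lem:cut} is unnecessary, since any shortest ladder walk can be replaced edge by edge.
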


\begin{proof}
Each gate is a shared boundary edge of two maximal cocircular cells, hence an edge of \(\mathcal D\) by \Cref{lem:cells}.  Consider a rail, a marked arc on the circle of radius \(r\) of one cell.  List the vertices of the cell along the arc in its direction as \(x_0,x_1,\dots,x_t\).  Consecutive chords \(x_{\ell-1}x_\ell\) are boundary edges of the cell, hence edges of \(\mathcal D\), and for the central angle increment \(\eta_\ell\) of the arc from \(x_{\ell-1}\) to \(x_\ell\),
\[
\norm{x_{\ell-1}-x_\ell}=2r\sin\frac{\eta_\ell}2\le r\,\eta_\ell .
\]
Summing over \(\ell\) shows that the boundary walk is no longer than the rail.  This holds for a minor arc, a major arc, an arc of length zero and a full \(2\pi\) arc alike, since only the sum of the central angle increments is used.  Replacing every rail of a shortest ladder path by the corresponding boundary walk and keeping all its gates produces a walk in \(\mathcal D\) of length at most \(P_{\cO_j}\).  The resulting walk uses boundary edges of cells and their shared edges.
\end{proof}

\subsection{The one-disk bound}\label{subsec:onedisk}

\begin{lemma}\label{lem:one-disk}
Let \(u\ne v\) lie on a circle of radius \(r\) and let the smaller of the two central angles they subtend be \(\vartheta\in(0,\pi]\).  Then the shorter arc has length \(r\vartheta\), the chord has length \(2r\sin(\vartheta/2)\), and
\[
r\vartheta\le\frac\pi2\cdot2r\sin\frac\vartheta2 .
\]
\end{lemma}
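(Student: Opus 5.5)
The plan is to split the lemma into its three assertions: the first two are elementary circle geometry, and the third is a concavity inequality for the sine.

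For the two length formulas, the chord between two points at central angle \(\vartheta\) is the base of an isosceles triangle with legs \(r\). Dropping the perpendicular from the center halves both the apex angle and the base, which gives length \(2r\sin(\vartheta/2)\). The shorter arc subtends \(\vartheta\le\pi\), so its length is \(r\vartheta\) by the definition of radian measure.

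For the inequality, divide by \(r>0\) and set \(t=\vartheta/2\in(0,\pi/2]\). The claim becomes \(2t\le\pi\sin t\), that is \(\sin t\ge (2/\pi)t\), which is Jordan's inequality. I would prove it by concavity. The function \(\sin\) is concave on \([0,\pi/2]\) because its second derivative \(-\sin t\) is at most \(0\) there. A concave function lies above its chord, and the chord from \((0,0)\) to \((\pi/2,1)\) is exactly the line \(t\mapsto(2/\pi)t\). Multiplying back by \(r\) gives \(r\vartheta\le\frac\pi2\cdot2r\sin(\vartheta/2)\).

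Equality holds only at \(\vartheta=\pi\), the diametral case. This matches the strictness remark used later in \Cref{thm:main}, where \(\pi/2<U\) handles the one-disk case.

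Finally, this lemma yields \eqref{eq:one-disk}. For \(n=1\) the ladder consists of the two rails between \(u\) and \(v\), so \(P_{\cO}(u,v)\) is the length of the shorter arc, \(r\vartheta\). The chord length \(2r\sin(\vartheta/2)\) equals \(\norm{u-v}\). I do not expect any real obstacle; the only step that needs care is choosing \(\vartheta\le\pi\), so that \(t\) stays in the range where \(\sin\) is concave.
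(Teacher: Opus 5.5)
Your proposal is correct and follows the paper's proof essentially verbatim: you reduce the claim to $\sin t\ge(2/\pi)t$ on $(0,\pi/2]$ and use concavity of $\sin$ on $[0,\pi/2]$, whose chord from $(0,0)$ to $(\pi/2,1)$ is exactly that line. One small point concerns your remark that equality holds only at $\vartheta=\pi$: this needs strict concavity, which does hold because $-\sin t<0$ on $(0,\pi/2]$, but it goes beyond what the lemma itself asserts.
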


\begin{proof}
The claim is \(\vartheta/2\le(\pi/2)\sin(\vartheta/2)\) with \(\vartheta/2\in(0,\pi/2]\), that is \(\sin t\ge(2/\pi)t\) for \(t\in(0,\pi/2]\).  The function \(\sin\) is concave on \([0,\pi/2]\) and agrees with the linear function \((2/\pi)t\) at both endpoints, so it lies above it on the interval.  Equality holds only at \(t=\pi/2\), i.e.\ for antipodal terminals.
\end{proof}

For a one-disk chain, the ladder consists of the two complementary arcs, so \(P_{\cO}(u,v)\) is the shorter of them and \eqref{eq:one-disk} follows.

\subsection{Coverage of the restricted class}\label{subsec:coverage}

\begin{lemma}\label{lem:coverage}
Every chain produced by \Cref{lem:decomposition}, equipped with the line of \([p,q]\) oriented from \(p_{j-1}\) to \(p_j\), belongs to the class \(\mathcal Q\) of \Cref{def:class}.
\end{lemma}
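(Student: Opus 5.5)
We verify conditions (i)--(iv) of \Cref{def:class} one at a time, and the structural facts come from \Cref{lem:cells,lem:decomposition}. The terminals \(u=p_{j-1}\) and \(v=p_j\) lie on the line with \(x(u)<x(v)\) by construction of the orientation. For \(m=1\), \(J_j\) lies in a single cell, so \(u\) and \(v\) are distinct points on that cell's circle and nothing more is needed. So assume \(m\ge2\), with cells \(F_1,\dots,F_m\) listed in query order and gates \(G_i=F_i\cap F_{i+1}\).

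\emph{Conditions (ii) and (iv).} The segment \(J_j\) passes from \(F_i\) into \(F_{i+1}\) through the shared edge \(G_i\). It does so at a point \(c_i\) that is not a vertex, since the relative interior of \(J_j\) contains no site, and \(J_j\) does not run along \(G_i\), since that case was excluded in \Cref{lem:decomposition}. So \(J_j\) crosses \(G_i\) transversally at an interior point of the chord. Hence the two tags lie strictly on opposite sides of the line, and we label them \(a_i\) (above) and \(b_i\) (below). The cells are convex and \(J_j\) meets each of them in a single subsegment. Listing the cells in query order therefore gives \(x(c_{i-1})<x(c_i)\). The strict inequality holds because the entry and exit edges of a convex cell are distinct and cannot both contain the same point of \(J_j\) except at a shared vertex, which is excluded. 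The endpoints \(u\) and \(v\) are vertices of \(F_1\) and \(F_m\) and are not on any gate, which gives \(x(u)<x(c_1)\) and \(x(c_{m-1})<x(v)\).

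\emph{Condition (i).} Consecutive circumcircles pass through the two distinct endpoints of \(G_i\), so they meet in at least two points. They are distinct circles, since cells are maximal and so \(F_i\cup F_{i+1}\) is not cocircular. By the lifting argument of \Cref{lem:cells}, neither disk contains the other: each has an empty open disk and contains vertices of its own cell outside the other disk. Two distinct circles meeting in exactly two points cross properly, which gives the strict inequalities \(|r_i-r_{i+1}|<\norm{o_i-o_{i+1}}<r_i+r_{i+1}\).

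\emph{Condition (iii).} This is the main obstacle. I would use the identity \eqref{eq:power-affine}: along the query line, \(\pow_{O_i}-\pow_{O_{i+1}}\) equals \(2t\,(o_{i+1}-o_i)\cdot e\) with \(t\) measured from \(c_i\). Take a point just after \(c_i\) on \(J_j\). It lies in the interior of \(F_{i+1}\), hence strictly inside \(O_{i+1}\). It also lies strictly outside \(O_i\), or on its boundary only in degenerate position: the lifted plane of \(F_i\) lies strictly below \(\widehat S\)-hull's face for \(F_{i+1}\) there, so \(f_{i+1}-f_i>0\) past the gate, exactly as in the formula \(\pow_{O_{i}}=f_{i+1}-f_i\) of \Cref{lem:decomposition}. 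So the power difference is positive for small \(t>0\), which forces \((o_{i+1}-o_i)\cdot e>0\). The care needed is to make "strictly" rigorous. The two lower faces are distinct maximal faces meeting along the lifted edge \(G_i\), and the hull is convex, so \(f_{i+1}>f_i\) holds on the open side of \(G_i\) containing \(F_{i+1}\). I would state this as a consequence of convexity of the lower hull, and then read off the sign.
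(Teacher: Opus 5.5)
Your route is the paper's: (i), (ii) and (iv) come from the same incidence and transversality facts, and (iii) is read off the slope \(2(o_{i+1}-o_i)\cdot e\) of the power difference in \eqref{eq:power-affine} once its sign just past \(c_i\) is known. The step that fails as written is how you get that sign.

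A point just after \(c_i\) is not outside \(O_i\). The crossing \(c_i\) is interior to the chord \(G_i\) of \(O_i\), so it lies in the open disk. Nearby points beyond the gate therefore lie in the circular segment of \(O_i\) cut off by \(G_i\), and both powers are negative there. The identity \(\pow_{O_i}=f_{i+1}-f_i\) that you quote holds only on \(\partial O_{i+1}\); it is the formula of \Cref{lem:decomposition} with the index shifted. It cannot be applied at interior points of \(F_{i+1}\), so ``inside \(O_{i+1}\), outside \(O_i\)'' is not the reason the difference is positive.

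The repair uses what you already wrote in your last two sentences. Since \(\pow_O(z)=\norm z^2-f_O(z)\) for the lifted plane \(f_O\) of \(O\), the identity \(\pow_{O_i}-\pow_{O_{i+1}}=f_{i+1}-f_i\) holds at every point of the plane. Your lower-hull argument gives \(f_i\le f_{i+1}\) on \(F_{i+1}\), and the two planes are distinct and agree along the line of \(G_i\). So the difference is strictly positive on the open side of that line containing \(F_{i+1}\), which gives \(e\cdot(o_{i+1}-o_i)>0\). This is exactly the paper's step. There, evaluating the affine difference at the vertices of \(F_i\) and \(F_{i+1}\) and invoking emptiness of the other disk is the same inequality \(f_i\le f_{i+1}\) on \(F_{i+1}\), stated without the lifting.

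A smaller bookkeeping point: you name each gate's tags by side independently, while in a chain the rails join \(a_{i-1}\) to \(a_i\). The paper closes this by appealing to \Cref{lem:rail-order} once (i)--(iii) hold, with strong cap-disjointness supplied by \Cref{lem:decomposition}. You should add that sentence.
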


\begin{proof}
Each maximal cocircular cell contains a nondegenerate triangle, so its circumdisk has positive radius.  Adjacent cells have distinct circumcircles and share a boundary edge with two distinct endpoints lying on both circles, so consecutive circles meet at two distinct points: they cross properly, and neither disk contains the other.  This is \Cref{def:class}(i).

The relative interior of the query subsegment contains no point of \(S\), so it meets each shared gate in the relative interior of that chord; a contact at a gate endpoint would be an interior point of \(S\).  The query line cannot coincide with the supporting line of a gate, since a positive-length overlap belongs to the actual-edge case of \Cref{lem:decomposition}.  Hence the crossings are transverse, the two gate endpoints lie strictly on opposite sides of the line, and this is (ii).  Each traversed cell has a query interval of positive length, so consecutive crossings are distinct and strictly ordered, and the source and terminal lie strictly before and after all interior crossings; this is (iv).  For \(m=1\) only the two distinct terminals on one positive-radius circle are needed.

We prove the forward condition (iii) using power differences.  For adjacent cells \(F_i,F_{i+1}\) put \(F=\pow_{O_i}-\pow_{O_{i+1}}\).  At each vertex \(z\) of \(F_i\) we have \(\pow_{O_i}(z)=0\) and, by emptiness of the next disk, \(\pow_{O_{i+1}}(z)\ge0\), so \(F(z)\le0\).  Since \(F\) is affine it is nonpositive on the convex hull \(F_i\); the two circles are distinct so \(F\) is a nonzero affine function, and a nonzero affine function that is nonpositive on a set with nonempty planar interior is strictly negative at every interior point.  Exchanging the two cells gives \(F>0\) in the interior of \(F_{i+1}\).  Along the query line, \(F\) vanishes at the shared gate crossing, is negative just before and positive just after it, so its slope there is positive; by \Cref{lem:power-caps} that slope is \(2\,e\cdot(o_{i+1}-o_i)\), whence \(e\cdot(o_{i+1}-o_i)>0\).

Finally, the caps supplied by \Cref{lem:decomposition} are strongly disjoint.  With (i)--(iii) established, \Cref{lem:rail-order} gives the upper-to-upper and lower-to-lower matching and the rail lengths, including zero-length rails when the entrance and exit edges of a cell share a vertex.  All conditions of \Cref{def:class} hold.
\end{proof}

The construction in \Cref{lem:cells,lem:decomposition} handles degeneracies by grouping cocircular faces into cells, splitting the query segment at interior points of \(S\), and assigning a positive-length overlap to the actual-edge case.  The resulting disk chains satisfy the conditions of \(\mathcal Q\) directly.

\section{The Excess and Its Value Equation}\label{subsec:excess}

\begin{definition}[excess]\label{def:excess}
A \emph{continuation} \(\eta\) from a gate state \(s_i\) is a finite sequence of steps, possibly empty, followed by a termination.  Its \emph{payoff} is the cost of the remaining ladder minus \(U\) times the remaining progress:
\[
F_U(\eta;s_i)
=\Bigl(\textstyle\sum_{\text{steps }k}j_k\Bigr)+\kappa_{\mathrm{end}}
-U\Bigl(\textstyle\sum_{\text{steps }k}\Delta x_k+L_+\Bigr),
\]
where the balances are propagated by \eqref{eq:update} along the way and \(\kappa_{\mathrm{end}}\) is evaluated at the final state.  The \emph{excess} is
\[
W^*_U(s_i)=\sup\bigl\{F_U(\eta;s_i)\ :\ \eta\text{ is a finite continuation from }s_i\bigr\}.
\]
\end{definition}

The empty continuation, which terminates at once, shows the supremum is over a nonempty set, so \(W^*_U(s_i)\ge\kappa_{\mathrm{end}}-UL_+>-\infty\).  We prove finiteness of the supremum below assuming \eqref{eq:chain-bound} holds for every chain in \(\mathcal Q\).

Two elementary properties are needed first.  Fix the disks and gates of a continuation and compare two initial balances.  Each component of the exact distance update in \Cref{lem:updates} is the minimum of two endpoint distances plus fixed edge lengths.  Such an update satisfies
\[
\bigl|\min(x+a,y+b)-\min(\widetilde x+a,\widetilde y+b)\bigr|
\le\max\bigl(|x-\widetilde x|,|y-\widetilde y|\bigr).
\]
Thus the maximum difference between the two pairs of endpoint distances cannot increase, and the terminal minimum in \eqref{eq:term} has the same property.  Take the initial ledger to be zero in both computations, so the initial pairs are
\[
(\delta_i/2,-\delta_i/2),\qquad
(\widetilde\delta_i/2,-\widetilde\delta_i/2).
\]
Their maximum difference is \(|\delta_i-\widetilde\delta_i|/2\), and the final computed values are the respective sums \(\sum_k j_k+\kappa_{\mathrm{end}}\).  Since the progress is the same in both computations, subtracting \(U\) times that progress gives
\begin{equation}\label{eq:half-lip}
\bigl|F_U(\eta;s_i)-F_U(\eta;\widetilde s_i)\bigr|\le\tfrac12\bigl|\delta_i-\widetilde\delta_i\bigr|
\end{equation}
for \(s_i=(O_{i+1},G_i,\delta_i)\) and \(\widetilde s_i=(O_{i+1},G_i,\widetilde\delta_i)\) with the same geometry.  And a continuation \(\widehat\eta\) from \(s_{i+1}\) preceded by one step from \(s_i\) gives a continuation \(\eta\) from \(s_i\), whose payoff is
\begin{equation}\label{eq:concat}
F_U(\eta;s_i)=j_i-U\,\Delta x_i+F_U(\widehat\eta;s_{i+1}).
\end{equation}

\begin{theorem}\label{thm:equivalence}
Let \(U\ge\pi/2\).  Then \eqref{eq:chain-bound} holds for every chain in \(\mathcal Q\) if and only if a feasible potential at level \(U\) exists.  When these equivalent conditions hold, the excess \(W^*_U\) is finite at every gate state and every balance in \([-g_i,g_i]\), it is a feasible potential at level \(U\), it satisfies the value equation
\begin{equation}\label{eq:value}
W^*_U(s_i)=\max\Bigl\{\kappa_{\mathrm{end}}-U\,L_+,\ \ \sup_{\text{steps from }s_i}\bigl[j_i-U\,\Delta x_i+W^*_U(s_{i+1})\bigr]\Bigr\},
\end{equation}
and it is the pointwise smallest feasible potential at level \(U\): every feasible \(W\) satisfies \(W^*_U\le W\).
\end{theorem}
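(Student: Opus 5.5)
The direction from a feasible potential to the chain bound is \Cref{thm:sufficient}, so the content lies in the converse and in the structure of \(W^*_U\). Throughout I assume \eqref{eq:chain-bound} for every chain in \(\mathcal Q\) and treat \(W^*_U\) as the candidate potential. Invariance of \(W^*_U\) under isometries fixing the query line, and homogeneity of degree one, follow directly from \Cref{def:excess}. The rail lengths, \(j_k\), \(\Delta x_k\), \(\kappa_{\mathrm{end}}\) and \(L_+\) all transform covariantly, so the set of payoffs is transformed accordingly.

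\emph{Finiteness.} Fix a gate state \(s=(O,G,\delta)\). By \Cref{lem:fiber} it has an admissible predecessor, namely any \(O(Y)\) with \(Y<X\). I would use it to prepend an initialization as follows.

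First, choose \(Y\) very negative, so that the predecessor disk is huge and its backward query point \(u=w_-(O(Y))\) lies to the left of \(c\). Next, take the balance \(\delta'\) produced by this one-disk initialization and compare it with \(\delta\) using \eqref{eq:half-lip}. The error is at most \(|\delta-\delta'|/2\le g\).

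Now take any continuation \(\eta\) from \(s\). Prefixing it with this initialization gives a genuine chain in \(\mathcal Q\); the ordering conditions hold by \Cref{lem:power-caps} and \Cref{lem:fiber}. Then \eqref{eq:total}, \eqref{eq:progress} and \eqref{eq:chain-bound} give
\[
\mu_1-U\bigl(x(c)-x(u)\bigr)+F_U(\eta;s')\le0 .
\]
Hence \(F_U(\eta;s)\le g+U(x(c)-x(u))-\mu_1\), a bound independent of \(\eta\), so \(W^*_U(s)<\infty\).

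\emph{Value equation and feasibility.} The value equation \eqref{eq:value} follows by splitting continuations into the empty one and those that begin with a step, and then applying \eqref{eq:concat} and taking suprema. From \eqref{eq:value}, inequalities \eqref{eq:T} and \eqref{eq:E} are immediate.

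For \eqref{eq:I}, fix an initialization producing \(s_1\) and ledger \(\mu_1\). Every continuation from \(s_1\) completes it to a chain in \(\mathcal Q\), so the chain bound gives
\[
\mu_1-U\bigl(x(c_1)-x(u)\bigr)+F_U(\eta;s_1)\le0 .
\]
Taking the supremum over \(\eta\) yields \eqref{eq:I}.

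\emph{Minimality.} Let \(W\) be feasible. For any continuation from \(s_i\), sum one instance of \eqref{eq:E} per step and end with \eqref{eq:T}. The intermediate potentials telescope, leaving \(F_U(\eta;s_i)\le W(s_i)\). Taking the supremum gives \(W^*_U\le W\).

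The main obstacle is the finiteness step. The difficulty is realizing an arbitrary gate state \((O,G,\delta)\) with a prescribed balance as the state of a genuine chain, including boundary balances \(|\delta|=g\). I would first prepend a single predecessor disk from the pencil and absorb the balance mismatch through the half-Lipschitz estimate \eqref{eq:half-lip}, rather than attempting exact realization. The points that still need care are that the initialization source must be strictly exposed and that the crossing order must hold; both come from \Cref{lem:power-caps}.
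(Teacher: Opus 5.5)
Your proposal is correct and follows the paper's proof essentially step by step. A pencil predecessor \(O(Y)\) with \(Y<X\) supplies an exposed source, and the chain bound applied to that initialization plus an arbitrary continuation bounds the payoff. Then \eqref{eq:half-lip} absorbs the balance mismatch, and the value equation, feasibility and minimality follow exactly as you describe.

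The only difference is cosmetic. The paper takes \(Y=X-\epsilon\), and your choice of a very negative \(Y\) (a huge predecessor disk) is unnecessary but harmless. The crossing \(c\) is interior to the common chord and hence to every pencil disk, so \(x(w_-(O(Y)))<x(c)\) for every \(Y<X\).
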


\begin{proof}
The implication from a feasible potential to the chain bound is \Cref{thm:sufficient}.

Assume \eqref{eq:chain-bound} holds for every chain in \(\mathcal Q\), and fix a state \(s_i\).  We first bound the excess from above.  The idea is to put one predecessor and one source in front of \(s_i\), which turns every continuation from \(s_i\) into an actual chain, so that the chain bound applies to it.

For \(\epsilon>0\), let \(O_\epsilon=O(X-\epsilon)\) be the member of the pencil \eqref{eq:pencil} just behind the current disk, which \Cref{lem:fiber} makes an admissible predecessor.  The crossing \(c_i\) is interior to the gate chord, hence interior to \(O_\epsilon\), so the query line meets \(\partial O_\epsilon\) at two points with \(c_i\) strictly between them; take the backward one as the source \(u_\epsilon\).  By \eqref{eq:radical} its power with respect to \(O_{i+1}\) is a positive multiple of \(\epsilon\), so \(u_\epsilon\) is exposed to \(O_{i+1}\).  It lies strictly before \(c_i\), while all crossings of any continuation from \(s_i\) lie strictly after \(c_i\).  Hence for every \(\epsilon>0\) and every finite continuation, the initialization at \(O_\epsilon\) followed by that continuation is a chain in \(\mathcal Q\).

Let \(L_\epsilon=x(c_i)-x(u_\epsilon)>0\) be the progress of that initialization and let \(\mu_\epsilon,\delta_\epsilon\) be its data \eqref{eq:init}, with \(s_\epsilon=(O_{i+1},G_i,\delta_\epsilon)\).  By \eqref{eq:total} and the chain bound, the sum of \(\mu_\epsilon\) and the continuation cost is at most \(U\) times the total progress.  The total progress is \(L_\epsilon\) plus that of the continuation; that is
\[
F_U(\eta;s_\epsilon)\le U L_\epsilon-\mu_\epsilon .
\]
By \eqref{eq:half-lip}, for every \(\delta_i\in[-g_i,g_i]\),
\[
F_U(\eta;s_i)\le U L_\epsilon-\mu_\epsilon+\tfrac12|\delta_i-\delta_\epsilon| .
\]
For one fixed \(\epsilon\) the right-hand side is a finite number independent of the number, sizes and geometry of the disks in the continuation.  It therefore bounds the supremum, and \(W^*_U(s_i)\) is finite.

Finiteness in hand, \eqref{eq:value} follows by partitioning the continuations according to whether they terminate immediately or take a first step, using \eqref{eq:concat}; the supremum over an empty step class is \(-\infty\).  Feasibility is now immediate: \eqref{eq:T} is the empty continuation; \eqref{eq:E} follows by taking the supremum of \eqref{eq:concat} over continuations from \(s_{i+1}\); and for \eqref{eq:I}, applying the chain bound to an actual initialization followed by an arbitrary continuation from \(s_1\) gives \(F_U(\eta;s_1)\le U\bigl[x(c_1)-x(u)\bigr]-\mu_1\), and taking the supremum over continuations gives exactly \(\mu_1-U\bigl[x(c_1)-x(u)\bigr]+W^*_U(s_1)\le0\).  Invariance is clear because an isometry fixing the query line maps continuations bijectively to continuations with the same weights and progress; positive scaling multiplies all rail and gate lengths, progress and balance by the same factor, hence multiplies each payoff by it, and a positive scalar passes through a supremum.  Thus \(W^*_U\) is a feasible potential at level \(U\).

Finally let \(W\) be any feasible potential at level \(U\).  Fix a state and a finite continuation, propagate the balance by \eqref{eq:update}, add the instances of \eqref{eq:E} along the continuation followed by its instance of \eqref{eq:T}.  The intermediate values of \(W\) cancel and what remains is exactly \(F_U(\eta;s_i)-W(s_i)\le0\).  Taking the supremum over continuations gives \(W^*_U\le W\).
\end{proof}

\Cref{thm:equivalence} gives the search for a feasible potential a specific target.  The bound at level \(U\) is equivalent to the existence of a feasible potential.  Since every feasible potential dominates the excess pointwise, exhibiting a bound means fitting the excess from above by a function one can write down, and the fit has exactly the slack that \eqref{eq:I}, \eqref{eq:E} and \eqref{eq:T} tolerate.

Taking suprema in \eqref{eq:half-lip} also gives \(|W^*_U(s_i)-W^*_U(\widetilde s_i)|\le\frac12|\delta_i-\widetilde\delta_i|\) for two balances at the same geometry, so the excess inherits the modulus that \eqref{eq:half-lip} supplies for each individual continuation.

\section{Exact Parameters}\label{app:params}

\subsection{The spline profile}\label{subsec:app-spline}

The seven coefficients of \eqref{eq:coeffs}, with decimal values:
\[
\begin{aligned}
\lambda_0&=-\tfrac{20921}{500000}=-0.041842, &
\lambda_1&=-\tfrac{20921}{500000}=-0.041842, &
\lambda_2&=\tfrac{173}{200000}=0.000865,\\
\lambda_3&=\tfrac{2299}{1000000}=0.002299, &
\lambda_4&=\tfrac{1547}{500000}=0.003094, &
\lambda_5&=\tfrac{28913}{100000}=0.28913,\\
\lambda_6&=\tfrac{60041}{40000}=1.501025. & &  & &
\end{aligned}
\]
The base is \(39/200=0.195\) and \(28/125=0.224\), and \(\varepsilon=1/20\).  The condition \(\lambda_0=\lambda_1\) of \Cref{lem:coherence} holds.

Writing \(\tau=|\gamma|/(\pi/2)\), the correction \(\sum_i\lambda_iB_i(\tau)\) is the following polynomial on each closed knot interval, in ascending powers of \(\tau\):
\begin{center}
\small
\begin{tabular}{@{}lrrrr@{}}
\toprule
interval & \(1\) & \(\tau\) & \(\tau^2\) & \(\tau^3\)\\
\midrule
\([0,\tfrac14]\) &
\(-\tfrac{20921}{500000}\) & \(0\) & \(\tfrac{128121}{125000}\) & \(-\tfrac{25433}{12500}\)\\[0.35em]
\([\tfrac14,\tfrac12]\) &
\(-\tfrac{167929}{2000000}\) & \(\tfrac{50547}{100000}\) & \(-\tfrac{62307}{62500}\) & \(\tfrac{1653}{2500}\)\\[0.35em]
\([\tfrac12,\tfrac34]\) &
\(-\tfrac{1146357}{2000000}\) & \(\tfrac{1720377}{500000}\) & \(-\tfrac{171687}{25000}\) & \(\tfrac{71483}{15625}\)\\[0.35em]
\([\tfrac34,1]\) &
\(-\tfrac{25575651}{1000000}\) & \(\tfrac{25862661}{250000}\) & \(-\tfrac{70107}{500}\) & \(\tfrac{1995001}{31250}\)\\
\bottomrule
\end{tabular}
\end{center}
The quadratic base contributes \((28/125)(\pi/2)^2\tau^2\) on top of these, so the true factor \((\pi/2)^2\) is retained.  The value at \(\tau=0\) is \(39/200+\lambda_0=76579/500000=0.153158\), which is the bound \eqref{eq:wmin}; the value at \(\tau=1\) is \(2.248723\ldots\).

\section{Auxiliary Proofs}\label{app:aux}

\subsection{The pencil identities}\label{subsec:app-pencil}

We verify the last identity of \Cref{lem:pencil-deriv} in the form used in \eqref{eq:derivs}.  With the query line on \(y=0\), \(L_-=x(c)-x(w_-)\) and \(x(w_-)=x(o)-r\cos\gamma\), so
\[
L_-=x(c)-x(o)+r\cos\gamma .
\]
The gate and the crossing are fixed, and \(x(o)=x(M)+X\cos\varphi\), so \(\partial_X x(o)=\cos\varphi\).  Differentiating and using \(r_X=\cos\theta\) and \(r\gamma_X=A\) gives
\[
(L_-)_X=-\cos\varphi+\cos\theta\cos\gamma-\sin\gamma\,A .
\]
Substituting \(A=(\sin\varphi-\sin\gamma\cos\theta)/\cos\gamma\) and clearing denominators,
\[
(L_-)_X=\frac{-\cos\varphi\cos\gamma+\cos\theta\cos^2\gamma-\sin\gamma\sin\varphi+\sin^2\gamma\cos\theta}{\cos\gamma}
=\frac{\cos\theta-\cos(\gamma-\varphi)}{\cos\gamma},
\]
which is the form used in the proof of \Cref{prop:step}.  The cancellation of the terms carrying \(U\) in that proof is then
\[
U\bigl[-\cos\theta\cos\gamma+\sin\gamma\,A+(L_-)_X\bigr]
=U\Bigl[-\cos\theta\cos\gamma+\sin\gamma A+\frac{\cos\theta-\cos(\gamma-\varphi)}{\cos\gamma}\Bigr]=-U\cos\varphi ,
\]
by expanding \(\cos(\gamma-\varphi)=\cos\gamma\cos\varphi+\sin\gamma\sin\varphi\) and substituting \(A\) once more.

\subsection{The interval enclosures}\label{subsec:app-arith}

These are the enclosures used by the covers of \Cref{subsec:certify}.  Every elementary operation returns an interval containing the true value: sums and negations are exact, a product takes the four endpoint products and rounds outward, a reciprocal reverses endpoints and fails if the divisor contains zero, and a square root uses integer square roots rounded outward.

The constant \(\pi\) is enclosed by alternating rational bounds in the Machin identity \(\pi=16\arctan\frac15-4\arctan\frac1{239}\), obtained by integrating the finite geometric expansion of \(1/(1+t^2)\).  Sine and cosine are enclosed by Taylor polynomials with Lagrange remainders, together with every exact extremum contained in the argument interval.

For \(\operatorname{sinc}\), the numerator \(z\cos z-\sin z\) of its derivative starts at zero and has derivative \(-z\sin z\le0\), so \(\operatorname{sinc}\) is even and nonincreasing in \(|z|\) on \([0,\pi]\); its derivatives are bounded through \(\operatorname{sinc}z=\int_0^1\cos(sz)\dd s\), whose \(m\)-th derivative is at most \(1/(m+1)\) in absolute value.

\subsection{Cover coordinates and cancellation-free forms}\label{subsec:app-cover}

This subsection supplies the two rewritings announced in \Cref{subsec:certify}.  Write \(t=\pi/2-\gamma\).

\emph{Removing the denominator.}  Parameterize the closed diamond of \eqref{eq:diamond}, degenerate faces included, by the two normalized arc coordinates of \eqref{eq:arclengths} rescaled to the unit interval: with \(a,b\in[0,1]\),
\begin{equation}\label{eq:cube}
\theta=(\pi-t)a+t\,b,
\qquad
\varphi=\frac\pi2-t-(\pi-t)a+t\,b ,
\qquad
\frac t\pi\in\Bigl[0,\frac12\Bigr],
\end{equation}
so that \(l^+/r=2(\pi-t)a\) and \(l^-/r=2tb\).  Then \(\theta+\varphi-\pi/2=t(2b-1)\) and \(\cos\gamma=\sin t\).  Expanding \(\sin\varphi=\cos\bigl(\theta-t(2b-1)\bigr)\), using
\(\cos\bigl(t(2b-1)\bigr)-\cos t=2\sin(tb)\sin\bigl(t(1-b)\bigr)\)
and dividing by \(\sin t\) gives
\begin{equation}\label{eq:A-sinc}
A=\frac{(2b-1)\sin\theta\operatorname{sinc}\bigl(t(2b-1)\bigr)+2b(1-b)\,t\cos\theta\operatorname{sinc}(tb)\operatorname{sinc}\bigl(t(1-b)\bigr)}{\operatorname{sinc}t}.
\end{equation}
Concavity of \(\sin\) on \([0,\pi/2]\) gives \(\operatorname{sinc}t\ge2/\pi>0\), so \eqref{eq:A-sinc} is smooth on the whole closed cube, and equals \((2b-1)\sin\theta\) at \(t=0\).  Multiplying the residual written this way by \(\cos\gamma\) recovers the required inequality everywhere, that face included.  Certifying the extension nonpositive is sufficient, and may be stronger than necessary on branch faces that are inaccessible.

\emph{Cancellation-free forms.}  The differences of nearly equal terms are evaluated through
\[
h-\gamma=\frac{\varepsilon^2\cos^2\gamma}{h+\gamma},
\quad
\frac\pi2-h=t-(h-\gamma),
\quad
k-\gamma=t\bigl(1-\operatorname{sinc}t\bigr),
\quad
1-\sin\gamma=2\sin^2\frac t2 ,
\]
all with strictly positive denominators by \eqref{eq:elementary}.  Put \(\xi=(\pi-t)a\) and \(\zeta=t(1-b)\).  These nonnegative angular quantities satisfy \(\xi+\zeta=\pi/2-\varphi\) and \(\xi-\zeta=\theta-t\).  The identity \(\cos(\xi-\zeta)-\cos(\xi+\zeta)=2\sin\xi\sin\zeta\) gives
\[
\sin\theta-A=\frac{2\sin\xi\,(1-b)\operatorname{sinc}\zeta}{\operatorname{sinc}t}.
\]
On the positive outer branch \(H_0=\sigma\), substituting \eqref{eq:H-shape} into \eqref{eq:residual} and using these identities yields
\[
E=\xi C_\xi+\zeta C_\zeta,
\]
where
\[
\begin{aligned}
C_\xi={}&(1+w\cos\gamma\,h)
 \frac{2(1-b)\operatorname{sinc}\xi\,\operatorname{sinc}\zeta}{\operatorname{sinc}t}
 -(1+w\cos\gamma(h-\gamma))\cos\theta\\
&-A\bigl(w\cos\gamma(h-\gamma)\bigr)'
 -U\operatorname{sinc}(\xi+\zeta),\\
C_\zeta={}&(1+w\cos\gamma(h+\gamma))\cos\theta
 +A\bigl(w\cos\gamma(h+\gamma)\bigr)'
 -U\operatorname{sinc}(\xi+\zeta).
\end{aligned}
\]
Here \(A\) is evaluated by \eqref{eq:A-sinc}, so the coefficients remain defined on the closed cube.  Since \(\xi,\zeta,\cos\gamma\ge0\), a box on which both coefficients are nonpositive satisfies \(\widetilde E\le0\), including a box touching the gate face \(\xi=\zeta=0\).  The negative outer branch follows by the second reflection of \Cref{lem:reflections}.

Write \(E_{\mathrm c},E_+,E_-\) for the residual expressions with \(H_0=m\psi,\sigma,-\sigma\), respectively.  Their differences are
\[
E_{\mathrm c}-E_\pm
=m\bigl[\cos\theta(\pm k-\varphi)\pm A k'\bigr]
 -A m'(\psi\mp k).
\]
At the switches \(\psi=\pm k\) the last term vanishes; in the central interior it is retained.  Multiplication by \(\cos\gamma\) gives the corresponding cleared residual identities.  For the profile of \Cref{thm:main}, the branch inequalities, both incident gradients at each switch, and the boundary cases are verified by rigorous arithmetic certification.

\subsection{The gate faces of the angular domain}\label{subsec:app-faces}

The following identities are the ones used by the face rules of \Cref{subsec:certify}, and they explain how to certify boxes touching those faces.

On \(\varphi=\pi/2\) the closed angular domain forces \(\theta=\pi/2-\gamma\), and the stable interval becomes \([k,\pi/2+\cos\gamma]\).  There \(H_0=\sigma\) and \(J=w\cos\gamma\,\gamma(\pi/2-h)\), which is exactly the term subtracted from \(\gamma\) in \(\sigma\), so by \eqref{eq:H-shape}
\[
H_0+J=\gamma .
\]
On \(\varphi=-\pi/2\) one gets \(\theta=\pi/2+\gamma\), stable interval \([-\pi/2-\cos\gamma,-k]\), and \(H_0+J=-\gamma\).  Differentiating the first identity along the curve \(\psi=\pi/2+s\cos\gamma\), \(\theta=\pi/2-\gamma\), \(-1\le s\le1\), gives
\[
H_\gamma+H_\psi\,\psi'-H_\theta=1
\]
for both incident gradients at \(s=-1\), since their difference annihilates the switch tangent as in the proof of \Cref{thm:sufficient-profile}.  For \(|\gamma|<\pi/2\), substituting \(\cos\theta=\sin\gamma\), \(\sin\theta=\cos\gamma\), \(A=\cos\gamma\), \(K=\gamma\sin\gamma+\cos\gamma\) and \(U\cos\varphi=0\) into \eqref{eq:residual} leaves \(H_\psi\bigl[\cos\gamma\,\psi'+\sin\gamma(\psi-\pi/2)\bigr]=0\); the analogous computation on the opposite face leaves the negative of the corresponding expression.  So the residual vanishes identically on both gate faces, for every accessible incident gradient.

At \(\gamma=\pm\pi/2\) the domain forces \(\theta=\pi/2\mp\varphi\), and both \(\cos\gamma\) and \(\sin\varphi-\sin\gamma\cos\theta\) vanish, so \(\widetilde E=0\) there for every incident gradient.  For \(|\gamma|<\pi/2\), at \(\theta=0\) the domain forces \(\varphi=\gamma\) and \(\psi=\gamma\), and \(A=0\), so \(E\) itself equals the initialization residual of \eqref{eq:I-cond} and \(\widetilde E\) equals \(\cos\gamma\) times it; at \(\theta=\pi\) the domain forces \(\varphi=-\gamma\) and \(\psi=-\gamma\), with the termination residual in the same two roles.  The scalar conditions \eqref{eq:scalar} and \eqref{eq:scalar-init} therefore dispose of those two faces entirely.  These are identities on the faces; a box of positive volume touching one of them is closed by the mean-value rules of \Cref{subsec:certify}.

\end{document}